\documentclass[sn-nature,pdflatex]{sn-jnl}
\usepackage{amsmath,amssymb,amsfonts,mathrsfs,amsthm}

\usepackage[english]{babel}
\usepackage{xspace}
\usepackage{cases}
\usepackage{graphicx}
\usepackage[]{caption}
\usepackage[]{subcaption}
\usepackage{upgreek}
\usepackage{tikz}
\usetikzlibrary{shapes.geometric, arrows}
\usepackage{makecell}
\usepackage{booktabs}
\usepackage{array}
\usepackage{blindtext}
\usepackage{multirow}
\usepackage{siunitx}
\usepackage{lipsum}
\usepackage{glossaries}
\usepackage{setspace}
\usepackage{arydshln}

\usepackage[
]{hyperref}
\newtheorem{myTheorem}{Theorem}
\newtheorem{myLemma}{Lemma}
\newtheorem{myAss}{Assumption}

\newtheorem{myRemark}{Remark}
\newtheorem{myCorollary}{Corollary}
\newtheorem{myProposition}{Proposition}

\usepackage{threeparttable} % add the footnote in the table

\usepackage{algorithm,algorithmic}
\usepackage{textcomp}

\setkeys{glslink}{hyper=false}

\newacronym{opf}{OPF}{optimal power flow}
\newacronym{qp}{\textsc{qp}}{quadratic program}
\newacronym{nlp}{\textsc{nlp}}{nonlinear programming}
\newacronym{milp}{\textsc{milp}}{Mixed-Integer Linear Programming}
\newacronym{rapidpf}{rapid\textsc{pf}}{rapid prototyping for distributed Power Flow}
\newacronym{admm}{\textsc{admm}}{Alternating Direction Method of Multipliers}
\newacronym{aladin}{\textsc{aladin}}{Augmented Lagrangian based Alternating Direction Inexact Newton method}
\newacronym{ocd}{\textsc{ocd}}{Optimality Condition Decomposition}
\newacronym{app}{\textsc{app}}{Auxiliary Problem Principle}
\newacronym{sqp}{\textsc{sqp}}{Sequential Quadratic Programming}
\definecolor{todo}{rgb}{.2,.8,.1}
\definecolor{change}{rgb}{0.,.8,.4}
\definecolor{revised}{rgb}{.2,.8,.1}

\definecolor{deepgreen}{rgb}{0.0, 0.5, 0.0}
\usepackage{rotating}
\renewcommand*{\proofname}{Proof}

\newacronym{dn}{DN}{distribution network}
\newacronym{af}{AF}{aggregated flexibility}
\newacronym{gpu}{GPU}{graphics processing unit}
\newacronym{itd}{ITD}{integrated transmission-distribution}
\newacronym{qcqp}{QCQP}{quadratically constrained quadratic program}

\newacronym{mpopf}{MPOPF}{multiperiod AC optimal power flow}
\newacronym{pf}{PF}{power flow}
\newacronym{pcc}{PCC}{point of common coupling}
\newacronym{ders}{DERs}{distributed energy resources}
\newacronym{der}{DER}{distributed energy resource}
\newacronym{ess}{ESS}{energy storage systems}
\newacronym{pv}{PV}{photovoltaics}
\newacronym{ev}{EV}{electric vehicles}
\newacronym{tsos}{TSOs}{transmission system operators}
\newacronym{tso}{TSO}{transmission system operator}
\newacronym{dsos}{DSOs}{distribution system operators}
\newacronym{dso}{DSO}{distribution system operator}
\newacronym{soc}{SoC}{State of Charge}
\newacronym{kitcn}{KIT-CN}{Karlsruhe Institute of Technology - Campus Nord}
\newcommand{\norm}[1]{\left\lVert#1\right\rVert}

\newcommand{\abs}[1]{\left|#1\right|}

\newcommand{\parens}[1]{\left(#1\right)}
\newcommand{\squarebrackets}[1]{\left[#1\right]}
\newcommand{\braces}[1]{\left\{#1\right\}}

\newcommand{\matpower}{\textsc{matpower}\xspace}

\newcommand{\ipopt}{\textsc{ipopt}\xspace}

\newcommand{\casadi}{\textsc{c}as\textsc{ad}i\xspace}

\title{Distributed coordination for transmission-distribution systems with nonlinear flexibility aggregation}
\author[1,2]{Xinliang Dai}%\email{xinliang.dai@princeton.edu}
\author[2]{Yanlin Jiang} 
\author[2]{Frederik Zahn} %\ead{frederik.zahn@kit.edu},  % (ead) as shown
\author[3]{Yi Guo} %\ead{yi.guo@ieee.org},  % (ead) as shown
\author[2]{Veit Hagenmeyer}
\affil[1]{\orgdiv{Andlinger Center for Energy and the Environment}, \orgname{Princeton University}, 
\country{USA}}
\affil[2]{\orgdiv{Institute for Automation and Applied Informatics}, \orgname{Karlsruhe Institute of Technology}, %\orgaddress{\street{Street}, 
\country{Germany}}
\affil[3]{\orgname{School of Automation, Beijing Institute of Technology}, %\orgaddress{\street{Street}, 
\country{China}}
\abstract{\normalsize 
High shares of distributed energy resources (DERs) transform distribution systems into active participants in integrated transmission and distribution (ITD) operations. 
{
Linear models enable scalable distribution-level flexibility aggregation but can misclassify AC feasible operating points, whereas direct nonlinear aggregation becomes costly, especially in multiperiod ITD coordination.}
This paper reformulates transmission–distribution coordination within a hierarchical optimization framework and introduces a non-iterative predictor–corrector aggregation method.
By leveraging path-following techniques from real-time optimal control, the approach achieves tractable computation with guaranteed error bounds. 
{Across 24 radial distribution-network cases and seven meshed variants, including the real KIT Campus North grid,  the proposed method yields substantially lower sampled false- and lost-flexibility rates than linear surrogates and a convex relaxation.
On two 24-period ITD testcases, the formulation reduces end-to-end wall-clock time by factors of $6$ relative to the corresponding centralized formulation, primarily through dimensionality reduction.}
An open-source implementation supports transparency and further research.

}
\begin{document}

%\setlength\abovedisplayskip{4pt}
%\setlength\belowdisplayskip{4pt}
%-----------------------
% Begin document
%-----------------------
\maketitle
\section{Introduction}

The increasing penetration of \acrfull{ders}, including rooftop \acrfull{pv}, \acrfull{ess}, and \acrfull{ev} charging, {is transforming distribution networks from passive loads into active participants in system operation, socalled active distribution network~\cite{itd2020review,karthikeyan2019activeDistr,dutta2020topology}. This transition introduces variable and distributed power injections driven by weather conditions and consumer behavior, and motivates closer \acrfull{tso}--\acrfull{dso} coordination, as also reflected in recent European regulations~\cite{kerscher2022key}. A central challenge is to coordinate power exchange at the TSO--DSO interface while accounting for controlling the \acrshort{ders} in the distribution system.}

A straightforward approach to compute set points for the \acrshort{ders} is a fully centralized managed model, in which the \acrshort{tso} collects detailed information from all \acrshort{dso}s, and directly optimizes the dispatch of all controllable \acrshort{der}s across the entire \acrfull{itd} system~\cite{itd2020review}.
In this way, it can effectively manage the entire \acrshort{itd} system as a single optimization problem. However, this approach is not feasible in practice. The joint optimization of multiple nonlinear distribution systems with potentially thousands of individual DERs quickly exceeds available computational capacities. Additionally, centralized management requires detailed consumer data, which creates privacy concerns and places substantial demands on communication infrastructure~\cite{molzahn2017survey,patari2021distributed,dai2025largescale}.

An alternative framework is the decomposition of the problem into two layers.
This approach relies on flexibility aggregation to represent the set of feasible power exchanges at the coupling points between \acrshort{tso} and \acrshort{dsos}, which enables distributed coordination between transmission and distribution systems. 
In this framework, each \acrshort{dso} provides its flexibility set to the \acrshort{tso}. The \acrshort{dso} ensures that any operating point within this set can be securely accommodated in its network, while \acrshort{tso} schedules power exchanges without explicitly modeling distribution-level details. 
{
The aggregation-based interface can be computed locally by each \acrshort{dso} to communicate only compact flexibility sets to the \acrshort{tso} and, thereby, replaces detailed distribution-level models in the upper-level problem. 
It therefore enables parallel preprocessing, reduces privacy and communication burdens, and lowers the dimension and complexity of the transmission-level \acrshort{nlp}.
}

%The approach offers three main advantages: 
%\begin{itemize}
%    \item This aggregating procedure can be performed locally and independently by each \acrshort{dso}, which allows for efficient parallel preprocessing.

%    \item Since the resulting flexibility set condenses detailed distribution-level models into a compact representation that is communicated only once, it greatly reduces privacy concerns and alleviates communication burdens.

%    \item Replacing detailed distribution models with aggregated sets significantly lowers the dimensionality and complexity of the transmission-level nonlinear problems. 
%\end{itemize}

Despite these advantages, this non-iterative flexibility aggregation faces significant challenges. The nonlinear nature of distribution networks complicates the accurate representation of feasible regions. The presence of storage-based DERs further exacerbates this difficulty by introducing temporal coupling across multiple periods, making the problem both mathematically intricate and computationally demanding.

\subsection{Related Work}
\label{sec::relatedwork}
{
Existing flexibility aggregation methods can be classified along two main axes: the power flow model used to represent distribution-network physics and the method used to project or approximate the feasible region in the TSO-DSO coupling space. Table~\ref{tab:flexibility_models} summarizes representative approaches according to these two axes.
}

{
On the modeling side, the most scalable approaches rely on linearized power flow models. These include DC power flow~\cite{wei2015realtime,tan2019enforcinga,tan2024noniterativea}, enhanced DC power flow~\cite{tan2020estimatinga}, LinDistFlow~\cite{wen2023improvedDER,dai2024realtime}, and linearized AC models~\cite{contreras2019timebased,contreras2021congestion}. DC power flow is lossless and was originally developed for transmission systems, where branch reactances dominate resistances, differences between adjacent bus voltage angles are small, and voltage magnitudes are close to 1 p.u.~\cite{rau2003issues,zhu2009optimization,frank2016introduction}. LinDistFlow is a lossless simplification of the DistFlow model and is mainly suited to radial distribution networks~\cite{baran1989optimal1,baran1989lindisflow,farivar2013branch}. Enhanced DC models improve the classical DC approximation by adding voltage magnitude or loss corrections~\cite{yang2017linearized}, while linearized AC models use the Jacobian at an operating point to approximate the relation between injections and state variables. These linearized models enable tractable aggregation and have been extended to multi-period and three-phase settings~\cite{kalantar-neyestanaki2020characterizing,zhang2023coordination,wang2021aggregate,wang2025non,chen2020aggregate,chen2021leveraginga,wen2022tdder}, but their accuracy depends on the validity of the underlying assumptions. }

{
On the aggregation side, several set representation methods have been proposed for the linearized models. Outer approximations provide supersets of the feasible region and may therefore require additional feasibility checks~\cite{wei2015realtime,zhang2023coordination}. Inner approximations provide guaranteed feasible but potentially conservative subsets~\cite{wang2021aggregate,wang2025non,wen2023improvedDER}. Boundary detection methods sample or approximate the flexibility boundary~\cite{tan2019enforcinga,tan2024noniterativea,kalantar-neyestanaki2020characterizing,tan2020estimatinga,chen2020aggregate,chen2021leveraginga,contreras2019timebased,contreras2021congestion}, but their cost can increase rapidly with the dimension of the coupling space. Projection-based methods, such as Fourier-Motzkin elimination, eliminate internal variables to obtain a representation in the coupling-variable space, but can become expensive for large multi-period models~\cite{wen2022tdder}. More recently, implicit-function-based methods have introduced temporal decomposition strategies for LinDistFlow-based aggregation~\cite{dai2024realtime}.
}

{
However, using linearized models can compromise accuracy. The resulting aggregated sets may overestimate the feasible region, leading to infeasible dispatch commands, or underestimate it, leaving valuable flexibility unutilized~\cite{lopez2021quickflex}. Moreover, such modeling mismatches can accumulate over time in multi-period settings, as demonstrated in~\cite{jiang2025enhanced}. To improve accuracy, several works move beyond purely linear models. Direct nonlinear boundary detection has been proposed in~\cite{churkin2023tracing}, but it becomes computationally expensive for large networks. DistFlow-based methods can capture nonlinear effects and provide approximations of flexibility regions~\cite{jiang2023feasible,jiang2025enhanced}, but their reliance on branch flow formulations restricts them mainly to radial networks~\cite{farivar2013branch}. Approximate-dynamic-programming methods approximate value functions of local distribution-level subproblems~\cite{bandeira2024adp,engelmann2025approximate}, but face scalability limitations. Convex inner approximations of the AC power flow model have been developed~\cite{lee2019convex}. Originally intended for uncertainty management, these methods often produce overly conservative flexibility sets, which constrain their practical value in aggregation.
}

\setlength{\tabcolsep}{4pt}% Default value: 6pt
\begin{table*}[ht!]
\centering\vspace{-5pt}
\caption{Summary of Aggregation Methods and Power Flow Models}
\label{tab:flexibility_models}
\fontsize{7}{7}\selectfont
  \setlength\extrarowheight{1pt}
%\resizebox{\textwidth}{!}{%
\begin{tabular}{@{}clccccc@{}}
\toprule

\multirow{2}{*}{
\textbf{Ref}
}
& \multirow{2}{*}{\textbf{System Model}} & \textbf{Network}&\textbf{Time} & \textbf{Temporal} & \multirow{2}{*}{\textbf{Aggregation Method}} & \textbf{Nonlinear AC } \\
& & \textbf{Topology}&\textbf{Period}& \textbf{Decomp.} & & \textbf{Physics}\\
\midrule
\cite{wei2015realtime} & DC power flow & mesh  & single&-  & Outer approximation & -  \\
\cite{tan2019enforcinga,tan2024noniterativea}  & DC power flow & mesh   & single&-  & Boundary detection & -  \\
\cite{kalantar-neyestanaki2020characterizing} & Enhanced DC power flow (3-phase) & mesh & single&-  & Boundary detection                  & -  \\
\cite{tan2020estimatinga} & Enhanced DC power flow & mesh & single&-  & Boundary detection  & Yes \\
\midrule
\cite{zhang2023coordination}      & Enhanced DC power flow   & mesh & multiple& No & Outer approximation & -  \\
\cite{wang2021aggregate,wang2025non}          &  Enhanced DC power flow (3-phase)  & mesh& multiple& No & Inner approximation                 & -  \\
\cite{chen2020aggregate,chen2021leveraginga}          & Enhanced DC power flow (3-phase)  & mesh       & multiple& No & Boundary detection                  & -  \\
%\cite{wen2022aggregate}           & No network / batteries at one bus        & multiple & Fourier–Motzkin elimination         & Unnecessary \\
\cite{wen2022tdder}               & Enhanced DC power flow (3-phase)  & mesh      & multiple &No& Fourier–Motzkin elimination         & -  \\
\cite{contreras2019timebased,contreras2021congestion}    & Linearized AC power flow   & mesh                  & multiple& No & Boundary detection                  & -  \\
\cite{wen2023improvedDER}         & LinDistFlow        & radial                       & multiple &No& Inner approximation & -  \\
\cite{dai2024realtime}            & LinDistFlow            & radial                   & multiple &Yes& Implicit function reformulation & -  \\
\midrule
%\cite{lopez2021quickflex} & Convexified DistFlow  & radial   & single&-  & Boundary detection  & - \\
\cite{churkin2023tracing} & DistFlow        & radial    & single&-  & Boundary detection                  & Yes  \\
\cite{jiang2023feasible}  & DistFlow      & radial   & single&-  & Analytical expressions         & Yes \\
\cite{jiang2025enhanced}&  LinDistFlow & radial & single & - & Quadratic loss compensation & Yes\\
\cite{bandeira2024adp}& LinDistFlow with linearized losses & radial & single & - & Approximate dynamic programming & Yes\\
\cite{engelmann2025approximate}& AC power flow & mesh & single & - & Approximate dynamic programming & Yes\\
\cite{lee2019convex}                &  AC power flow   & mesh    & single&-  & Convex restriction & Yes \\
This paper & AC power flow & mesh & multiple & Yes & Predictor-corrector & Yes \\
\bottomrule
\end{tabular}\\
*All the models using small-angle assumptions to linearly approximate the sin and cos functions are classified as DC power flow.\\
*Enhanced DC power flow includes voltage difference and/or line losses compared with the standard DC model.\vspace{-5pt}
%}

\end{table*} 
%yl{Convex relaxation provides an alternative. For the DistFlow model, the nonconvex branch-flow constraint in the DistFlow model can be replaced by a second-order cone constraint. The resulting formulation is convex when the feasible sets of variables are also convex~\cite{gan2015exact,huang2017sufficient}. The relaxed model remains nonlinear, which requires the boundary-detection aggregation method.}

\subsection{
Challenges and 
Contributions}
{ Existing methods face a trade-off between scalability, AC accuracy, topology generality, and temporal decomposition. Linearized methods are scalable but may lose accuracy; nonlinear methods are more accurate but are often restricted to radial networks, single-period settings, or computationally expensive formulations. Motivated by this gap, we formulate the nonconvex ITD coordination problem within a graph-based hierarchical optimization framework that reflects the layered TSO-DSO structure. }

{Within this framework, we adapt predictor–corrector techniques---originally developed for real-time path-following control problems~\cite{diehl2002real,zavala2010real}---to approximate distribution-level flexibility sets. 
The proposed approach is conceptually related to recent work in generic \acrfull{nlp}~\cite{pacaud2025sensitivity,naik2025variable}, where implicit function reformulations are used to iteratively condense derivatives, thereby improving the reliability of interior-point methods~\cite{parker2022implicit,bugosen2023process} and accelerating computation~\cite{pacaud2022feasible,pacaud2024accelerating}.
In contrast to these iterative solver-level reductions, the present paper uses a non-iterative predictor-corrector construction to precompute local surrogates of distribution-level implicit feasible sets. The resulting aggregation directly accounts for nonlinear AC power flow constraints, is applicable to both radial and meshed distribution networks,  %under the stated regularity assumptions 
and enables scalable pre-computation through spatial and temporal decomposition.
}

The main contributions of this paper are in the following:
\begin{enumerate}
    \item We reformulate the ITD dispatch problem into a hierarchical optimization framework and introduce a predictor–corrector aggregation scheme for distribution networks’ nonconvex implicit feasible sets. We establish error bounds and validate accuracy on tutorial examples and extensive \matpower benchmarks, covering both radial and meshed topologies.
    \item We extend the approach to multi-period operation by constructing a coupled power–energy envelope that preserves temporal consistency while enabling parallel preprocessing across spatial and temporal scales. This reduces dimensionality and achieves $5-7$ times speedups compared to centralized state-of-the-art NLP solvers such as \ipopt~\cite{wachter2006implementation}.
%    \item Utilizing the proposed aggregation method, we quantify how line impedance, DER penetration, and network topology influence the shape of aggregated flexibility regions. These results both validate the robustness of the method and provide practical guidance for planners and operators seeking to unlock distribution system flexibility under high DER integration.    
    %\item We provide an open-source toolbox with benchmark cases to promote accessibility, foster comparative evaluation, and support future research on distribution system flexibility aggregation.
\end{enumerate}

After notations\footnote{\textit{Notations:} 
Let $\mathbb{R}^{n}$ denote the $n$-dimensional real Euclidean space, and 
$\mathbb{C}^{n}$ the $n$-dimensional complex Euclidean space. For
two column vectors, $x\in\mathbb{R}^{m},\,y\in\mathbb{R}^{n}$, their
concatenation is denoted by $(x,y):=[x^{\top},y^{\top}]^{\top}$. For a set $\mathcal{S}$
with cardinality $\abs{\mathcal{S}}$, the concatenation of subvectors
$x_{i}$ across all $i\in\mathcal{S}$ is denoted by
$
    \braces{x_i}_{i\in\mathcal{S}}:= (x_{1},\cdots,x_{\abs{\mathcal{S}}}).
$
{
Distance from a point $x$ to a set
$\mathcal S\subseteq\mathbb R^{n}$ is denoted by
$
\mathrm{dist}(x,\mathcal S):=\inf_{s\in\mathcal S}\norm{x-s}.
$
}
Following~\cite{johnson1990matrix}, the Hadamard (element-wise) product of vectors $x,y\in\mathbb{R}^{m}$ is
denoted by $x\circ y$, with elements given by
$
    [x\circ y]_{i}:= [x]_{i}\cdot[y]_{i}.
$
Element-wise squaring of a vector $x\in\mathbb{R}^{m}$ is denoted by
$
    x^{{\circ}2}:=x\circ x.
$
}, {Section~\ref{sec::formulation} presents the problem formulation for ITD coordination, including the graph representation of the network, the AC power flow models, the TSO--DSO management models, and the corresponding hierarchical optimization framework. Section~\ref{sec::predictor-corrector} introduces the proposed predictor-corrector method for nonlinear flexibility aggregation and presents the corresponding single-period benchmarks. Section~\ref{sec::multiperiod} extends the method to multi-period scenarios. Finally, Section~\ref{sec::conclusion} concludes the paper.}

\section{Problem Formulation for ITD Coordination}\label{sec::formulation}
{This section introduces the modeling ingredients underlying the coordination of \acrfull{itd} systems. We first describe the graph representation of the network, then present the AC power flow models of the transmission and distribution subsystems, and discuss the corresponding TSO-DSO management models. Based on these ingredients, we formulate the ITD coordination problem as a graph-based hierarchical optimization problem and introduce the associated implicit feasible sets that motivate the aggregation method developed in the next section.}

%In this section, we reformulated the coordination problems of \acrfull{itd} systems as a graph-based hierarchical optimizaiton problems to facilitate the conceptualization and implementation of decomposition and approximation schemes.  To further reduce local variables and nonlinear constraints of the distribution model, we introduce a predictor-corrector aggregation methods to approximate the implicit feasible set in primal space, to reduce local variables and nonlinear constraints of the distribution model. We establish error bounds, 
%In this section, we will start from the problem formulation of standard AC \acrfull{opf} and multiperiod problems for \acrfull{itd} systems and introduce equivalent two-layer optimization considering the aggregated flexibility from the lower-layer distribution systems. Note that this section only covers the spatial decomposition according to the system operator, while the extension to multiperiod AC \acrshort{opf} problems and relative temporal decomposition will be introduced in Section~\ref{sec::multiperiod}.

%Given a set $\mathcal{S}$, its cardinality is denoted $\abs{\mathcal{S}}$.
%The Euclidean norm is denoted by $\norm{x}$, and the $\infty$ norm is denoted by $\norm{x}_\infty$.

\begin{figure*}[htbp!]
    \centering
    \includegraphics[width=0.90\linewidth]{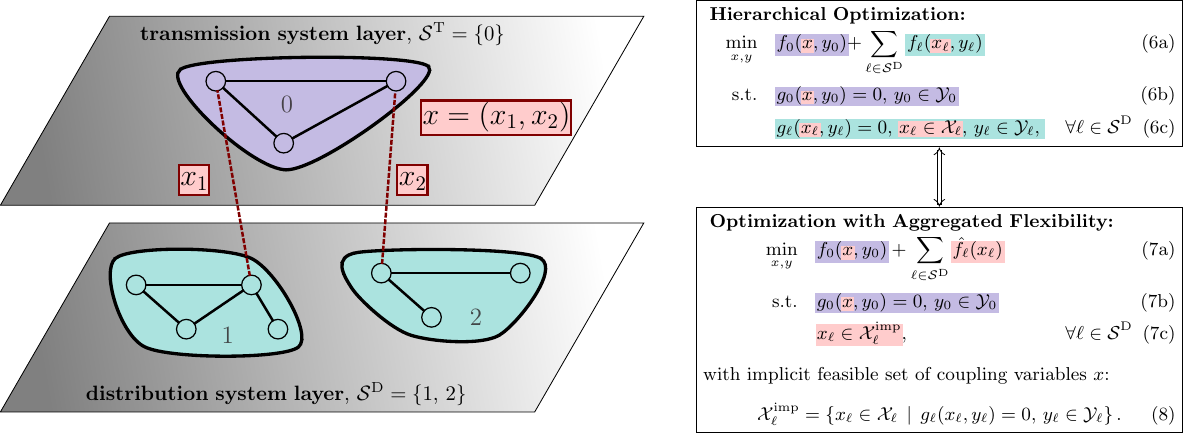}
    \caption{Hierarchical optimization framework for coordinating \acrfull{itd} systems. The transmission system (purple) and distribution systems (green) are modeled as layered subsystems, with coupling variables (red) linking the two layers. The right panel illustrates the corresponding mathematical reformulation: the original hierarchical optimization problem~\eqref{eq::opt::original} is reduced to an aggregated optimization problem~\eqref{eq::opt::flex}, where distribution-level feasibility is captured by the implicit feasible set~\eqref{eq::implicit::set}.}
    \label{fig::overview}
\end{figure*}
\subsection{Graph Representation of Power Networks}
\label{sec::formulation::graph}

A power network can be represented as an undirected graph \(\mathcal G=(\mathcal N,\mathcal L)\), where \(\mathcal N\) denotes the set of buses and \(\mathcal L\) denotes the set of branches.

{To represent the subsystem structure of the \acrshort{itd} network, let
\(\mathcal S^{\mathrm T}=\{0\}\) and \(\mathcal S^{\mathrm D}\) denote the
index sets of the transmission and distribution subsystems, respectively, and
define
\[
    \mathcal S
    :=
    \mathcal S^{\mathrm T}\cup\mathcal S^{\mathrm D}.
\]
The overall ITD network is represented by the graph
\[
    \mathcal G=(\mathcal N,\mathcal L),
\]
together with the subsystem node sets
\(\{\mathcal N_\ell\}_{\ell\in\mathcal S}\), which satisfy
\[
    \mathcal N=\bigcup_{\ell\in\mathcal S}\mathcal N_\ell,
    \qquad
    \mathcal N_\ell\cap\mathcal N_m=\varnothing
    \quad\text{for all }\ell\neq m.
\]
For each subsystem \(\ell\in\mathcal S\), define
\[
    \mathcal L_\ell
    :=
    \left\{
        (i,j)\in\mathcal L
        \,\middle|\,
        i,j\in\mathcal N_\ell
    \right\},
    \qquad
    \mathcal G_\ell=(\mathcal N_\ell,\mathcal L_\ell),
\]
where \(\mathcal N_\ell\) and \(\mathcal L_\ell\) contain the buses and
branches internal to subsystem \(\ell\). The coupling-edge set is
\[
    \mathcal L_s
    :=
    \left\{
        (i,j)\in\mathcal L
        \,\middle|\,
        \exists\,\ell,m\in\mathcal S,\ 
        \ell\neq m,\ 
        i\in\mathcal N_\ell,\ 
        j\in\mathcal N_m
    \right\}.
\]
Hence,
\[
    \mathcal L
    =
    \left(
        \bigcup_{\ell\in\mathcal S}\mathcal L_\ell
    \right)
    \cup
    \mathcal L_s.
\]
The edges in \(\mathcal L_s\) represent coupling between subsystems. This partition is used to distinguish subsystem-internal
variables and constraints from the PCC/interface quantities through which the
subsystems are coordinated.}

{
\begin{myRemark}[Extension to multiple TSOs]
The assumption \(|\mathcal S^{\mathrm T}|=1\) is adopted for notational simplicity and is not essential to the proposed aggregation method. In a more general setting, multiple interconnected transmission systems can be represented by \(|\mathcal S^{\mathrm T}|\ge 1\), with additional tie-lines between transmission subsystems. The predictor-corrector aggregation remains local to each distribution subsystem and can be applied independently. The main additional challenge is the upper-layer coordination among TSOs after aggregation, which results in a distributed nonconvex \acrfull{nlp}. Such transmission-level coordination can be handled by distributed optimization methods, e.g., a distributed optimization framework for large-scale AC OPF based on network decomposition and condensed coordination~\cite{dai2025largescale}.
\end{myRemark}
}
{Based on this partition, the AC power flow model of each subsystem
\(\ell\in\mathcal S\) contains the variables and constraints associated with
its internal buses and branches
\((\mathcal N_\ell,\mathcal L_\ell)\), while the quantities associated with
the interconnections in \(\mathcal L_s\) define the interface variables used
in the coordination model below.
}

%$\mathcal{N}_{\ell}$ 

%and
%$\mathcal{L}_{\ell}$ denote the bus and branch subset in the subgraph $\mathcal{G}_{\ell}$. Additionally,
%$\mathcal{L}_{s}$ denotes the connection between the transmission and
%the distribution subsystems, typically located at distribution substations corresponding
%to the local slack bus $i^{\text{slack}}_{\ell}\in\mathcal{N}_{\ell}$.
%The overall ITD system is thus composed of a transmission subgraph $\mathcal{G}_0$ and the collection of distribution subgraphs $\{\mathcal{G}_\ell\}_{\ell\in\mathcal{S}^\text{D}}$, interconnected through coupling lines $\mathcal{L}_s$.

%\begin{myRemark}
%    [\textbf{Slack Bus}] We assume that in each subsystem   $\ell\in\mathcal{S}$, the slack bus is the first bus in the bus subset $\mathcal{N}_{\ell}$, and its voltage angle is used as the reference for that subsystem. For a distribution system $\ell\in\mathcal{S}^{\text{D}}$, we also assume the slack bus voltage magnitude is fixed by the substation and does not depend on the transmission system voltage.
%\end{myRemark}

\subsection{AC Power Flow Models}

Let $Y_{\ell}\in\mathbb{C}^{\abs{\mathcal{N}_\ell}\times \abs{\mathcal{N}_\ell}}$ denote the complex bus
admittance matrix of subgraph $\mathcal{G}_{\ell}$, with real and imaginary parts $G_\ell, B_\ell\in\mathbb{R}^{\abs{\mathcal{N_\ell}}\times \abs{\mathcal{N_\ell}}}$.
For each bus $i\in\mathcal{N}_\ell$, the complex voltage 
$V_{i}\in\mathbb{C}$ is expressed in rectangular coordinates as
$V_{i}= u_{i}+\textbf{j}w_{i}$, with $u_i,w_i\in\mathbb{R}$.
The active and reactive power generated at bus $i$ are denoted by 
$p^{g}_{i},q^{g}_{i}\in\mathbb{R}$, and the corresponding demand 
by $p^{d}_{i}, q^{d}_{i}\in\mathbb{R}$. 
For each distribution subsystem $\ell\in\mathcal{S}^{\text{D}}$, 
let $p^{\text{pcc}}_{\ell},q^{\text{pcc}}_{\ell}\in\mathbb{R}$ denote the active and reactive power exchanged with the transmission system through the \acrshort{pcc}. 

For each subsystem $\ell\in\mathcal{S}$, the nodal power
injection from their neighboring buses can be expressed as
\begin{subequations}\label{eq::nodalPowerInj}
\begin{align}
     & p_{\ell}^{\text{inj}}(u_{\ell},w_{\ell})= (G_{\ell}\, w_{\ell}+ B_{\ell}\, u_{\ell})\circ w_{\ell}+ (G_{\ell}\, u_{\ell}- B_{\ell}\, w_{\ell})\circ u_{\ell} \\
     & q_{\ell}^{\text{inj}}(u_{\ell},w_{\ell})= (G_{\ell}\, u_{\ell}- B_{\ell}\, w_{\ell})\circ w_{\ell}- (G_{\ell}\, w_{\ell}+ B_{\ell}\,u_{\ell})\circ u_{\ell}
\end{align}
\end{subequations}
%\end{subequations}
where nonlinear mappings
$p_{\ell}^{\text{inj}},\,q_{\ell}^{\text{inj}}:\mathbb{R}^{\abs{\mathcal{N}_\ell}}
\times \mathbb{R}^{\abs{\mathcal{N}_\ell}}\rightarrow \mathbb{R}^{\abs{\mathcal{N}_\ell}}$
collect the nodal injections across all buses in $\mathcal{N}_{\ell}$, and $u_{\ell},w_{\ell}$ are the real and imaginary voltage components, i.e.,
$u_{\ell}=\{u_i\}_{i\in\mathcal{N}_\ell}$ and $w_{\ell}=\{w_i\}_{i\in\mathcal{N}_\ell}$.

For a distribution system $\ell\in\mathcal{S}^{\text{D}}$, the AC power flow model is given by
\begin{subequations}
    \label{eq::pf::distribution}
    \begin{align}
        %v_{\ell,\text{slack}}
        1=\;             & e_{1}^{\top}u_{\ell},\label{eq::pf::distribution::uslack}                                                                                                     \\
        0=\;                                 & e_{1}^{\top}w_{\ell},\label{eq::pf::distribution::vslack}                                                                                                     \\
        0=\;                                 & p_{\ell}^{\text{inj}}(u_{\ell},w_{\ell}) - C^{g}_{\ell}\, p_{\ell}^{g}+ p_{\ell}^{d}- e_{1}\, p^{\text{pcc}}_\ell,\label{eq::pf::distribution::active}             \\
        0=\;                                 & q_{\ell}^{\text{inj}}(u_{\ell},w_{\ell}) - C^{g}_{\ell}\, q_{\ell}^{g}+ q_{\ell}^{d}- e_{1}\, q^{\text{pcc}}_\ell,\label{eq::pf::distribution::reactive}           \\
        v_\ell =\; & u_{\ell}^{{\circ}2}+ w_{\ell}^{{\circ}2},\label{eq::pf::distribution::vlimit}                                                \\
        \underline{v}_{\ell}\leq\;   & v_{\ell}\leq\,\overline{v}_{\ell},\quad \underline{p}^{g}_{\ell}\leq\; p^{g}_{\ell}\leq\,\overline{p}^{g}_{\ell},\quad \underline{q}^{g}_{\ell}\leq\, q^{g}_{\ell}\leq\,\overline{q}^{g}_{\ell}\label{eq::pf::distribution::slimit}
    \end{align}
\end{subequations}
with $p^{d}_{\ell}=\braces{p^d_i}_{i\in\mathcal{N}_\ell}$ and $q^{d}_{\ell}=\braces
{q^d_i}_{i\in\mathcal{N}_\ell}.$ {Constraints~\eqref{eq::pf::distribution::uslack}-\eqref{eq::pf::distribution::vslack} fix the voltage at the PCC, where $e_{1}^{\top}=[1,0,0,\ldots]$~\cite{brandle2026flexibility,chen2020aggregate}. In the ITD model, the PCC represents the secondary side of the interface transformer. We assume that its voltage is regulated at a prescribed operating condition and, following the convention adopted in \cite{bandeira2024adp,fruh2023coordinated}, characterize the distribution-network flexibility in the P-Q domain under a fixed PCC voltage.}

Constraints~\eqref{eq::pf::distribution::active}-\eqref{eq::pf::distribution::reactive}
denote the nodal power balance with the active and reactive power generation
$p_{\ell}^{g},\,q_{\ell}^{g}\in\mathbb{R}^{n_\ell^g}$ for all $n_{\ell}^{g}$
controllable \acrshort{ders}, as well as the power injection from transmission
system $p^{\text{pcc}}_{\ell},q^{\text{pcc}}_{\ell}\in\mathbb{R}$. Here
$C^{g}_{\ell}\in\mathbb{R}^{\abs{\mathcal{N}_\ell}\times n_\ell^g}$ denote
the connectivity matrix to these controllable \acrshort{ders}. Constraints~\eqref{eq::pf::distribution::vlimit}-\eqref{eq::pf::distribution::slimit}
limit voltage magnitude, active power generation, and reactive generation, where
 overline and underline represent the corresponding upper and lower
bounds respectively.

Analogously, the transmission system $\ell=0$ is modeled as
\begin{subequations}
    \label{eq::pf::transmission}
    \begin{align}
        %v_{0,\text{slack}}
        1 =\;             & e_{1}^{\top}u_{0},\label{eq::pf::transmission::uslack}                                                                                         \\
        0=\;                              & e_{1}^{\top}w_{0},\label{eq::pf::transmission::vslack}                                                                                         \\
        0=\;                              & p_{0}^{\text{inj}}(u_{0},w_{0}) - C^{g}_{0}\, p_{0}^{g}+ p_{0}^{d}+ C^{\text{pcc}}_{0}p^{\text{pcc}},\label{eq::pf::transmission::active}      \\
        0=\;                              & q_{0}^{\text{inj}}(u_{0},w_{0}) - C^{g}_{0}\, q_{0}^{g}+ q_{0}^{d}+ C^{\text{pcc}}_{0}q^{\text{pcc}},\label{eq::pf::transmission::reactive}    \\
        v_0 =\; & u_{0}^{{\circ}2}+ w_{0}^{{\circ}2},\label{eq::pf::transmission::vlimit}                                          \\
        \underline{v}_{0}\leq\;       & v_{0}\leq\,\overline{v}_{0},\quad\underline{p}^{g}_{0}\leq\;  p^{g}_{0}\leq\,\overline{p}^{g}_{0},\quad \underline{q}^{g}_{0}\leq\, q^{g}_{0}\leq\,\overline{q}^{g}_{0}.\label{eq::pf::transmission::slimit}
    \end{align}
\end{subequations}
{The voltage at the slack bus is fixed, following the convention in the AC power flow problem.} The key distinction between~\eqref{eq::pf::distribution} and~\eqref{eq::pf::transmission} lies in the active
and reactive power exchanges via \acrshort{pcc} between them. While $p^{\text{pcc}}
_{\ell},q^{\text{pcc}}_{\ell}$ denote the power injection from the transmission
system at the substation (the first bus in the subgraph $\mathcal{G}_{\ell}$),
$p^{\text{pcc}},q^{\text{pcc}}\in\mathbb{R}^{\abs{\mathcal{S}^\text{D}}}$
stack the power transferred to distribution systems, i.e.,
$
    p^{\text{pcc}}=\braces{p^\text{pcc}_\ell}_{\ell\in\mathcal{S}^\text{D}}\text{
    and }q^{\text{pcc}}=\braces{q^\text{pcc}_\ell}_{\ell\in\mathcal{S}^\text{D}}
    .
$
Moreover, the connectivity matrix
$C_{0}^{\text{pcc}}\in\mathbb{R}^{\abs{\mathcal{N}_0}\times \abs{\mathcal{S}^\text{D}}}$
in the constraints~\eqref{eq::pf::transmission::active}-\eqref{eq::pf::transmission::reactive}
specifies the transmission buses to which distribution systems are attached.

These detailed subsystem models in~\eqref{eq::pf::distribution} and~\eqref{eq::pf::transmission} constitute the foundation of the hierarchical optimization framework~\eqref{eq::opt::original}. In particular, the transmission model~\eqref{eq::pf::transmission} corresponds to the upper-layer constraints~\eqref{eq::opt::original::transmission}, while the distribution models~\eqref{eq::pf::distribution} represent the lower-layer constraints~\eqref{eq::opt::original::distribution}.

\subsection{TSO-DSO Management Models}\label{sec::formulation::managmentModels}

{How these subsystem models are embedded into the ITD coordination problem depends on the adopted TSO-DSO management model, which determines the structure of the coupling variables between transmission and distribution.} 

The choice of whether the TSO can directly dispatch DERs shapes the structure of the coupling variables linking transmission and distribution.
In particular, the choice of operational paradigm, such as whether the \acrshort{tso} is allowed to issue dispatch commands to each distribution-level \acrshort{der}, determines the structure of the coupling variables $x_{\ell}$ that link the transmission system with each distribution subsystem $\ell$. 

{These management models determine the definition of the coupling variables \(x_\ell\) and local variables \(y_\ell\), and thereby specify how each subsystem enters the hierarchical optimization problem introduced next.}

\begin{figure}[htbp!]
    \centering
    \begin{subfigure}{0.3\textwidth}
        \centering
        \caption{TSO-managed model}
        \includegraphics[width=\textwidth]{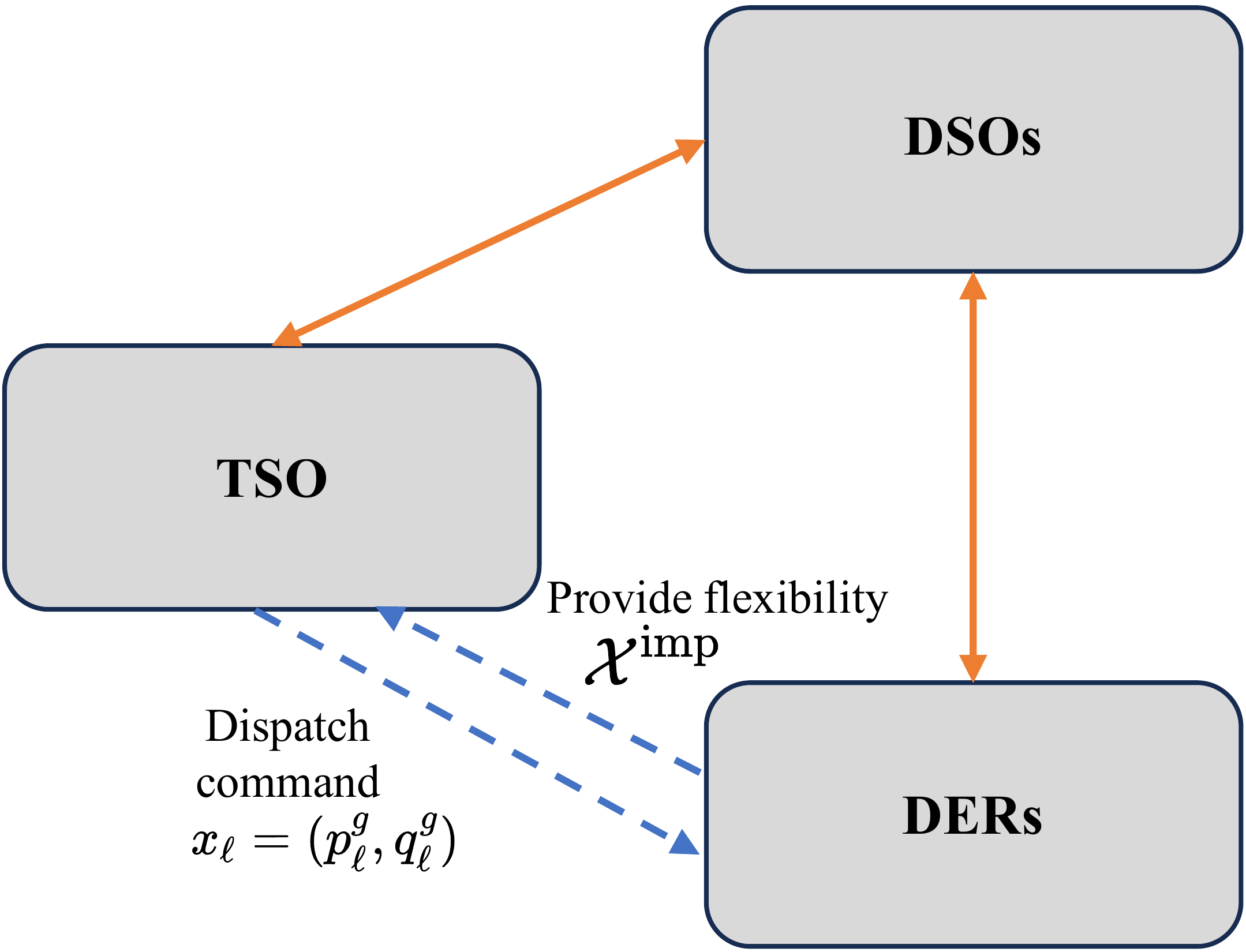}
        \label{fig::management::TSO}
    \end{subfigure}
    \hspace{60pt}
    \begin{subfigure}{0.3\textwidth}
        \centering
        \caption{DSO-managed model}
        \includegraphics[width=\textwidth]{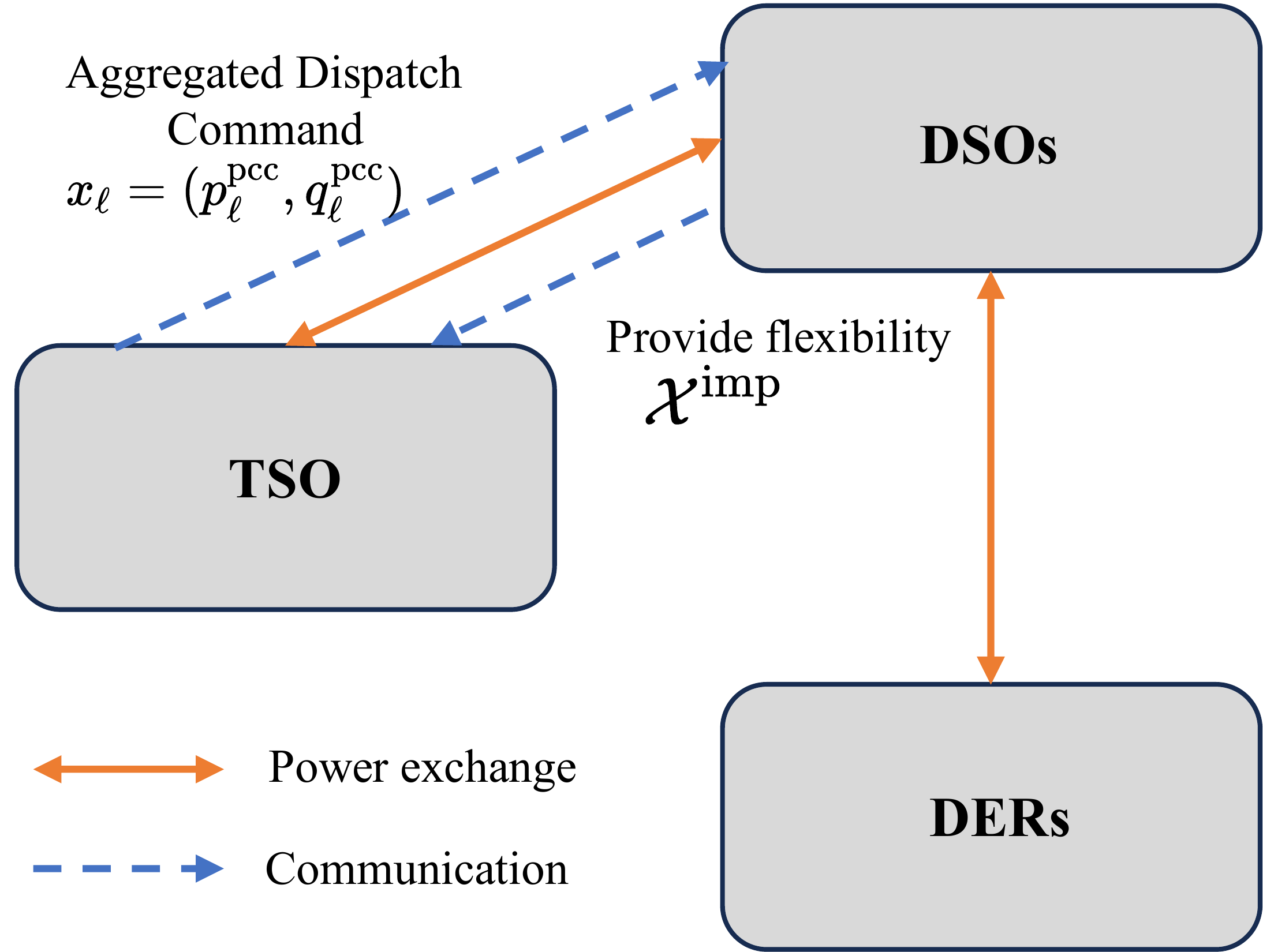}
        \label{fig::management::DSO}
    \end{subfigure}
    \caption{Both TSO–DSO management models keep detailed distribution-level data hidden from the TSO but differ mainly in communication and coordination.  
In the TSO-managed model, each distribution system $\ell$ provides $\mathcal{X}^\mathrm{imp}$ with respect to DER outputs $(p^g_\ell,q^g_\ell)$, enabling the TSO to dispatch DERs directly.  
In the DSO-managed model, each DSO instead supplies $\mathcal{X}^\mathrm{imp}$ with respect to PCC exchanges $(p^\mathrm{pcc}_\ell,q^\mathrm{pcc}_\ell)$ and disaggregates individual DERs based on the scheduled PCC exchanges.  
    }
    \label{fig::management::models}
\end{figure}

\subsubsection{TSO-managed model}

In a TSO-managed model (Fig.~\ref{fig::management::TSO}), the transmission operator directly controls the DERs within each distribution subsystem. The coupling variables are given by 
$$x_{\ell}=(p^g_{\ell},q^g_{\ell})\in\mathcal{X}_{\ell}\subseteq\mathbb{R}^{2n^g_\ell},$$
i.e., the active and reactive power outputs of all DERs. The remaining state variables 
$$y_{\ell}=(u_\ell,w_\ell,v_\ell,p^\text{pcc}_\ell,q^\text{pcc}_\ell)\in\mathcal{Y}_{\ell}\subseteq\mathbb{R}^{3\abs{\mathcal{N}_\ell}+2}$$ are determined internally through the distribution model~\eqref{eq::pf::distribution}, which can be expressed as
%\begin{subequations}
    \begin{equation}\label{eq::distribution::model::general}
        g_{\ell}(x_{\ell},y_{\ell})=0,\; x_{\ell}\in\mathcal{X}_{\ell},\; y_{\ell}\in\mathcal{Y}_{\ell},
    \end{equation}
where nonlinear function $g_{\ell}: \mathbb{R}^{2n^g_\ell}\times\mathbb{R}^{3\abs{\mathcal{N}_\ell}+2}
\rightarrow\mathbb{R}^{3\abs{\mathcal{N}_\ell}+2}$ summarizes the equality constraints~\eqref{eq::pf::distribution::uslack}-\eqref{eq::pf::distribution::vlimit}.
In this management model, the PCC exchanges $(p^\text{pcc}_\ell,q^\text{pcc}_\ell)$ are dependent variables: once $x_\ell$ is fixed, the power exchanged with the transmission system is uniquely determined. The proposed aggregation method thus directly yields an approximate implicit feasible set~\eqref{eq::approximate::implicitSet}.

\subsubsection{DSO-managed model}\label{sec::DSO::management}
By contrast, in a DSO-managed model (Fig.~\ref{fig::management::DSO}), the TSO interacts with each distribution subsystem only through the PCC, without direct control of internal DERs. The coupling variables reduce to the aggregated exchanges, 
$$x_{\ell}= (p^{\text{pcc}}_{\ell},q^{\text{pcc}}_{\ell})\in\mathcal{X}\subseteq\mathbb{R}^2.$$ 
A difficulty arises when multiple DERs are present, since the system then has more internal degrees of freedom. %, violating Assumption~\ref{ass::invertible} of equal dimensions $n=l$. 

%\begin{myRemark}
%    Directly extending the aggregation methods to the DSO-managed model is not possible if multiple controllable \acrshort{ders} are deployed within the same distribution system. Introducing additional DERs increases the system’s degrees of freedom, violating Assumption~\ref{ass::invertible} that guarantees that the Jacobian matrix $\frac{\partial g_\ell}{\partial y_\ell}$ is square and invertible.
%\end{myRemark}
%Note that in
To address this, we follow the strategy of adjusting system imbalance as presented in~\cite{lee2021robust}. In the DSO-managed model, we introduce the \textit{distributed slack} where each \acrshort{der} adjusts its power output to account for the system-wide power flexibility. For DER $i$ in subsystem $\ell$,
\begin{subequations}\label{eq::participation}
    \begin{align}
        p^g_{\ell,i} =\,& p^g_{\ell,i, \text{ref}} + \alpha^p_{i} \Delta p^g_{\ell},\\
        q^g_{\ell,i} =\,& q^g_{\ell,i, \text{ref}} + \alpha^q_i \Delta q^g_{\ell}
    \end{align}
\end{subequations}
with $\sum_i \alpha^p_i=1$ and $\sum_i \alpha^q_i=1$. Here, $\Delta p^g_\ell,\Delta q^g_\ell$ represent system-wide adjustments, and $p^g_{\ell,i, \text{ref}},\,q^g_{\ell,i, \text{ref}}$ are the nominal setpoint for the \acrshort{der} output. The participation factors, determined by the DSO, e.g., for minimizing system losses, remain constant during optimization, while the system-wide power adjustment can be implicitly governed by the coupling variables $x_\ell$. As a result, the distribution model~\eqref{eq::pf::distribution} can also be expressed as~\eqref{eq::distribution::model::general} with $g_{\ell}: \mathbb{R}^{2}\times\mathbb{R}^{3\abs{\mathcal{N}_\ell}+2}
\rightarrow\mathbb{R}^{3\abs{\mathcal{N}_\ell}+2}$ and
\begin{align*}
y_{\ell}= \parens
{u_\ell,w_\ell,v_\ell,\Delta p^g_{\ell},\Delta q^g_{\ell}}\in\mathcal{Y}\subseteq \mathbb{R}^{3\abs{\mathcal{N}_\ell}+2}.
\end{align*}
An alternative to participation factors is using the pseudo-inverse of $\frac{\partial g_\ell}{\partial y_\ell}$  to resolve invertibility challenges.

\begin{myRemark}[DER–PCC mapping]\label{rmk::bijection}
The fixed participation policy removes internal dispatch degrees of freedom by parameterizing individual DER outputs through aggregate adjustments. Under regularity assumptions, this parameterization yields a locally single-valued mapping from PCC exchanges to the corresponding policy-restricted internal state.
\end{myRemark}

{
%\begin{myRemark}[Extension of coupling variables]
The two-dimensional P–Q flexibility set considered in this paper is conditional on a prescribed PCC voltage magnitude. When the dependence of distribution-level flexibility on the interface voltage must be represented explicitly, the coupling variables may be extended to include the PCC active-power exchange, reactive-power exchange, and voltage magnitude:
\[
x_{\ell}
=
\left(
p_{\ell}^{\mathrm{pcc}},
q_{\ell}^{\mathrm{pcc}},
v_{\ell}^{\mathrm{pcc}}
\right),
\]
where yields a \(P\)-\(Q\)-\(V\) flexibility aggregation representation and allows the transmission-level problem to account explicitly for the dependence of distribution-network feasibility on the PCC voltage magnitude. The predictor-corrector aggregation remains applicable to the resulting smooth implicit formulation under the regularity conditions. The PCC voltage magnitude may be treated as a continuous coupling variable without explicitly modeling transformer tap positions. If tap positions are optimized directly, additional discrete variables are required. One possible alternative is to construct a separate continuous flexibility representation for each admissible tap position.
%\end{myRemark}
}

{With the subsystem models and coupling variable definitions now specified, we can formulate the overall ITD coordination problem as a graph-based hierarchical optimization problem.}

\subsection{Graph–Based Hierarchical Optimization}
\label{sec::formulation::aggregation}

The coordination of \acrshort{itd} systems can be expressed as a hierarchical optimization problem:
\begin{subequations}
    \label{eq::opt::original}
    \begin{align}
         \min_{x,y}  & \quad f_{0}(x,y_{0})+\sum_{\ell\in\mathcal{S}^\text{D}}f_{\ell}(x_{\ell},y_{\ell})                   &  &   \\
         \text{s.t.} & \quad g_{0}(x,y_{0})=0, \,\hfill y_{0}\in\mathcal{Y}_{0},                                             &  & \label{eq::opt::original::transmission}\\
                     & \quad g_{\ell}(x_{\ell},y_{\ell})=0,\,x_{\ell}\in\mathcal{X}_{\ell},\,y_{\ell}\in\mathcal{Y}_{\ell}, &  & \forall\ell\in\mathcal{S}^{\text{D}},\label{eq::opt::original::distribution} %       && &\quad g_\ell(x_\ell,y_\ell)=0,\,(x_\ell,y_\ell)\in\mathcal{X}_\ell\times\mathcal{Y}_\ell, &&\forall\ell\in\mathcal{S}^\text{D}
    \end{align}
\end{subequations}
with $x=\{x_\ell\}_{\ell\in\mathcal{S}^\text{D}}$, where $f_{\ell}$ and $g_{\ell}$ denote the objective functions and
constraints of the subsystem
$\ell\in \mathcal{S}= \mathcal{S}^{\text{T}}\bigcup\mathcal{S}^{\text{D}}$. 
The vector $y_{\ell}$ collects the variables that are totally local to the subsystem
$\ell\in\mathcal{S}$, while the vector $x_{\ell}$ represents the coupling
variables that link the transmission system $0$ with distribution subsystems $\ell\in\mathcal{S}^{\text{D}}$, with feasible sets $\mathcal{X}_{\ell}$ and $\mathcal{Y}_{\ell}$, respectively. Variable partitioning depends on the adopted control strategy discussed in Section~\ref{sec::formulation::managmentModels}. 
%Section~\ref{sec::formulation::managmentModels} will discuss how this partitioning of variables depends on the adopted control strategy. 

As illustrated in Fig.~\ref{fig::overview}, constraints~\eqref{eq::opt::original::transmission} and~\eqref{eq::opt::original::distribution} describe the transmission and distribution system models, respectively. The essence of flexibility aggregation is to replace detailed distribution models~\eqref{eq::opt::original::distribution} and also the local variables $y_\ell$ with implicit feasible sets in the coupling space. Then, the optimization problem \eqref{eq::opt::original} can be rewritten as
\begin{subequations}
    \label{eq::opt::flex}
    \begin{align}
          \min_{x,y_0} & \quad f_{0}(x,y_0)+\sum_{\ell\in\mathcal{S}^\text{D}}\hat{f}_{\ell}(x_{\ell}) &  &                                      \\
          \text{s.t.}  & \quad g_{0}(x,y_{0})=0,\,y_{0}\in\mathcal{Y}_{0},                   &  &                                      \\
                       & \quad x_{\ell}\in\mathcal{X}_{\ell}^{\text{imp}},                  &  & \forall\ell\in\mathcal{S}^{\text{D}}
    \end{align}
\end{subequations}
with implicit feasible sets for all $\ell\in\mathcal{S}^\text{D}$
\begin{equation}\label{eq::implicit::set}
    \mathcal{X}_{\ell}^{\text{imp}}= \left\{x_{\ell}\in\mathcal{X}_{\ell}\mid\exists\, y_{\ell}\in\mathcal{Y}_{\ell}\text{ such that }g_{\ell}(x_{\ell},y_{\ell})=0\right\}
\end{equation}
where $\hat{f}_{\ell}$ approximates the local objective $f_{\ell}$.
%\begin{rem}%[Local Approximation]
%Problem~\eqref{eq::opt::original} is generally nonconvex. The proposed predictor-corrector aggregation is to construct a tractable local approximation of the implicit distribution-level feasible set. The derived guarantees are local approximation-error bounds, while the numerical optimality comparisons are based on solutions returned by local nonlinear programming solvers. %Global certification would require additional techniques, such as global optimization or an exact convex relaxation with a verified zero relaxation gap.
%\end{rem}

%where $\hat{f}_{\ell}$ is the approximation of lower-layer cost function $f_{\ell}$, and $\mathcal{X}_{\ell}^{\text{imp}}$ is nonlinear mapping of the feasible set $\mathcal{Y}_{\ell}$ via nonlinear constraints $g_{\ell}(x_{\ell} ,y_{\ell})=0$ onto the coupling variables $x_{\ell}$, i.e.,
%\begin{equation}
%    \mathcal{X}_{\ell}^{\text{imp}}= \left\{x_{\ell}\in\mathcal{X}_{\ell}\,\mid
%    \,g_{\ell}(x_{\ell},y_{\ell})=0,\,y_{\ell}\in\mathcal{Y}_{\ell}\right\}.
%\end{equation}

%The main challenge arises from the nonlinear constraints $g_{\ell}(x_{\ell},y_{\ell})=0$, which render the implicit sets nonconvex and computationally intractable. 

{
The nonlinear implicit feasible sets \(\mathcal X_\ell^{\mathrm{imp}}\) in \eqref{eq::implicit::set} are generally nonconvex and computationally intractable. 
The predictor-corrector approximation introduced in the next section is not a convexification of these sets and does not provide a global optimality certificate for Problem~\eqref{eq::opt::original}. 
Instead, it constructs a tractable local surrogate of the implicit distribution-level feasible set around a regular operating point. 
The theoretical guarantees derived below are local approximation error bounds for the reconstructed state (Theorem~\ref{thm:accuracy}) and for the induced feasible-set approximation (Corollary~\ref{coro::set_error}).
}

\section{Nonlinear Flexibility Aggregation}
\label{sec::aggregation}

{This section develops the proposed nonlinear flexibility aggregation method for approximating the implicit feasible sets introduced in Section~\ref{sec::formulation::aggregation}. We first present the predictor-corrector approximation for flexibility aggregation, then derive local error bounds, and finally illustrate its performance through numerical examples and benchmark studies.}
\subsection{Predictor–Corrector Approximation}\label{sec::predictor-corrector}

We propose a predictor–corrector aggregation method to construct tractable surrogates of the implicit sets, thereby enabling coordination without requiring explicit distribution-grid models in the upper-level coordination problem. Starting from a known feasible point, the proposed method first computes a tangential predictor based on local Jacobian information, and then applies a corrector step to compensate for the residual of the nonlinear equality constraints. The resulting approximation $\hat{y}(x)$ provides a local surrogate of the exact implicit mapping while avoiding iterative nonlinear solves.

In the following, we illustrate how the proposed method approximates the implicit feasible set~\eqref{eq::implicit::set}. For notational simplicity, the subsystem subscript index $\ell$ is omitted in the following subsections.

Consider the nonlinear system associated with a single distribution grid:
\begin{equation}\label{eq::nonlinear::model}g(x,y) = 0,\;x\in\mathcal{X}\subseteq\mathbb{R}
    ^{m},\; y\in\mathcal{Y}\subseteq\mathbb{R}^{n},
\end{equation}
where $g:\mathbb{R}^{m}\times\mathbb{R}^{n}\rightarrow \mathbb{R}^{l}$
represents the system equality constraints in~\eqref{eq::opt::original::distribution}.

%Our approach is to apply the implicit function theorem to find a function $y^\star(x)$ to replace $y$ in \eqref{eq::nonlinear::model}. This allows us to formulate a feasible set for the coupling variables as 
{The exact projected feasible set of the coupling variable $x$ is given by
\begin{equation*}%\label{eq::exact::implicitSet}
    \mathcal{X}^{\text{imp}}= \left\{x\in\mathcal{X}\,\mid\,\exists\, y\in\mathcal{Y}\text{ such that }g(x,y)=0\right\}.
\end{equation*}
Hence, for every \(x\in\mathcal X^{\mathrm{imp}}\), there exists at least one \(y\in\mathcal Y\) satisfying \(g(x,y)=0\). Under the regularity conditions stated below, such a feasible \(y\) can be represented locally by an implicit mapping \(y^\star(\cdot)\) around a feasible point.
}

{Since computing the exact implicit set $\mathcal X^\mathrm{imp}$ is intractable, we propose a predictor-corrector method to approximate it.}
We first impose the regularity assumptions in the following:
\begin{myAss}
    \label{ass::c1} The function \(g\) is continuously differentiable on $\Pi := \mathcal{X}\times\mathcal{Y}.$
\end{myAss}

{\begin{myAss}
    \label{ass::invertible}  Let \((x_0,y_0)\in\mathcal X\times\mathcal Y\) be a feasible base point satisfying
\(g(x_0,y_0)=0\). The dimensions satisfy \(n=l\), and the base-point Jacobian
\(
    \partial_y g(x_0,y_0)\in\mathbb R^{n\times n}
\)
is nonsingular.
\end{myAss}}

Assumption~\ref{ass::invertible} is the standard regularity condition ensuring local existence and local uniqueness of the implicit mapping around a feasible operating point. Its numerical verification and practical implementation are provided in Remark~\ref{rem::numerical_regularities}. Based on the assumpption, we introduce the implicit function theorem in the following:
\begin{myTheorem}[Implicit Function Theorem]\label{thm:implicit_function}
Let \((x_0,y_0)\in\mathcal X\times\mathcal Y\) satisfy
\[
g(x_0,y_0)=0.
\]
Suppose Assumptions~\ref{ass::c1} and \ref{ass::invertible} hold in a neighborhood of \((x_0,y_0)\). Then there exist neighborhoods \(\mathcal U\subseteq\mathcal X\) of \(x_0\) and \(\mathcal V\subseteq\mathcal Y\) of \(y_0\), and a locally unique continuously differentiable mapping \(y^\star:\mathcal U\to\mathcal V\) such that
\begin{equation}\label{eq::localMapping}
g(x,y^\star(x))=0,\qquad \forall x\in\mathcal U.
\end{equation}
Moreover, 
\begin{equation}\label{eq::implicit::sens}
\hspace{-8pt}\partial_x y^\star(x)
=
-\left[\partial_y g(x,y^\star(x))\right]^{-1}\partial_x g(x,y^\star(x)),
\, \forall x\in\mathcal U
\end{equation}
\end{myTheorem}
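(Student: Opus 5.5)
The plan is the classical route: build $y^\star$ as the fixed point of a simplified Newton map, and get the derivative formula by differentiating the identity \eqref{eq::localMapping}. Throughout, $x_0$ and $y_0$ are treated as interior points of $\mathcal X$ and $\mathcal Y$, so that small balls around them lie in $\Pi$ where Assumption~\ref{ass::c1} applies. If needed, $g$ is extended $C^1$ to an open set, which does not affect the local statement.

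\textbf{Step 1 (contraction).} Set $A:=\partial_y g(x_0,y_0)$, which is invertible by Assumption~\ref{ass::invertible}. Define
\[
T_x(y):=y-A^{-1}g(x,y),
\]
so that $T_x(y)=y$ exactly when $g(x,y)=0$. Its derivative is $\partial_y T_x(y)=I-A^{-1}\partial_y g(x,y)$, which vanishes at $(x_0,y_0)$. Continuity of $\partial_y g$ (Assumption~\ref{ass::c1}) then gives radii $r,\delta>0$ with $\norm{\partial_y T_x(y)}\le \tfrac12$ whenever $\norm{x-x_0}\le\delta$ and $\norm{y-y_0}\le r$. By the mean value inequality, $T_x$ is a $\tfrac12$-contraction on the closed ball $\bar B_r(y_0)$.

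\textbf{Step 2 (self-map).} We have
\[
\norm{T_x(y)-y_0}\le \tfrac12 r+\norm{A^{-1}}\,\norm{g(x,y_0)}.
\]
Since $g(x_0,y_0)=0$ and $g$ is continuous, shrinking $\delta$ makes the second term at most $r/2$. Hence $T_x$ maps $\bar B_r(y_0)$ into itself. Banach's fixed point theorem gives a unique $y^\star(x)\in\bar B_r(y_0)$ with $g(x,y^\star(x))=0$. Take $\mathcal U:=B_\delta(x_0)\cap\mathcal X$ and $\mathcal V:=B_r(y_0)\cap\mathcal Y$, adjusting the radii so the fixed point lies in the open ball. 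Local uniqueness comes directly from the contraction property.

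\textbf{Step 3 (regularity).} This is the main obstacle. Continuity of $y^\star$ follows from the uniform contraction:
\[
\norm{y^\star(x)-y^\star(x')}\le 2\norm{A^{-1}}\,\norm{g(x,y^\star(x'))-g(x',y^\star(x'))},
\]
and the right-hand side tends to $0$ as $x\to x'$. Differentiability needs an extra argument:
\begin{itemize}
\item First shrink the neighbourhoods so that $\partial_y g(x,y)$ stays invertible there, using openness of the invertible matrices and continuity of $\partial_y g$.
\item Next, at a fixed $x$, write $\Delta y:=y^\star(x+h)-y^\star(x)$ and Taylor-expand
\[
0=g(x+h,y^\star(x+h))-g(x,y^\star(x))=\partial_x g\,h+\partial_y g\,\Delta y+o(\norm h+\norm{\Delta y}).
\]
\item The Lipschitz-type bound above gives $\norm{\Delta y}=O(\norm h)$. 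Hence
\[
\Delta y=-[\partial_y g]^{-1}\partial_x g\,h+o(\norm h),
\]
which proves differentiability with the formula \eqref{eq::implicit::sens}.
\end{itemize}

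\textbf{Step 4 (continuity of the derivative).} The right-hand side of \eqref{eq::implicit::sens} is a composition of continuous maps: $y^\star$, $\partial g$, and matrix inversion on the invertible set. So $y^\star$ is continuously differentiable on $\mathcal U$. Alternatively, once differentiability is known, applying the chain rule to $g(x,y^\star(x))\equiv0$ gives $\partial_x g+\partial_y g\,\partial_x y^\star=0$, which yields \eqref{eq::implicit::sens} directly.
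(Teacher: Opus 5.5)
Your contraction-mapping argument is a correct, self-contained proof, but it is not what the paper does. The paper states Theorem~\ref{thm:implicit_function} as the classical implicit function theorem without proof. The only argument it supplies is that \eqref{eq::implicit::sens} follows by differentiating \eqref{eq::localMapping}, i.e.\ $\partial_y g\,\partial_x y^\star+\partial_x g=0$; this is your ``alternatively'' in Step~4, and it presupposes the differentiability you establish in Step~3. Citing the textbook result is shorter. Your route gives explicit radii and a Lipschitz-type estimate for $y^\star$, and it connects to the rest of the paper. Your map $T_x(y)=y-A^{-1}g(x,y)$ uses the same frozen Jacobian $A=M_0$ as the corrector, so $\hat y(x)=T_x(\bar y(x))$ in \eqref{eq::predictor::corrector} is exactly one simplified-Newton (chord) step of your contraction, started from the tangential predictor. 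Since $T_x(y^\star)=y^\star$, the proof of Theorem~\ref{thm:accuracy} is a sharpened contraction bound for that map, namely $\norm{T_x(\bar y)-T_x(y^\star)}\le\omega\parens{L_2\norm{x-x_0}+\tfrac{L_3}{2}\norm{\bar y-y^\star}}\norm{\bar y-y^\star}$ with $\omega=\norm{A^{-1}}$, combined with Lemma~\ref{prop:predictor}. Two minor points. First, the bound $\norm{I-A^{-1}\partial_y g(x,y)}\le\tfrac12$ from Step~1 already makes $\partial_y g(x,y)$ invertible on the whole neighbourhood by a Neumann series, so the extra shrinking in Step~3 is unnecessary. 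Second, your assumption that $y_0$ is interior to $\mathcal Y$ is genuinely required for $\mathcal V\subseteq\mathcal Y$, and extending $g$ does not recover it. For example, with $g(x,y)=y-x$, $\mathcal Y=[0,\infty)$ and $(x_0,y_0)=(0,0)$, the implicit map leaves $\mathcal Y$ for every $x<0$. The paper's statement leaves this hypothesis implicit.
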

Note that~\eqref{eq::implicit::sens} follows from differentiating \eqref{eq::localMapping} with respect to \(x\), which yields
\begin{equation*}
\partial_y g(x,y^\star(x))\,\partial_x y^\star(x)
+\partial_x g(x,y^\star(x))=0.
\end{equation*}
%For notational convenience, we define the paritial Jacobians at $(x_0,y_0)$ as
%\[
%J_{x} := \partial_x g(x_0,y_0),
%\qquad
%J_{y}:= \partial_y g(x_0,y_0),
%\]
For a given $x$ in the neighborhoods of $x_0$, the tangential predictor of $y$ at the base point \((x_0,y_0)\) is
\begin{align}    
   \label{eq::predictor}
    \bar{y}(x):= \;&y_0  - \squarebrackets{\partial_y g(x_0,y_0)}^{-1} \partial_x g(x_0,y_0)\, (x-x_0).%\notag\\
    %= \;&y_0  + S^\star(x_0) (x-x_0).
    % \frac{\partial y^{\star}(x_{0})}{\partial     x}\left(x- x_{0}\right)\notag\\ = \;&y_{0}- \left[\frac{\partial g}{\partial y}\right]^{-1}\frac{\partial g}{\partial x}\Delta x,
\end{align} 
Since $\bar y(x)$ is only a first-order approximation, it generally does not satisfy the nonlinear constraints $g(x,\bar y(x))=0$. We therefore introduce the corrector
\begin{equation*}
    %\Delta y^{\text{cor}}= - \left[\frac{\partial g}{\partial y}\right]^{-1}g(x,\bar{y}).
    \Delta y^{\mathrm{cor}}(x) := - \squarebrackets{\partial_y  g(x_0,y_0)}^{-1} \,g(x,\bar{y}(x))
\end{equation*}
which yields the predictor-corrector approximation:
\begin{align}\label{eq::predictor::corrector}
    %\hat{y}(x) = y_{0}\underbrace{- \left[\frac{\partial g}{\partial y}\right]^{-1}
    %\frac{\partial g}{\partial x} \parens{x-x_0}}_{\text{preditor step}}\underbrace{-
    %\left[\frac{\partial g}{\partial y}\right]^{-1} g(x,\bar{y})}_{\text{corrector step}}.
    \hat{y}(x) :=&\; \bar{y}(x) + \Delta y^{\mathrm{cor}}(x)\notag\\
    =&\; y_0 - \squarebrackets{\partial_y g(x_0,y_0)}^{-1} \partial_x g(x_0,y_0)\, (x-x_0) \notag\\
    &- \squarebrackets{\partial_y  g(x_0,y_0)}^{-1} \,g(x,\bar{y}(x))
\end{align}
{
The role of~\eqref{eq::predictor::corrector} is to approximate the equality-defined implicit relation $g(x,y)=0$. The original set constraint $y\in\mathcal{Y}$ is not enforced by the predictor-corrector step itself. Instead, it is incorporated through the reduced feasible set in the $x$-space induced by the implicit mapping. Accordingly, the approximated implicit feasible region is defined as
\begin{equation}\label{eq::approximate::implicitSet}
    \hat{\mathcal{X}}^{\text{imp}}= \left\{x\in\mathcal{X}\,\mid\,\hat{y}(x)\in
    \mathcal{Y}\right\}.
\end{equation}
Thus, the predictor-corrector construction approximates the equality-constrained manifold, while the inequality and bound constraints on \(y\) are enforced through the membership condition \(\hat y(x)\in\mathcal Y\). If the reconstructed state \(\hat y(x)\) satisfies the operational limits, i.e., \(\hat y(x)\in\mathcal Y\), the corresponding coupling point \(x\) is retained in \(\hat{\mathcal X}^{\mathrm{imp}}\). 
Otherwise, if \(\hat y(x)\notin\mathcal Y\), the coupling point is excluded from the predictor-corrector approximation of the implicit feasible region. 
}

{
\begin{myRemark} 
The proposed predictor-corrector aggregation provides a non-iterative local surrogate of the nonlinear implicit feasible set \(\mathcal X^{\mathrm{imp}}\). Compared with the existing aggregation methods summarized in Table~\ref{tab:flexibility_models}, it directly accounts for nonlinear AC power flow equations, is applicable to both radial and meshed distribution networks under the stated regularity assumptions, and enables scalable precomputation through spatio-temporal decomposition. 
\end{myRemark}
}

\subsection{Approximation Error Analysis}\label{sec::theorem}

We next analyze the local approximation error of the proposed predictor-corrector construction under the additional assumption of twice continuous differentiability.
\begin{myAss}
    \label{ass::C2}  The function \(g\) is twice continuously differentiable on \(\Pi := \mathcal{X}\times\mathcal{Y}\).
\end{myAss}
For convenience, we simplify notation by writing
    \[
    y^\star := y^\star(x),\qquad \bar y := \bar y(x),\qquad \hat y :=\hat{y} (x),
    \]
and define the sensitivity operator along the feasible manifold as
\begin{equation*}
\label{eq::Sstar}
S^\star(x):=\partial_x y^\star(x)
=
-\left[\partial_y g(x,y^\star(x))\right]^{-1}\partial_x g(x,y^\star(x)).
\end{equation*}
The following regularity result summarizes the local smoothness properties required in the subsequent analysis.

\begin{myLemma}\label{lemm:local_regularity}
Under the assumptions of Theorem~\ref{thm:implicit_function}, and additionally Assumption~\ref{ass::C2}, there exist constants \(L_1,L_2,L_3>0\) such that, for all \(x,x_1,x_2\in\mathcal U\), and all \(y_1,y_2\in\mathcal V\),
%\begin{subequations}
\begin{align}
\hspace{-5pt}\|S^\star(x_1)-S^\star(x_2)\| &\le L_1\|x_1-x_2\|,\label{eq::implicit::lipschitz::first}\\
\hspace{-5pt}\|\partial_y g(x_1,y^\star(x_1))-\partial_y g(x_2,y^\star(x_2))\| &\le L_2\|x_1-x_2\|,\label{eq::implicit::lipschitz::second}\\
\hspace{-5pt}\|\partial_y g(x,y_1)-\partial_y g(x,y_2)\| &\le L_3\|y_1-y_2\|.\label{eq::implicit::lipschitz::third}
\end{align}
%\end{subequations}
\end{myLemma}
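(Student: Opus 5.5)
The plan is to obtain all three estimates from one principle: a $C^1$ map restricted to a compact convex set is Lipschitz there, with constant equal to the maximum norm of its derivative. Since Theorem~\ref{thm:implicit_function} only provides neighborhoods $\mathcal U$ and $\mathcal V$, I would first shrink them to closed balls $\bar B(x_0,r)\subseteq\mathcal U$ and $\bar B(y_0,\rho)\subseteq\mathcal V$. The radii are chosen so that $\partial_y g(x,y^\star(x))$ stays nonsingular on $\bar B(x_0,r)$, and so that $y^\star(\bar B(x_0,r))\subseteq \bar B(y_0,\rho)$. Both are possible by continuity of $y^\star$ and of $\partial_y g$, together with Assumption~\ref{ass::invertible}. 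As usual in such local statements, $\mathcal U,\mathcal V$ in Lemma~\ref{lemm:local_regularity} are understood as these compact convex neighborhoods.

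For \eqref{eq::implicit::lipschitz::third}, Assumption~\ref{ass::C2} says that $\partial_y g$ is $C^1$ on $\Pi$. Its derivative $\partial_{yy}g$ is then continuous, hence bounded on the compact set $\bar B(x_0,r)\times\bar B(y_0,\rho)$. The mean value inequality along the segment $[y_1,y_2]\subseteq\bar B(y_0,\rho)$, which lies in the set by convexity, gives $L_3:=\max\|\partial_{yy}g\|$.

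For \eqref{eq::implicit::lipschitz::second}, consider the composite map $x\mapsto \partial_y g(x,y^\star(x))$. It is $C^1$ because $\partial_y g\in C^1$ and $y^\star\in C^1$ by Theorem~\ref{thm:implicit_function}. By the chain rule, its derivative is $\partial_{yx}g+\partial_{yy}g\,S^\star(x)$. This is bounded on $\bar B(x_0,r)$: the second derivatives are continuous on a compact set, and $S^\star$ is continuous there by \eqref{eq::implicit::sens}. The mean value inequality on the convex ball then yields $L_2$.

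The main step is \eqref{eq::implicit::lipschitz::first}. Write $A(x):=\partial_y g(x,y^\star(x))$ and $B(x):=\partial_x g(x,y^\star(x))$, so that $S^\star=-A^{-1}B$. Then split
\[
S^\star(x_1)-S^\star(x_2)=-A(x_1)^{-1}\bigl(B(x_1)-B(x_2)\bigr)-\bigl(A(x_1)^{-1}-A(x_2)^{-1}\bigr)B(x_2).
\]
The resolvent identity gives $A_1^{-1}-A_2^{-1}=A_1^{-1}(A_2-A_1)A_2^{-1}$.

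The obstacle is uniform control of $\|A(x)^{-1}\|$. I would handle it with the continuity of matrix inversion on the open set of nonsingular matrices. Since $A$ is continuous and nonsingular on the compact ball, $\kappa:=\max_{x}\|A(x)^{-1}\|<\infty$; alternatively, I would choose $r$ so that $\|A(x)-A(x_0)\|\le 1/(2\|A(x_0)^{-1}\|)$, giving $\kappa\le 2\|A(x_0)^{-1}\|$ by a Neumann series. The map $B$ is Lipschitz with some constant $L_B$ by the same chain-rule argument used for $L_2$, and $\beta:=\max\|B\|<\infty$. Combining these gives
\[
L_1=\kappa L_B+\kappa^2 L_2\beta .
\]
Alternatively, one could note that $S^\star$ is $C^1$, being the product of $C^1$ maps with $A^{-1}$, and bound its derivative directly. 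The explicit split is preferable, though, since it exposes how the constants depend on the problem data.
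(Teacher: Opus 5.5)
Your proposal is correct and follows essentially the same approach as the paper: the paper states this lemma without proof and computes the constants via the Lipschitz estimate for $C^1$ maps on compact convex sets (Lee, \emph{Introduction to Smooth Manifolds}, Prop.~C.29), which is exactly your mean-value argument for $L_2$ and $L_3$, and your resolvent split for $L_1$ is an explicit way of bounding $\partial_x S^\star$. Your shrinking of $\mathcal U,\mathcal V$ to closed balls is in fact necessary, since with arbitrary implicit-function neighborhoods the bounds can fail (e.g.\ $g=y^2-x$, $(x_0,y_0)=(1,1)$, $\mathcal U=(0,4)$, $\mathcal V=(0,2)$ gives $S^\star(x)=1/(2\sqrt{x})$, which is not Lipschitz on $\mathcal U$), and it also supplies the convexity that Lemma~\ref{prop:predictor} tacitly uses.
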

We first establish the local error bound for the tangential predictor \eqref{eq::predictor}.
\begin{myLemma}\label{prop:predictor}
Under the assumptions of Lemma~\ref{lemm:local_regularity}, for all \(x\in\mathcal U\), the tangential predictor \(\bar y(x)\) defined in \eqref{eq::predictor} satisfies
 \[\bar y(x)-y^\star(x)=\mathcal O(\|x-x_0\|^2).\]
\end{myLemma}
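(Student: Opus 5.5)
The plan is to recognize the tangential predictor \eqref{eq::predictor} as the first-order Taylor polynomial of the implicit mapping \(y^\star\) at \(x_0\), and to bound the Taylor remainder with the Lipschitz estimate \eqref{eq::implicit::lipschitz::first} of Lemma~\ref{lemm:local_regularity}.

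First, by Theorem~\ref{thm:implicit_function} we have \(y^\star(x_0)=y_0\) (local uniqueness in \(\mathcal V\)). Evaluating the sensitivity formula \eqref{eq::implicit::sens} at \(x_0\) gives
\[
S^\star(x_0)=-\left[\partial_y g(x_0,y_0)\right]^{-1}\partial_x g(x_0,y_0).
\]
Hence \(\bar y(x)=y^\star(x_0)+S^\star(x_0)(x-x_0)\).

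Second, I would shrink \(\mathcal U\) if necessary to a ball around \(x_0\), so that it is convex and the segment \(x_0+t(x-x_0)\), \(t\in[0,1]\), stays in \(\mathcal U\). Since \(y^\star\) is \(C^1\) on \(\mathcal U\), the fundamental theorem of calculus along this segment yields
\[
y^\star(x)-\bar y(x)=\int_0^1\left[S^\star\bigl(x_0+t(x-x_0)\bigr)-S^\star(x_0)\right](x-x_0)\,dt .
\]

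Third, applying \eqref{eq::implicit::lipschitz::first} to the integrand bounds it in norm by \(L_1 t\|x-x_0\|^2\). Integrating over \(t\in[0,1]\) then gives
\[
\|\bar y(x)-y^\star(x)\|\le \tfrac{L_1}{2}\|x-x_0\|^2,
\]
which is the claimed \(\mathcal O(\|x-x_0\|^2)\) estimate, with the explicit constant \(L_1/2\).

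The only delicate point is justifying the setting, not the estimate itself. The neighborhood must be convex (or star-shaped about \(x_0\)) so the segment lies in \(\mathcal U\). The Lipschitz constant \(L_1\) must hold on that set; under Assumption~\ref{ass::C2} this follows from \(S^\star\) being \(C^1\) on a compact sub-ball, which is presumably how Lemma~\ref{lemm:local_regularity} is obtained. If \(\mathcal U\) is not convex, I would restrict to a closed ball contained in it and state the bound there, since the result is local anyway.
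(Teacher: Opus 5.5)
Your proposal is correct and follows the paper's proof. Like the paper, you write \(\bar y(x)-y^\star(x)\) via the integral mean value theorem for \(y^\star\) along the segment from \(x_0\) to \(x\), bound the integrand with \eqref{eq::implicit::lipschitz::first}, and integrate \(\int_0^1 L_1\tau\,d\tau\) to get the constant \(L_1/2\). Your remark that \(\mathcal U\) should be a ball (or at least star-shaped about \(x_0\)) so that the segment stays in it makes explicit an assumption the paper's proof uses without stating it.
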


\begin{proof}
    For a fixed \(x\in\mathcal U\), we simplify notation by writing
    \[
    y^\star := y^\star(x),\qquad \bar y := \bar y(x).
    \]
    From \eqref{eq::predictor} we have
    \[
        \bar{y}- y^{\star} =  \;y_0 - y^{\star} + S^\star(x_{0}) (x-x_{0}).
    \]      
    Since $y_0=y^\star(x_0)$, we apply the integral form of the mean value theorem to \(y^\star(\cdot)\) and obtain
    \[
    y_0-y^\star =-\int_{0}^1S^\star(x_0+\tau (x-x_0))(x-x_0)d\tau.
    \]
    Hence,    
    \begin{align*}
        \bar{y}- y^{\star}   =  -\int_{0}^{1}\squarebrackets{S^\star(x_0+\tau(x-x_0)) - S^\star(x_0)}(x-x_{0}) d\tau\notag
    \end{align*}
    Taking norms yields
    \begin{align}\label{eq::errorbound::predictor}
        \quad &\norm{\bar{y} - y^\star}\notag\\
        \leq&\norm{x-x_0}\int_{0}^{1}\norm{S^\star(x_0+\tau(x-x_0) - S^\star(x_0)}d\tau\notag \\
        \Downarrow&\;{\text{Lipschitz continuity~\eqref{eq::implicit::lipschitz::first}}}\notag                                   \\
        %\overset{\eqref{eq::implicit::lipschitz::first}}&{\leq} \norm{x-x_0}^2  \int_0^1 L_1 \tau d\tau\notag\\
        {\leq}& \norm{x-x_0}^2  \int_0^1 L_1 \tau d\tau\notag                                          \\
        \leq& \frac{1}{2}L_1\norm{x-x_0}^{2},
    \end{align}
    which concludes the proof.
\end{proof}

{To state the error bound with an explicit constant for the predictor-corrector approximation \eqref{eq::predictor::corrector}, let
$M_0:=\partial_y g(x_0,y_0)$ denote the base-point Jacobian, which is
nonsingular by Assumption~\ref{ass::invertible}, set
$\omega:=\norm{M_0^{-1}}$, and fix $\bar r>0$ such that
\[\mathcal B(x_0,\bar r):=\{x \in \mathcal U \subseteq \mathcal X\mid \norm{x-x_0}\le \bar r\}.\]}
\begin{myTheorem}\label{thm:accuracy}
Suppose the assumptions of Lemma~\ref{lemm:local_regularity} hold. Then, for all $x\in\mathcal B(x_0,\bar r)$, the predictor-corrector approximation $\hat y(x)$ defined in~\eqref{eq::predictor::corrector}
satisfies
{\[
\norm{\hat y(x)-y^\star(x)}\;\le\; C\,\norm{x-x_0}^3,
\]
with
{\[
C:=\tfrac{\omega}{2}\parens{L_1L_2+\tfrac14 L_1^2L_3\,\bar r}.
\]}
In particular, $\hat y(x)-y^\star(x)=\mathcal O(\norm{x-x_0}^{3}).$
}
\end{myTheorem}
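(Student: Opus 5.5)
The plan is to write the predictor–corrector error as the predictor error multiplied by a "Jacobian mismatch" factor. The predictor error is already $\mathcal O(\norm{x-x_0}^2)$ by Lemma~\ref{prop:predictor}. The mismatch factor is $\mathcal O(\norm{x-x_0})$ by the Lipschitz bounds of Lemma~\ref{lemm:local_regularity}. Throughout, fix $x\in\mathcal B(x_0,\bar r)$ and write $d:=\norm{x-x_0}$, $e:=\bar y-y^\star$ and $M_0=\partial_y g(x_0,y_0)$. Lemma~\ref{prop:predictor}, specifically \eqref{eq::errorbound::predictor}, gives $\norm{e}\le \tfrac12 L_1 d^2$.

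\textbf{Step 1: exact error identity.} From \eqref{eq::predictor::corrector}, $\hat y-y^\star = e - M_0^{-1} g(x,\bar y)$. Since $g(x,y^\star)=0$ by \eqref{eq::localMapping}, the integral mean value theorem in $y$ (valid by Assumption~\ref{ass::c1}) gives
\[
g(x,\bar y)=\int_0^1 \partial_y g\parens{x,y^\star+s e}\,ds\; e .
\]
Hence
\[
\hat y-y^\star = M_0^{-1}\int_0^1\squarebrackets{M_0-\partial_y g(x,y^\star+se)}ds\; e .
\]

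\textbf{Step 2: split the mismatch.} Using $y_0=y^\star(x_0)$, I write
\[
M_0-\partial_y g(x,y^\star+se)=\squarebrackets{\partial_y g(x_0,y^\star(x_0))-\partial_y g(x,y^\star(x))}+\squarebrackets{\partial_y g(x,y^\star)-\partial_y g(x,y^\star+se)}.
\]
The first bracket is bounded by $L_2 d$ via \eqref{eq::implicit::lipschitz::second}. The second is bounded by $L_3 s\norm{e}$ via \eqref{eq::implicit::lipschitz::third}. Integrating over $s\in[0,1]$ gives a bound of $L_2 d+\tfrac12 L_3\norm{e}$ on the averaged mismatch.

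\textbf{Step 3: assemble.} Combining Steps 1 and 2,
\[
\norm{\hat y-y^\star}\le \omega\parens{L_2 d+\tfrac12 L_3\norm{e}}\norm{e}.
\]
Inserting $\norm{e}\le\tfrac12 L_1 d^2$ gives $\tfrac{\omega}{2}\parens{L_1L_2 d^3+\tfrac14 L_1^2L_3 d^4}$. Bounding one factor $d\le\bar r$ then yields exactly $C d^3$ with the stated $C$.

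The main obstacle is a technical one in Step 2. The bound \eqref{eq::implicit::lipschitz::third} only applies for $y_1,y_2\in\mathcal V$, so I need the whole segment $y^\star+se$, $s\in[0,1]$, to lie in $\mathcal V$. This does not follow automatically from the statement. I would handle it by shrinking $\mathcal V$ to a convex ball around $y_0$, and then choosing $\bar r$ small enough that $\norm{y^\star(x)-y_0}+\tfrac12 L_1\bar r^2$ stays inside that ball; continuity of $y^\star$ and the predictor bound make this possible. The same shrinking also ensures that the mean value step in Step 1 is legitimate.
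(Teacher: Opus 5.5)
Your proof is correct and follows the paper's argument essentially step for step: the same identity $\hat y-y^\star=M_0^{-1}\bigl[M_0-\int_0^1\partial_y g(x,y^\star+s e)\,ds\bigr]e$ obtained from the integral mean value theorem, the same split of the Jacobian mismatch bounded via \eqref{eq::implicit::lipschitz::second} and \eqref{eq::implicit::lipschitz::third}, and the same substitution of \eqref{eq::errorbound::predictor} followed by $\norm{x-x_0}\le\bar r$. Your remark that the segment $y^\star+se$ must stay in $\mathcal V$, possibly after shrinking $\bar r$, addresses a domain issue that the paper's proof leaves implicit.
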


\begin{proof}
    From \eqref{eq::predictor} and \eqref{eq::predictor::corrector}, we obtain  
    \begin{align*}
        \hat{y}- \bar{y}  & = -M^{-1}_0g(x,\bar{y})\notag\\
        \hat{y}-y^{\star} & = \bar{y}-y^{\star}- M^{-1}_0\parens{g(x,\bar{y}) - g(x,y^\star)}\notag         \end{align*}
    Applying the integral form of the mean value theorem to \(g(x,\cdot)\), we obtain
    \begin{align*}
    &g(x,\bar y)-g(x,y^\star)\notag\\
    =&
    \int_0^1
    \partial_y g\bigl(x,y^\star+\tau(\bar y-y^\star)\bigr)\,d\tau\,(\bar y-y^\star).
    \end{align*}
Hence,
\begin{align*}
\hat{y} - y^\star
 %         & = M^{-1}\Big\{M\parens{\bar{y} -y^\star} \notag\\
 %        & \qquad \qquad- \int_{0}^{1}\partial_y g(x,y^{\star}+ \tau\parens{\bar{y}-y^\star}) \cdot(\bar{y}-y^{\star})d\tau \Big\}\notag         \\
         = M^{-1}_0\Big\{M_0 - \int_{0}^{1}\partial_y g(x,y^{\star}+ \tau\parens{\bar{y}-y^\star}) d\tau \Big\}\parens{\bar{y} -y^\star}
    \end{align*}
    Taking the norm of both sides, we have
    \begin{align}\label{eq::errorbound::corrector}
        &\norm{\hat{y}-y^\star}\notag\\
        \leq &  \norm{M^{-1}_0}\Big\{\norm{M_0-\partial_y g(x,y^\star)}  + \int_0^1 \norm{\partial_y g(x,y^{\star}+\tau\parens{\bar{y}-y^\star}) - \partial_y g(x,y^{\star})}d\tau\Big\}\norm{\bar{y}-y^\star}\notag \\
        \Downarrow & \;{\text{Lipschitz continuous \eqref{eq::implicit::lipschitz::third}}}\notag\\
        %\overset{\eqref{eq::Lipschitz::g}}&{\leq} \norm{M^{-1}}\braces{\norm{M-\partial_y g(x,y^\star)} + L \norm{\bar{y}-y^\star}\int_0^1 \tau d\tau}\norm{\bar{y}-y^\star}\notag \\
        {\leq} & \norm{M^{-1}_0}\Big\{\norm{M_0-\partial_y g(x,y^\star)} + L_3 \norm{\bar{y}-y^\star}\int_0^1 \tau d\tau\Big\}\norm{\bar{y}-y^\star}\notag \\
        \Downarrow & \;{\text{Lipschitz continuous~\eqref{eq::implicit::lipschitz::second}}}\notag\\
        {\leq} & \norm{M^{-1}_0}\braces{L_2 \norm{x-x_0}+ \frac{L_3}{2}\norm{\bar{y}-y^\star}}\norm{\bar{y}-y^\star}\notag\\
        %\overset{\text{A.\ref{ass::lipschitz}}}&{\leq} \norm{M^{-1}}\braces{L \norm{\begin{bmatrix}x-x_0\\ y^\star-y_0\end{bmatrix}}+ \frac{L}{2}\norm{\bar{y}-y^\star}}\norm{\bar{y}-y^\star}\notag       \\
        %&{\leq} \norm{M^{-1}}\braces{L \norm{\begin{bmatrix}x-x_0\\ y^\star-y_0\end{bmatrix}}+ \frac{L}{2}\norm{\bar{y}-y^\star}}\norm{\bar{y}-y^\star}\notag       \\
        %&\Downarrow \;{\text{Lipschitz continuous (Eq.~\ref{eq::implicit::lipschitz::first})}}\notag\\
        %\overset{\eqref{eq::implicit::lipschitz::first}}&{\leq} \norm{M^{-1}}\braces{L (L_1+ 1)\norm{x-x_0}+ \frac{L}{2}\norm{\bar{y}-y^\star}}\norm{\bar{y}-y^\star}\notag       \\
        %&{\leq} \norm{M^{-1}}\braces{L (L_1+ 1)\norm{x-x_0}+ \frac{L}{2}\norm{\bar{y}-y^\star}}\norm{\bar{y}-y^\star}\notag       \\
        \Downarrow  &\;{\text{Remainder of $\bar{y}$~\eqref{eq::errorbound::predictor} % (Corollary~\ref{coro::predictor})
        }}\notag\\
        %&{\leq} \frac{1}{2} \omega\, L \braces{L_1(L_1+1) \norm{x-x_0}^3 + \frac{1}{4}\Tilde{L}^2_2 \norm{x-x_0}^4 }\\
        {\leq} & \frac{1}{2} \omega\, \braces{L_2 L_1 \norm{x-x_0}^3 + \frac{1}{4}L_3\,L^2_1 \norm{x-x_0}^4 },\notag\\
        %\Downarrow & \;\norm{x-x_0}\le\bar r\notag\\
        {\leq} & \frac{\omega}{2}\parens{L_1L_2+\tfrac14 L_1^2L_3\,\bar r}
                 \norm{x-x_0}^3 = C\norm{x-x_0}^3
    \end{align}
    which concludes the proof.
\end{proof}

{Theorem~\ref{thm:accuracy} also yields a guarantee for the induced
feasible set approximation~\eqref{eq::approximate::implicitSet}.
\begin{myCorollary}[Feasible-set approximation error]\label{coro::set_error}
Suppose the assumptions of Theorem~\ref{thm:accuracy} hold. Then, for all
$x\in\mathcal B(x_0,\bar r)$, the following conditons are satisfied:
\begin{flalign*}
\mathrm{dist}\parens{y^\star(x),\,\mathcal Y}\le C\norm{x-x_0}^3,\;&\forall x\in\hat{\mathcal X}^{\mathrm{imp}}\setminus
\mathcal X^{\mathrm{imp}},\\
\mathrm{dist}\parens{y^\star(x),\,\mathbb R^{n}\!\setminus\mathcal Y}
\le C\norm{x-x_0}^3,\;&\forall x\in\mathcal X^{\mathrm{imp}}\setminus
\hat{\mathcal X}^{\mathrm{imp}}.
\end{flalign*}
\end{myCorollary}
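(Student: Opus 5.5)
The plan is to deduce both inclusions directly from Theorem~\ref{thm:accuracy}, using only the definitions of the two sets and of the distance function. Fix $x\in\mathcal B(x_0,\bar r)$. Since $\mathcal B(x_0,\bar r)\subseteq\mathcal U$, Theorem~\ref{thm:implicit_function} provides $y^\star(x)\in\mathcal V$ with $g(x,y^\star(x))=0$. Theorem~\ref{thm:accuracy} then gives $\norm{\hat y(x)-y^\star(x)}\le C\norm{x-x_0}^3$. Within the ball we interpret membership in $\mathcal X^{\mathrm{imp}}$ through the local implicit mapping: $x\in\mathcal X^{\mathrm{imp}}$ if and only if $y^\star(x)\in\mathcal Y$. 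This is justified by the local uniqueness in Theorem~\ref{thm:implicit_function}, since the only solution of $g(x,\cdot)=0$ in $\mathcal V$ is $y^\star(x)$. I would state this localization explicitly as the first step, because it is the one point that needs care. A feasible $y$ outside $\mathcal V$ could exist, so the corollary should be read with $\mathcal X^{\mathrm{imp}}$ taken relative to the branch $y^\star$ (equivalently, with $\mathcal Y$ replaced by $\mathcal Y\cap\mathcal V$). I expect this to be the main obstacle, and it is conceptual rather than technical.

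For the first claim (false flexibility), take $x\in\hat{\mathcal X}^{\mathrm{imp}}\setminus\mathcal X^{\mathrm{imp}}$.
\begin{itemize}
\item By the definition~\eqref{eq::approximate::implicitSet}, $x\in\hat{\mathcal X}^{\mathrm{imp}}$ means $\hat y(x)\in\mathcal Y$.
\item Since $\hat y(x)$ is one admissible point in the infimum defining the distance,
\[
\mathrm{dist}\parens{y^\star(x),\mathcal Y}\le\norm{y^\star(x)-\hat y(x)}\le C\norm{x-x_0}^3 .
\]
\end{itemize}
Here $x\notin\mathcal X^{\mathrm{imp}}$ is not even needed for the inequality. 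It only explains why the bound is meaningful: it says $y^\star(x)\notin\mathcal Y$ but lies within $C\norm{x-x_0}^3$ of $\mathcal Y$.

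For the second claim (lost flexibility), take $x\in\mathcal X^{\mathrm{imp}}\setminus\hat{\mathcal X}^{\mathrm{imp}}$.
\begin{itemize}
\item Since $x\in\mathcal X$ and $x\notin\hat{\mathcal X}^{\mathrm{imp}}$, the definition gives $\hat y(x)\notin\mathcal Y$, that is, $\hat y(x)\in\mathbb R^n\setminus\mathcal Y$.
\item Using $\hat y(x)$ as the admissible point,
\[
\mathrm{dist}\parens{y^\star(x),\mathbb R^n\setminus\mathcal Y}\le\norm{y^\star(x)-\hat y(x)}\le C\norm{x-x_0}^3 .
\]
\end{itemize}
Meanwhile $y^\star(x)\in\mathcal Y$ by the localization above. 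So the exact state is feasible but sits within $C\norm{x-x_0}^3$ of the boundary of $\mathcal Y$.

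Both bounds therefore reduce to a single triangle-type estimate through the distance infimum, combined with Theorem~\ref{thm:accuracy}. Beyond the localization caveat, no further analysis is required.
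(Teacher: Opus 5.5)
Your proof is correct and matches the paper's: both cases use $\hat y(x)$ as the witness point in the distance infimum together with the bound $\norm{\hat y(x)-y^\star(x)}\le C\norm{x-x_0}^3$ from Theorem~\ref{thm:accuracy}. Your localization caveat is a fair remark, but it only affects the interpretation ($y^\star(x)\notin\mathcal Y$ in the first case, $y^\star(x)\in\mathcal Y$ in the second), not the two inequalities, which need only $\hat y(x)\in\mathcal Y$ and $\hat y(x)\notin\mathcal Y$ respectively.
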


\begin{proof}
By Theorem~\ref{thm:accuracy},
$\norm{\hat y(x)-y^\star(x)}\le C\norm{x-x_0}^3$ on $\mathcal B(x_0,\bar r)$.
In the first case, $\hat y(x)\in\mathcal Y$ while $y^\star(x)\notin\mathcal Y$,
hence
$\mathrm{dist}(y^\star(x),\mathcal Y)\le\norm{y^\star(x)-\hat y(x)}
\le C\norm{x-x_0}^3$.
In the second case, $\hat y(x)\in\mathbb R^{n}\setminus\mathcal Y$ while
$y^\star(x)\in\mathcal Y$, hence
$\mathrm{dist}(y^\star(x),\mathbb R^{n}\!\setminus\mathcal Y)
\le\norm{y^\star(x)-\hat y(x)}\le C\norm{x-x_0}^3$.
\end{proof}

Hence misclassification can occur only where the exact state lies within
$C\norm{x-x_0}^3$ of the boundary of the operational limits. Coupling points
whose exact state is bounded away from this boundary are correctly classified
once $\norm{x-x_0}$ is sufficiently small.}

The next proposition provides an interpretation of the predictor-corrector approximation in the special case where \(g\) is quadratic.

\begin{myProposition}\label{prop::quadratic}
Suppose the assumptions of Theorem~\ref{thm:implicit_function} hold, and additionally assume that \(g:\mathbb R^m\times\mathbb R^n\to\mathbb R^n\) is quadratic. Then the predictor-corrector approximation \(\hat y(x)\) defined in \eqref{eq::predictor::corrector} coincides with the second-order Taylor approximation of the implicit mapping \(y^\star(x)\) around \(x_0\).
\end{myProposition}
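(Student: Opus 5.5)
The plan is to show that, when \(g\) is quadratic, both \(\hat y(x)\) and the second-order Taylor polynomial of \(y^\star\) at \(x_0\) reduce to the same explicit expression
\[
y_0+S^\star(x_0)\Delta x-\tfrac12 M_0^{-1}\,\mathcal H[z,z],
\qquad z:=\bigl(\Delta x,\,S^\star(x_0)\Delta x\bigr),
\]
where \(\Delta x:=x-x_0\) and \(M_0:=\partial_y g(x_0,y_0)\). Here \(\mathcal H\) is the second derivative of \(g\) with respect to \((x,y)\). It is a constant symmetric bilinear map because \(g\) is quadratic. The argument uses only exact Taylor expansions of \(g\) together with Theorem~\ref{thm:implicit_function}.

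\textbf{Step 1: exact expansion of the corrector residual.} Since \(g\) is quadratic, its Taylor expansion at \((x_0,y_0)\) terminates at second order:
\[
g(x_0+\delta x,\,y_0+\delta y)=g(x_0,y_0)+\partial_x g(x_0,y_0)\,\delta x+M_0\,\delta y+\tfrac12\mathcal H\bigl[(\delta x,\delta y),(\delta x,\delta y)\bigr].
\]
Set \(\delta x=\Delta x\) and \(\delta y=\bar y(x)-y_0=S^\star(x_0)\Delta x\), using \eqref{eq::predictor} and the definition of \(S^\star\). We have \(g(x_0,y_0)=0\), and the two linear terms cancel because \(M_0S^\star(x_0)=-\partial_x g(x_0,y_0)\). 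Hence
\[
g\bigl(x,\bar y(x)\bigr)=\tfrac12\mathcal H[z,z].
\]
Substituting into \eqref{eq::predictor::corrector} gives
\[
\hat y(x)=y_0+S^\star(x_0)\Delta x-\tfrac12 M_0^{-1}\mathcal H[z,z],
\]
which is the claimed expression.

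\textbf{Step 2: second derivative of the implicit map.} By Theorem~\ref{thm:implicit_function}, \(g(x,y^\star(x))=0\) on \(\mathcal U\). Since \(g\) is polynomial, \(y^\star\) is \(C^2\) there. Differentiate this identity twice at \(x_0\) in the direction \(\Delta x\), using the chain rule for the composition \(x\mapsto(x,y^\star(x))\). The first derivative of that composition is \((I,S^\star(x_0))\Delta x=z\). The second-order chain rule therefore gives
\[
\mathcal H[z,z]+M_0\,D^2y^\star(x_0)[\Delta x,\Delta x]=0,
\]
since \(g\) has constant second derivative and no third derivative. Because \(M_0\) is nonsingular by Assumption~\ref{ass::invertible},
\[
D^2y^\star(x_0)[\Delta x,\Delta x]=-M_0^{-1}\mathcal H[z,z].
\]

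\textbf{Step 3: comparison.} The second-order Taylor polynomial of \(y^\star\) at \(x_0\) is
\[
y_0+S^\star(x_0)\Delta x+\tfrac12 D^2y^\star(x_0)[\Delta x,\Delta x].
\]
By Step 2 this equals the expression for \(\hat y(x)\) from Step 1, term by term.

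The main obstacle is Step 2: the second-order chain rule has to be applied carefully. In particular, the term \(\partial_{(x,y)}g\cdot(0,D^2y^\star)\) contributes only \(M_0D^2y^\star\), because the \(x\)-component of the second derivative of \((x,y^\star(x))\) vanishes. One must also check that the bilinear form is evaluated at exactly the same argument \(z\) as in Step 1. The rest is bookkeeping.
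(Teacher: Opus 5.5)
Your proposal is correct and follows the same route as the paper's proof. The quadratic $g$ has an exact second-order expansion, the tangential predictor cancels its constant and linear terms, and the corrector maps the purely quadratic residual $g(x,\bar y(x))$ to the second-order term. Your Step~2 differentiates $g(x,y^\star(x))=0$ twice to obtain $D^2y^\star(x_0)[\Delta x,\Delta x]=-M_0^{-1}\mathcal H[z,z]$, which makes explicit the identification that the paper only asserts.
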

\begin{proof}
For a quadratic function \(g\), the Taylor expansion of \(g(x,y)\) about the feasible point \((x_0,y_0)\) is exact up to second order. By construction, the tangential predictor \(\bar y\) cancels the constant and first-order terms in this expansion. Hence, the residual \(g(x,\bar y)\) contains only second-order terms in \(\Delta x:=x-x_0\).

The corrector step in \eqref{eq::predictor::corrector} is obtained by premultiplying this residual by \(-[\partial_y g(x_0,y_0)]^{-1}\), thereby producing exactly the second-order correction to the tangential predictor. Therefore, \(\hat y(x)\) coincides with the second-order Taylor approximation of the implicit mapping \(y^\star(x)\) around \(x_0\).
\end{proof}
{
\begin{myRemark}[Regularity of the base point]\label{rem::numerical_regularities}
Assumption~\ref{ass::invertible} is a local regularity condition imposed only at the selected feasible base point \((x_0,y_0)\). In implementation, the corrector is computed by solving linear systems with \(M_0=\partial_y g(x_0,y_0)\), whose numerical regularity can be assessed using its smallest singular value, condition number, or factorization pivots. Solvability~\cite[Thm.~2]{wang2016necessary} or voltage stability indicators~\cite[Thm.~2]{aolaritei2018hierarchical} may also be used as auxiliary screening tools. If \(M_0\) is singular or severely ill-conditioned, the base point can be rejected and the surrogate reconstructed around a better-conditioned feasible operating point or over a smaller neighborhood. A pseudo-inverse correction may also be used as a numerical fallback, although such a regularized step lies outside the theoretical predictor-corrector construction analyzed here.
\end{myRemark}
}

%This section discusses the practical implementation of the predictor-corrector aggregation and the numerical safeguards used when the Jacobian in Assumption~\ref{ass::invertible} becomes ill-conditioned. The discussion is presented for a single subsystem, with the subsystem index omitted for notational simplicity. 

\subsection{Illustrative Example: IEEE 33-Bus System}

We first illustrate the effectiveness of the proposed method using the IEEE 33-bus distribution network~\cite{baran2002network}. The system is evaluated under both TSO- and DSO-managed coordination models, where the predictor–corrector aggregation method is compared against two widely used linear surrogates.

\begin{figure}[htb!]
    % first row
    \begin{subfigure}{0.22\textwidth}
        \centering
        \caption{DSO-managed model}
        \label{fig::DSO::linear}
        \includegraphics[width=\columnwidth]{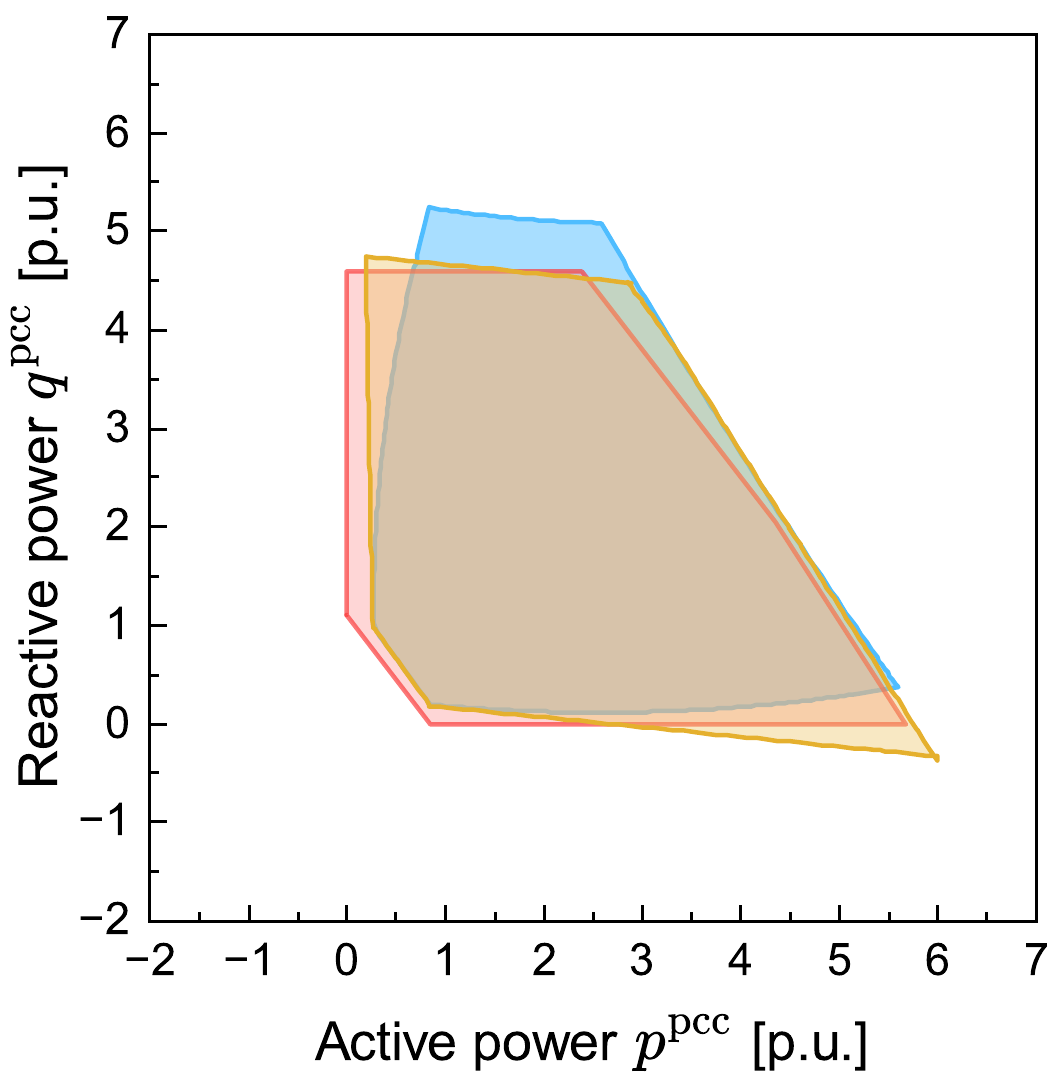}
    \end{subfigure}
    \hfill
    \begin{subfigure}{0.22\textwidth}
        \centering
        \caption{TSO-managed model}
        \label{fig::TSO::linear}
        \includegraphics[width=\columnwidth]{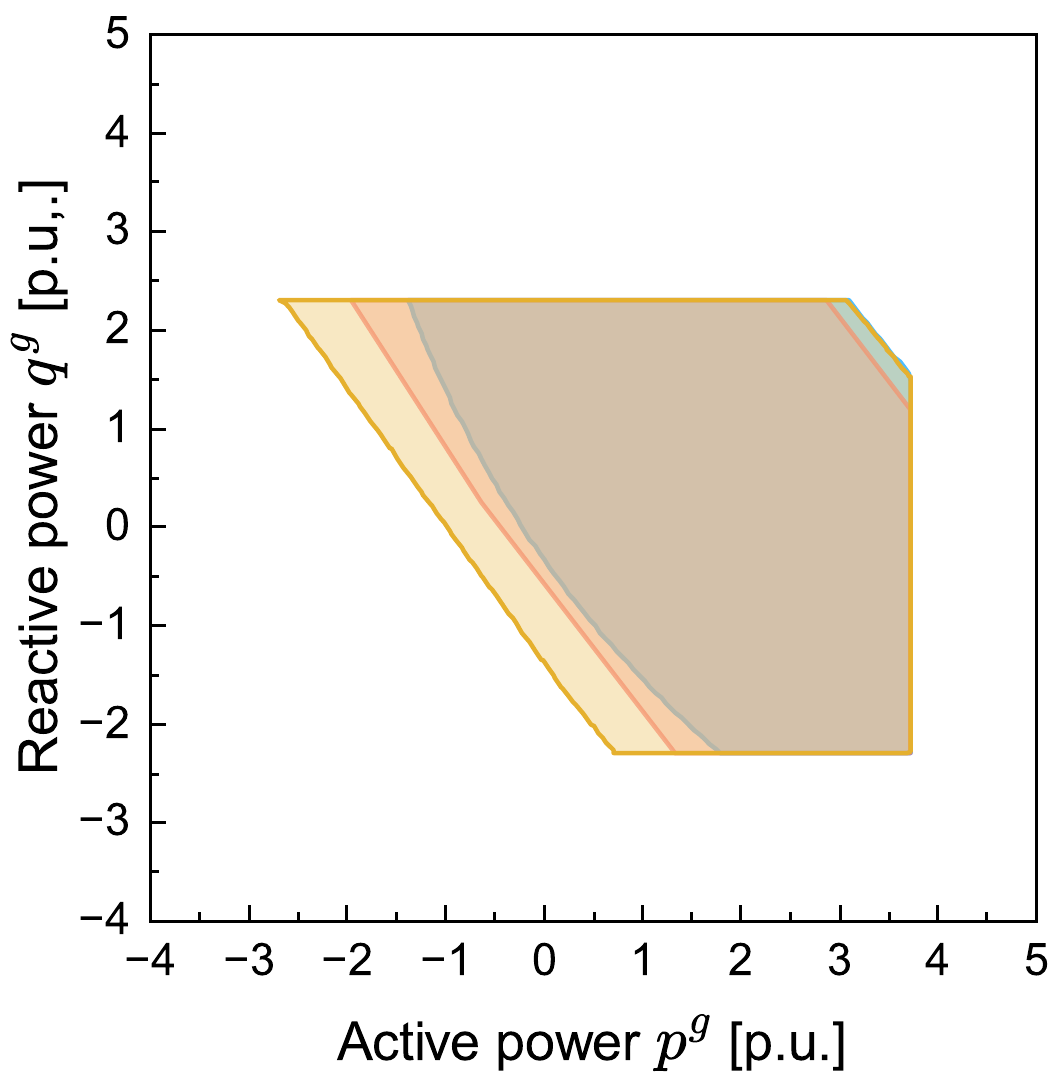}
    \end{subfigure}
    \hfill
    \begin{subfigure}{0.22\textwidth}
        \centering
        \caption{DSO-relaxation}
        \label{fig::DSO::relaxation}
        \includegraphics[width=\columnwidth]{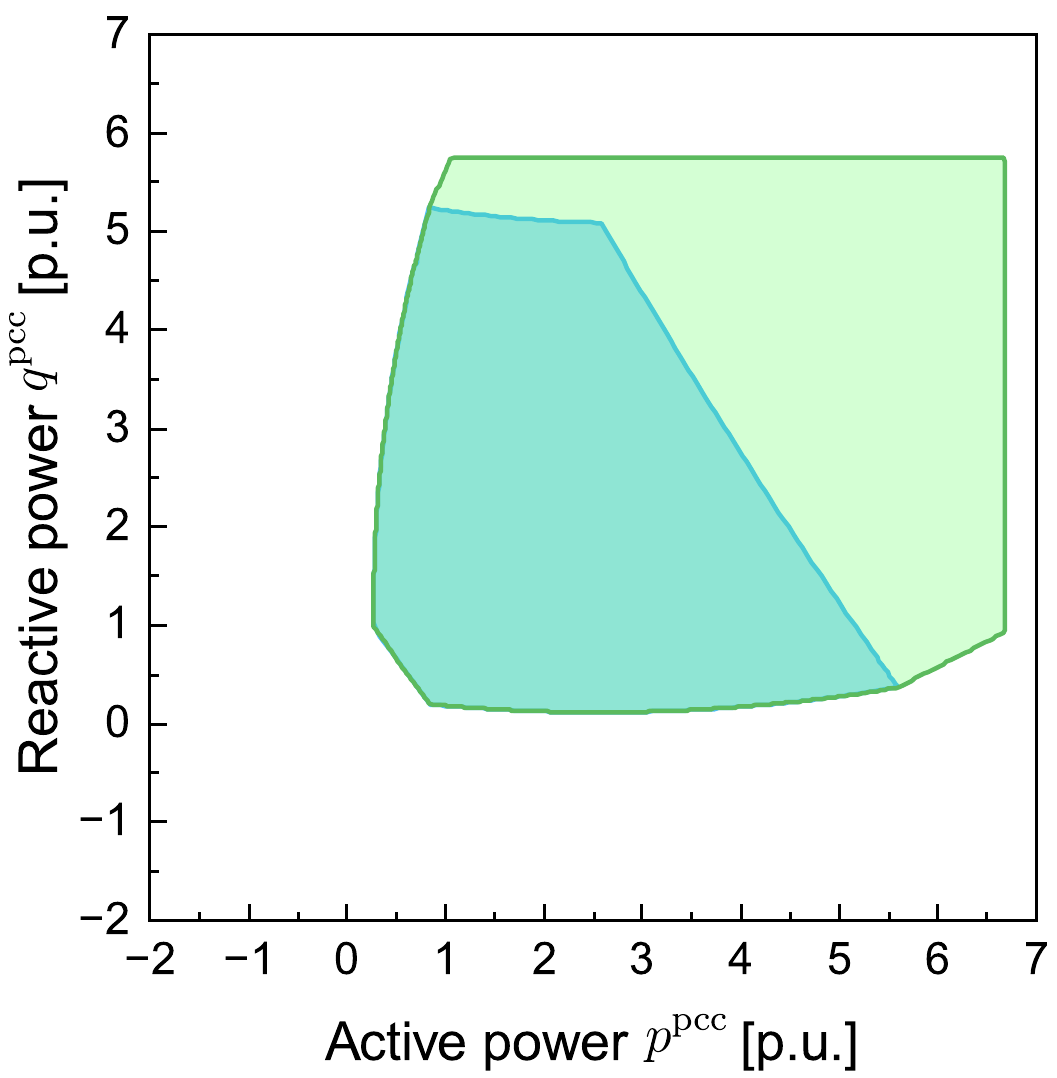}
    \end{subfigure}
    \hfill
    \begin{subfigure}{0.22\textwidth}
        \centering
        \caption{TSO-relaxation}
        \label{fig::TSO::relaxation}
        \includegraphics[width=\columnwidth]{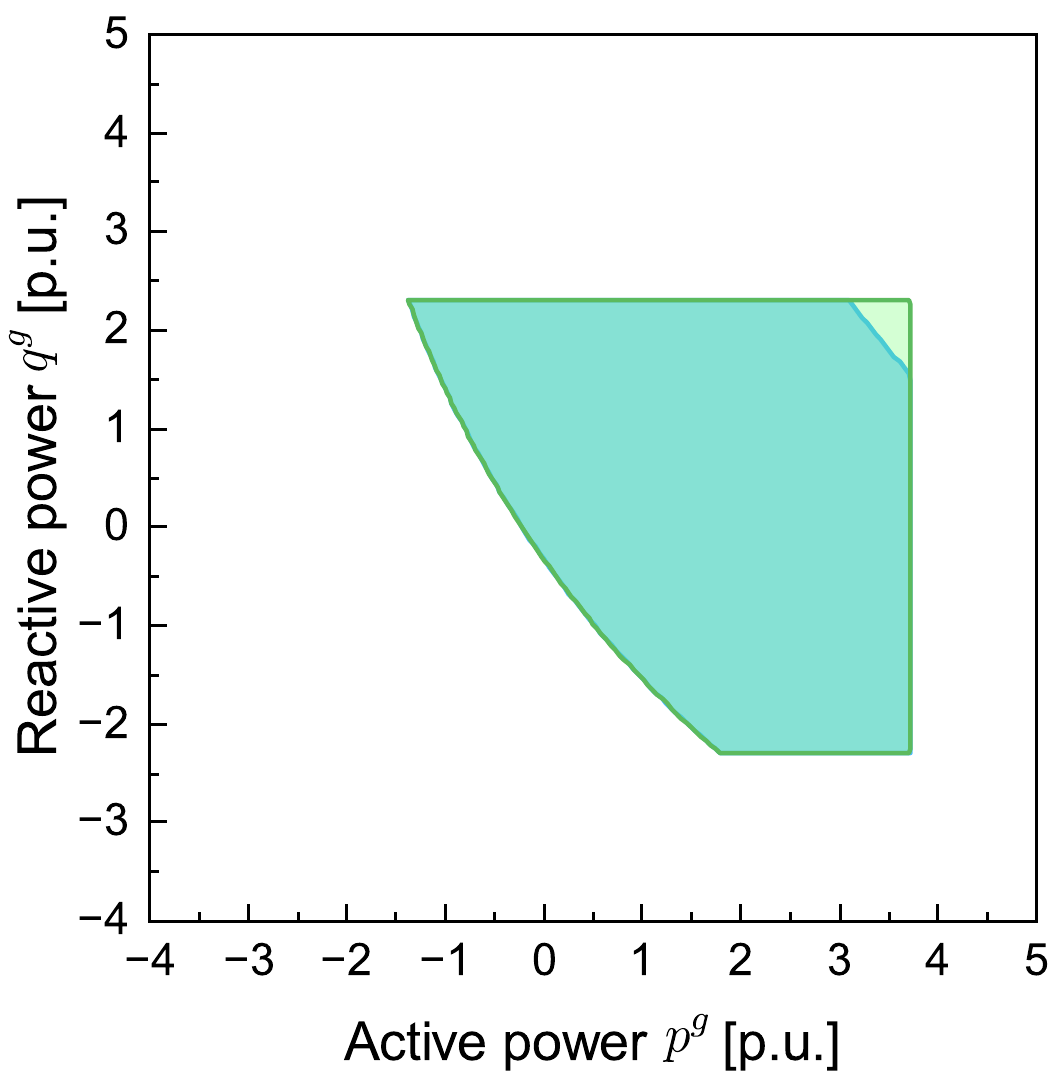}
    \end{subfigure}
    % \begin{subfigure}{0.22\textwidth}
    %     \centering
    %     \caption{DSO-restriction}
    %     \label{fig::DSO::restriction}
    %     \includegraphics[width=\columnwidth]{images/DSO_restriction.pdf}
    % \end{subfigure}
    % \hfill
    % \begin{subfigure}{0.22\textwidth}
    %     \centering
    %     \caption{TSO-restriction}
    %     \label{fig::TSO::restriction}
    %     \includegraphics[width=\columnwidth]{images/TSO_restriction.pdf} 
    % \end{subfigure}
    \caption{Comparisons of flexibility. The red and orange regions correspond to linear surrogate models, the green region to the convex relaxation, and the exact region is blue.}    \label{fig::comparison::method}
\end{figure}

Conventional studies typically rely on linear models, such as the LinDistFlow approximation or first-order tangential predictors of the AC equations. As shown in Figs.~\ref{fig::DSO::linear} and~\ref{fig::TSO::linear}, these models capture the general orientation of the reference feasible region \(\mathcal X^{\mathrm{imp}}\) but do not accurately reproduce its shape. Figs.~\ref{fig::DSO::relaxation} and~\ref{fig::TSO::relaxation} show that the convex relaxation of the DistFlow model~\cite{farivar2013branch,gan2015exact} provides an outer approximation that contains the exact AC feasible set but also introduces an additional false-feasible region.

\begin{figure}[htb!]
%    \begin{subfigure}{0.01\textwidth}
%%        \includegraphics[width=\columnwidth]{images/legend_linear.pdf}
 %   \end{subfigure}
    % second row
    \centering
    \begin{subfigure}{0.35\textwidth}
        \centering
        \caption{DSO-managed model}
        \label{fig::DSO::corrector}        
        \includegraphics[width=\columnwidth]{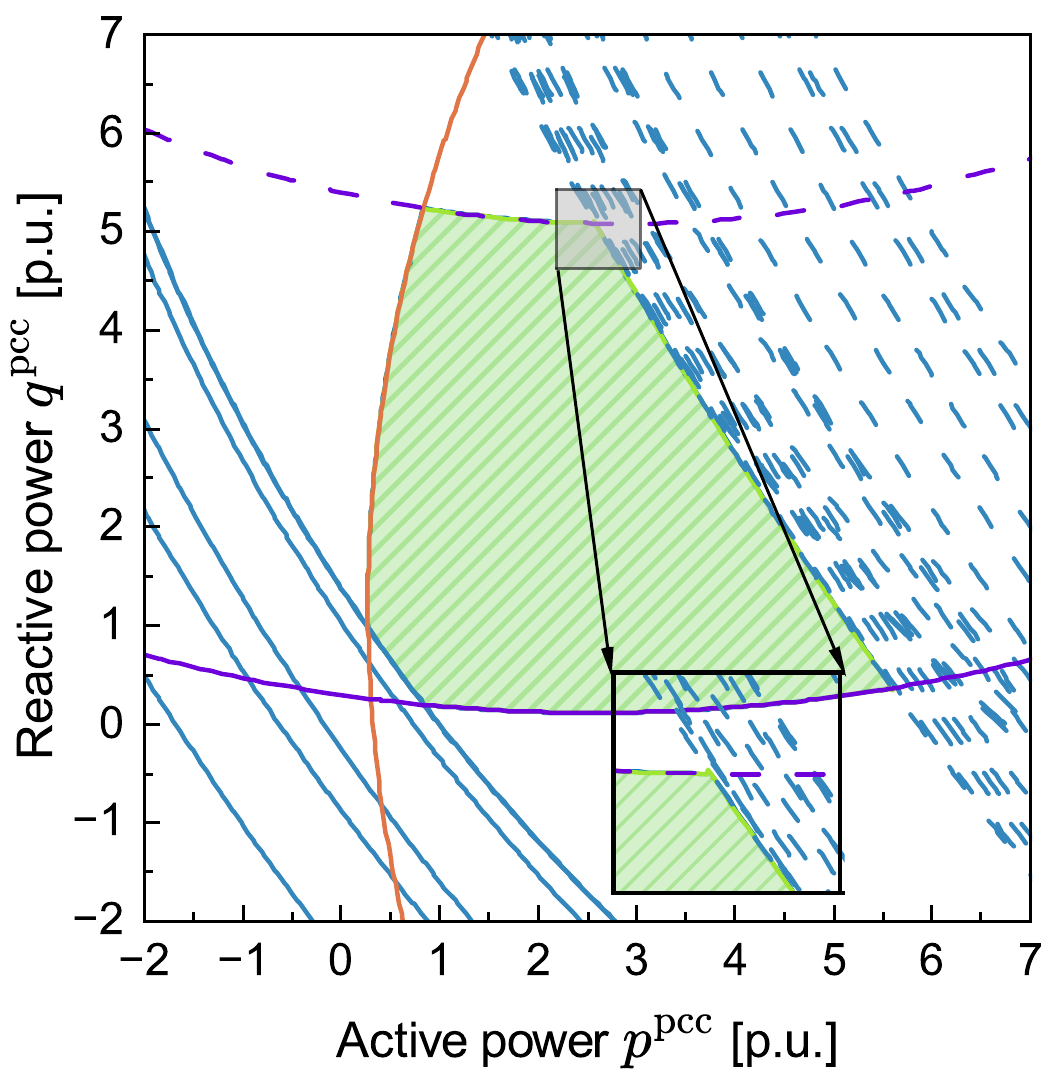}
    \end{subfigure}
    \hspace{40pt}
    \begin{subfigure}{0.35\textwidth}
        \centering
        \caption{TSO-managed model}
        \label{fig::TSO::corrector}
        \includegraphics[width=\columnwidth]{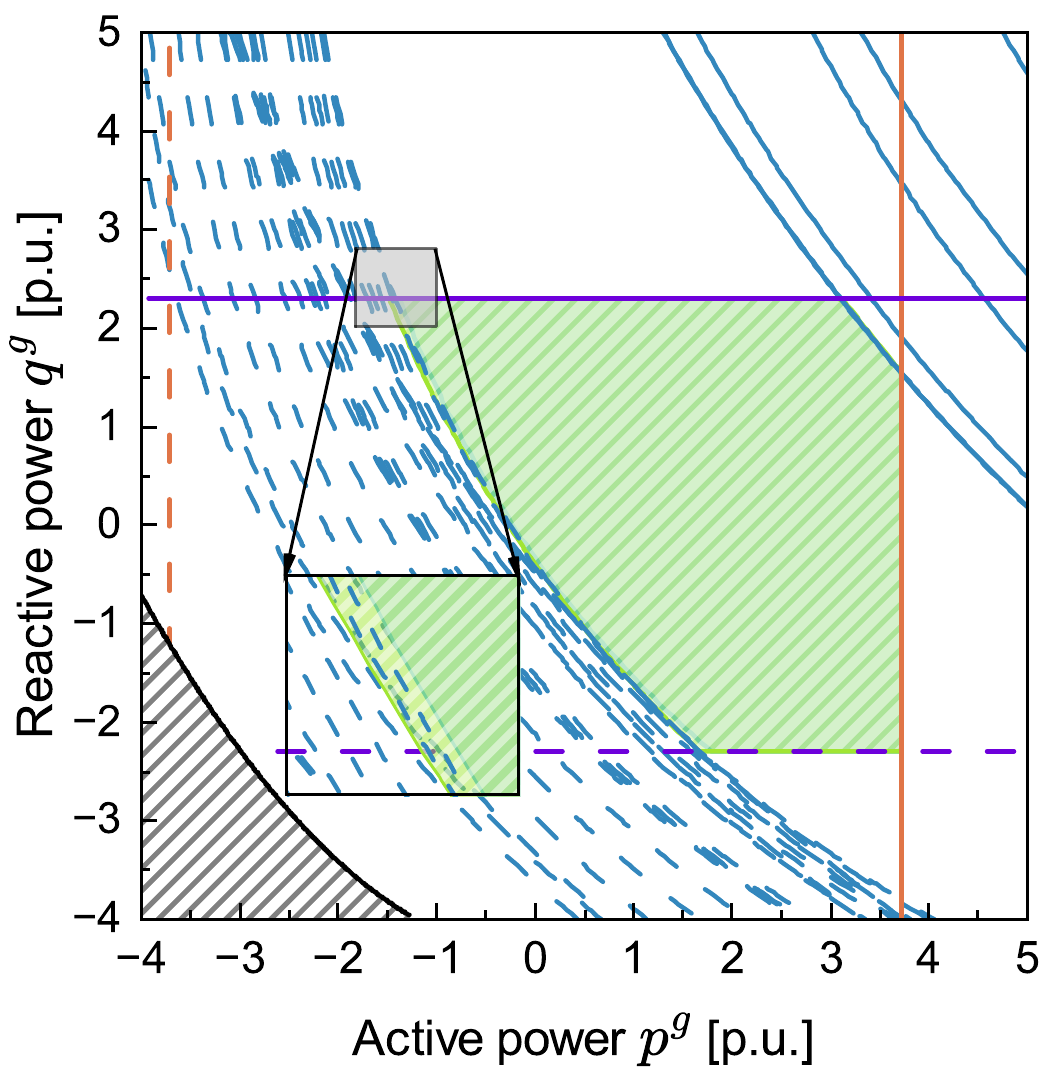}
    \end{subfigure}
    \caption{
    The proposed approximated set~\eqref{eq::approximate::implicitSet} vs. the exact feasible regions~\eqref{eq::implicit::set}. The solid and dashed curves give upper and lower limits on squared voltage magnitudes (blue), active power (red), and reactive power (purple) from constraint~\eqref{eq::pf::distribution::slimit}.}\label{fig::comparison::case33}
    \hfill
\end{figure}
\begin{figure}[htb!]    
    \centering
    %\begin{subfigure}{0.4\textwidth}
    %    \centering
    %    \vspace{1em}
    %    \includegraphics[width=\linewidth]{images/legend_bounds.pdf}
    %\end{subfigure}
    % third row
    \begin{subfigure}{0.35\textwidth}
        \centering
        \caption{DSO-managed model}        \label{fig::DSO::error}
        \includegraphics[width=\columnwidth]{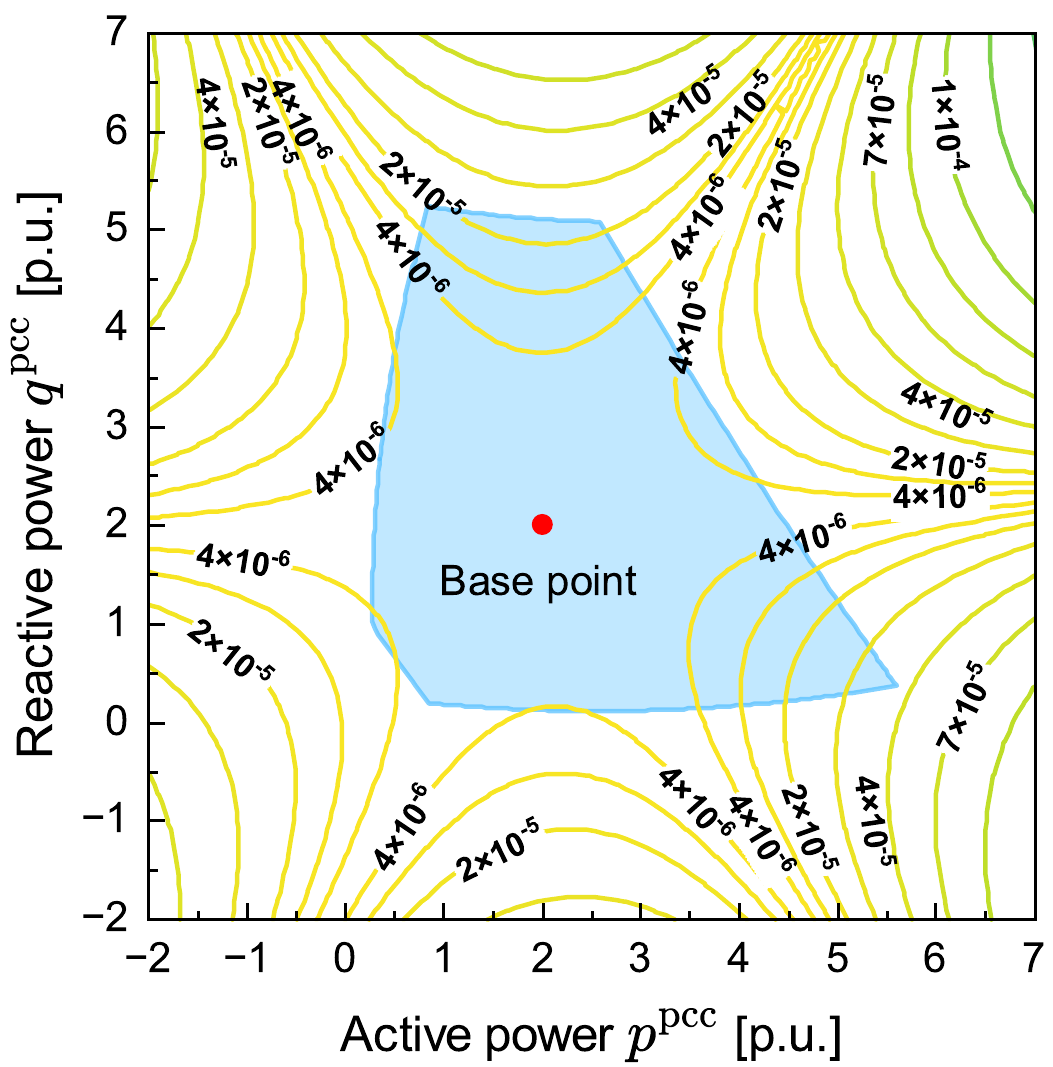}
    \end{subfigure}
    \hspace{40pt}
    \begin{subfigure}{0.4\textwidth}
         \centering
        \captionsetup{justification=raggedright,singlelinecheck=false,margin=5pt}
        \caption{TSO-managed model}        \label{fig::TSO::error}
        \includegraphics[width=\columnwidth]{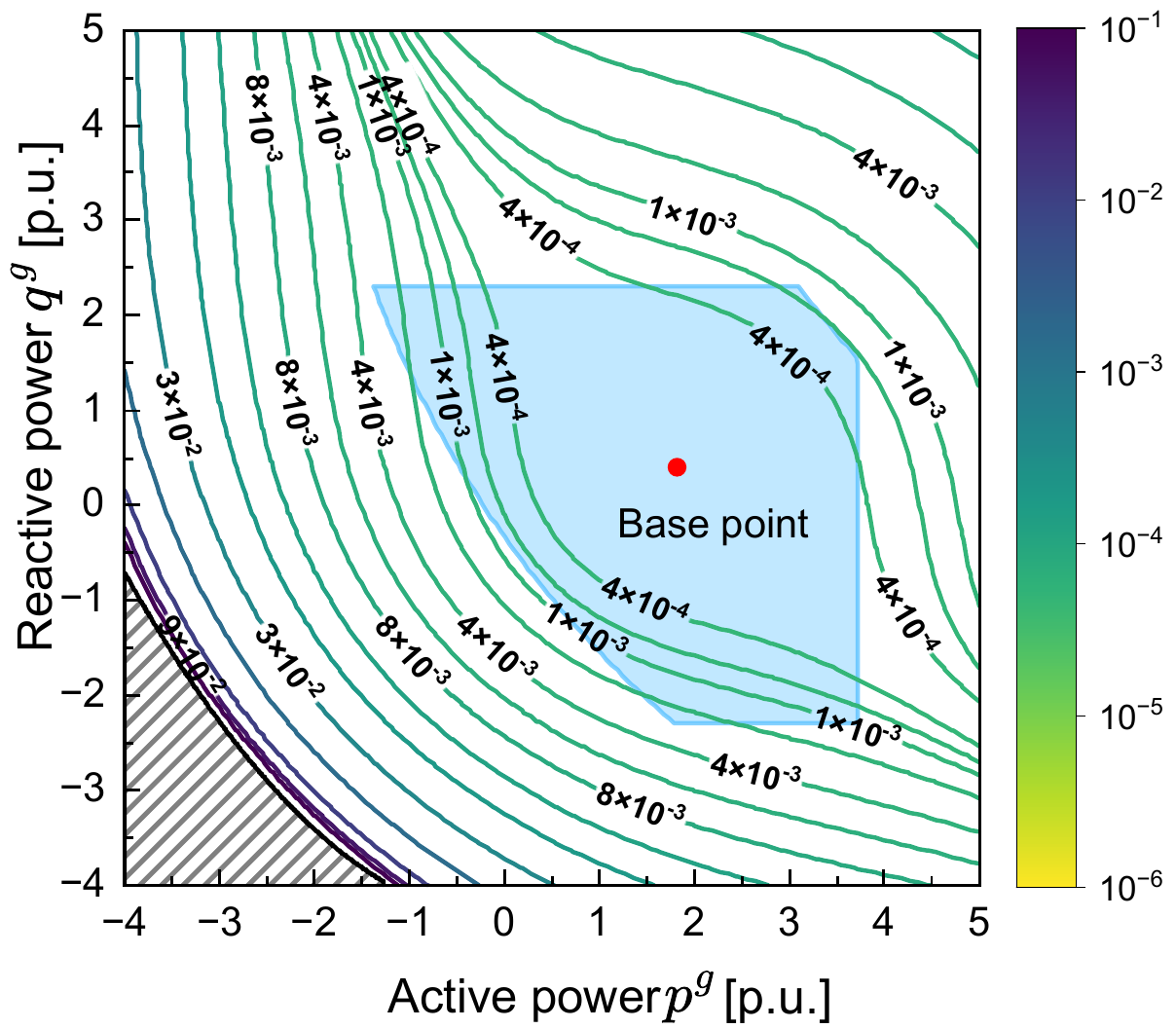}
    \end{subfigure}
    \caption{Error in voltage magnitude between the proposed approach~\eqref{eq::predictor::corrector} and exact AC power flow solutions.}
    \label{fig::illustrate}
\end{figure}
\begin{figure}[htb!]
    \centering
    \begin{subfigure}
        {0.35\textwidth}
        \centering
        \caption{DSO-managed model}
        \includegraphics[width=\columnwidth]{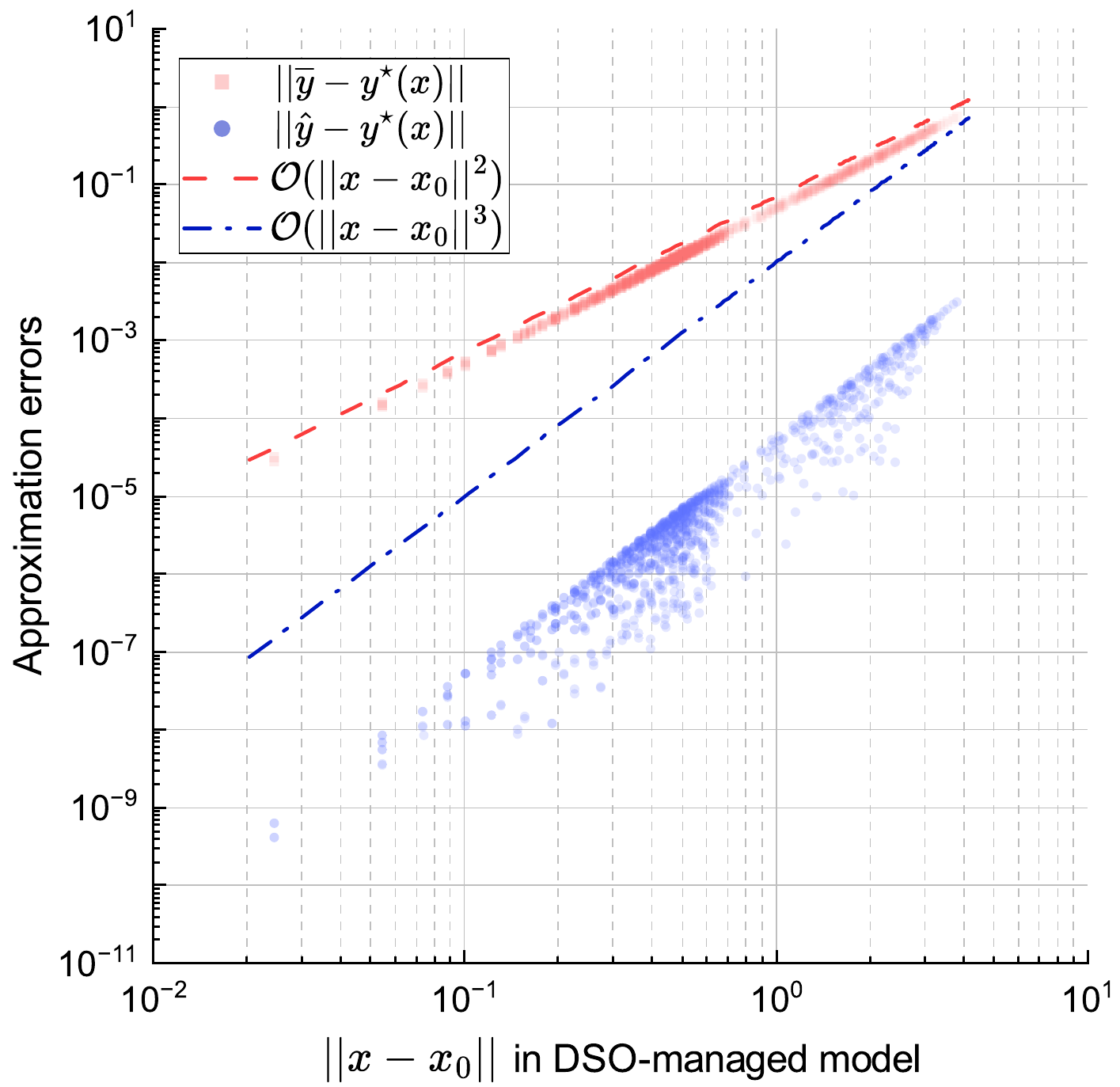}
        \label{fig::error::DSO}
    \end{subfigure}
    \hspace{40pt}
    \begin{subfigure}
        {0.35\textwidth}
        \centering
        \caption{TSO-managed model}
        \includegraphics[width=\columnwidth]{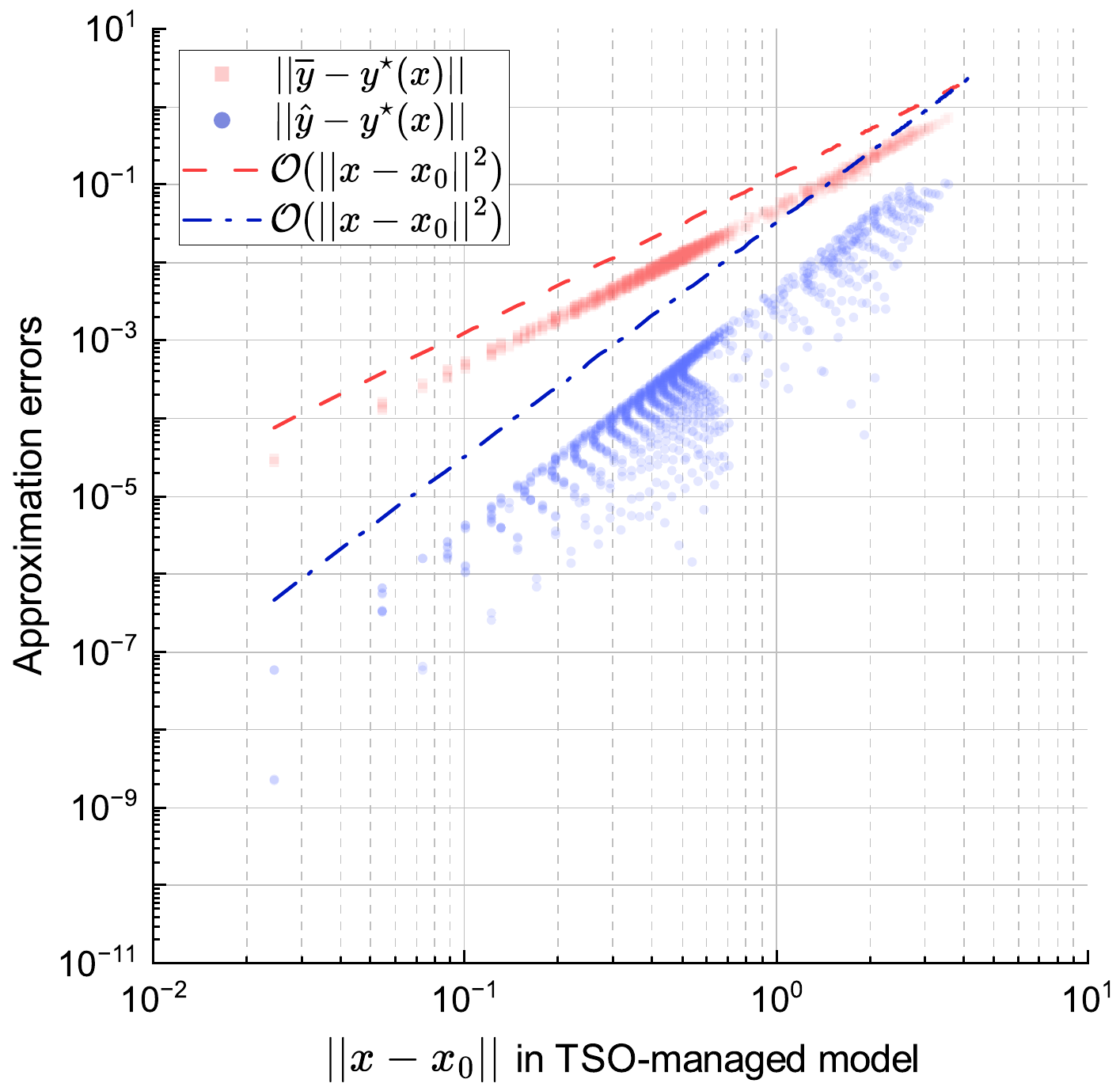}
        \label{fig::error::TSO}
    \end{subfigure}
    \caption{Error behavior of tangential predictor (red) and predictor–corrector (blue) approaches, with corresponding error upper bound (dotted lines).}
    \label{fig::errorbounds}
\end{figure}

In contrast, as shown in Fig.~\ref{fig::comparison::case33}, the proposed predictor–corrector method substantially improves the approximation. Fig.~\ref{fig::DSO::corrector} and Fig.~\ref{fig::TSO::corrector} depict the feasible regions obtained from the proposed set $\hat{\mathcal{X}}^\text{imp}$, showing close alignment with the exact AC feasible region. The improvement becomes more evident in Fig.~\ref{fig::DSO::error} and Fig.~\ref{fig::TSO::error}, which display the error of the voltage magnitudes for the proposed approximation. 
In the flexibility sets, the voltage error is mostly below $4\times10^{-6}$ for the DSO-managed model in Fig.~\ref{fig::DSO::error}, and below $1\times10^{-3}$ almost everywhere for the TSO-managed model in Fig.~\ref{fig::TSO::error}. 
Moreover, Fig.~\ref{fig::errorbounds} quantifies the approximation errors of both the tangential predictor and the predictor–corrector method, together with the theoretical error bounds derived in~\eqref{eq::errorbound::predictor}–\eqref{eq::errorbound::corrector}, with Lipschitz constants 
computed by~\cite[Proposition C.29]{lee2012smooth}.
%computed as Appendix~\ref{app::lipschitz}.

Since the AC power flow equations are quadratic, the predictor–corrector aggregation method effectively provides a second-order approximation of the implicit feasible mapping (Proposition~\ref{prop::quadratic}). This enables the method to capture boundary curvature and preserve feasibility with greater accuracy than purely linear surrogates, making it a powerful tool for representing distribution-level flexibility in hierarchical ITD coordination.

\subsection{Benchmarking Approximation Accuracy}

While the IEEE 33-bus system provides an illustrative comparison between methods, we construct a benchmark suite comprising 24 representative radial distribution system test cases with 7 meshed variants to systematically evaluate the accuracy and robustness of the proposed predictor-corrector method. These include all distribution system models from \matpower and a real-world distribution network, the KIT Campus Nord grid (\texttt{KIT Campus North})~\cite{gonzalez2021probabilistic}, along with their meshed variants. yl{All optimization problems are implemented in MATLAB and solved with \ipopt~\cite{wachter2006implementation} on a MacBook Pro with an M3 Pro chip.}

\begin{figure}[htbp!]
    \centering
    \includegraphics[width=0.4\linewidth]{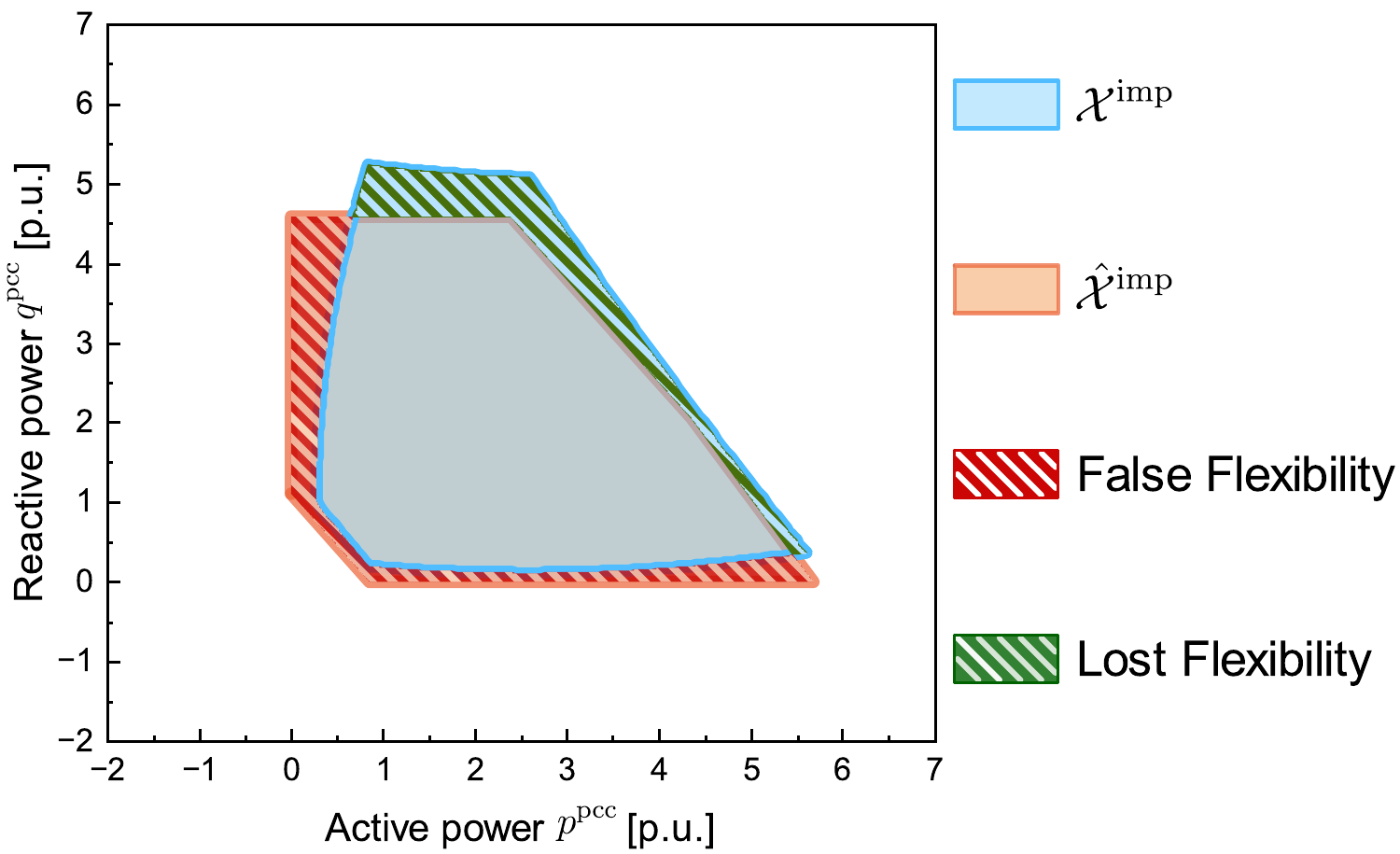}
    \caption{Illustration of the benchmark metrics.}   \label{fig::fractions}\vspace{-5pt}
\end{figure}

%Consider the nonlinear system associated with a single distribution grid~\eqref{eq::nonlinear::model},
%\begin{equation*}
%    g(x,y)=0,\qquad x\in\mathcal X\subseteq\mathbb R^m,\quad y\in\mathcal Y\subseteq\mathbb R^n.
%\end{equation*}
Consider the nonlinear system associated with a single distribution grid~\eqref{eq::nonlinear::model}.
Accuracy is quantified by comparing the approximate feasible set~\eqref{eq::approximate::implicitSet},
\[
\hat{\mathcal X}^{\mathrm{imp}}
=
\left\{
x\in\mathcal X
\,\middle|\,
\hat y(x)\in\mathcal Y
\right\},
\]
with the exact implicit feasible set~\eqref{eq::implicit::set},
\[
\mathcal X^{\mathrm{imp}}
=
\left\{
x\in\mathcal X
\,\middle|\,
\exists\,y\in\mathcal Y
\text{ such that }g(x,y)=0
\right\}.
\]
As shown in Fig.~\ref{fig::fractions}, two complementary performance metrics are considered:
\begin{itemize}
    \item \textbf{False Flexibility [\%]}: the number of uniformly sampled points in
    \(\hat{\mathcal X}^{\mathrm{imp}}\setminus\mathcal X^{\mathrm{imp}}\),
    normalized by the number of sampled points in
    \(\mathcal X^{\mathrm{imp}}\);
    \item \textbf{Lost Flexibility [\%]}: the number of uniformly sampled points in
    \(\mathcal X^{\mathrm{imp}}\setminus\hat{\mathcal X}^{\mathrm{imp}}\),
    normalized by the number of sampled points in
    \(\mathcal X^{\mathrm{imp}}\).
\end{itemize}
The same set of coupling points, sampled uniformly from a common bounding box, is used for all methods.
For each sampled coupling point \(x\), membership in
\(\mathcal X^{\mathrm{imp}}\) is evaluated by fixing \(x\), solving the full nonlinear AC feasibility problem for \(y\), and checking whether \(y\in\mathcal Y\). A sampled point is classified as feasible if the equality constraints~\eqref{eq::pf::distribution::uslack}--\eqref{eq::pf::distribution::vlimit} are satisfied up to the prescribed numerical tolerance and all operational constraints, including~\eqref{eq::pf::distribution::slimit}, are satisfied.
Note that the false flexibility may exceed \(100\%\) when the false-feasible region is larger than the exact feasible region.

By Corollary~\ref{coro::set_error}, both metrics count sampled points in the symmetric difference of $\hat{\mathcal X}^{\mathrm{imp}}$ and $\mathcal X^{\mathrm{imp}}$, which the theory predicts to concentrate in $C\norm{x-x_0}^3$-thin band around the exact boundary of $\mathcal Y$. The benchmarks below quantify this misclassification empirically.

Table~\ref{tab::aggregation-results} compares five methods: the LinDistFlow approximation (LDS), the enhanced DC approximation (EDC), the convex relaxation of the DistFlow model (CR), the tangential predictor (TP), and the proposed predictor-corrector method (PC). The feasible region obtained using convex relaxation is an outer approximation of the exact implicit feasible set and therefore overestimates the available flexibility. This behavior is evident in Table~\ref{tab::aggregation-results}, where CR has zero lost flexibility in all applicable radial cases but large false flexibility due to its false-feasible outer region. By contrast, the proposed method consistently yields the lowest false flexibility and lost flexibility values across the benchmark networks, indicating the closest agreement with the nonlinear AC feasible region.

%These static benchmarks establish the advantages of the predictor–corrector method in approximating nonlinear feasible regions. 
Having demonstrated the accuracy of the predictor-corrector method in the static benchmarks, the next section extends the analysis to a multiperiod setting, where temporal coupling and time-varying demand further complicate flexibility aggregation and coordination.

\begin{table*}[t]
\centering
\setlength{\tabcolsep}{4pt}
\setlength{\extrarowheight}{-1pt}
\caption{Approximation errors: infeasibility \& flexibility loss.}
\label{tab::aggregation-results}
\begin{threeparttable}
\begin{tabular*}{\textwidth}{@{\extracolsep{\fill}}lc rrrrr rrrrr@{}}
\toprule
\multirow{2}{*}{\textbf{Cases}} & \multirow{2}{*}{\textbf{Topology}} & \multicolumn{5}{c}{\textbf{False Flexibility [\%]}} & \multicolumn{5}{c}{\textbf{Lost Flexibility [\%]}} \\
\cmidrule(lr){3-7} \cmidrule(lr){8-12}
& & LDS & EDC & CR & TP & PC & LDS & EDC & CR & TP & PC \\
\midrule
\texttt{case10ba} & radial & 5.14 & 4.58 & 161.02 & 4.22 & 0.04 & 5.63 & 2.87 & 0.00 & 2.12 & 0.04 \\
\texttt{case12da} & radial & 2.18 & 2.90 & 67.44 & 3.25 & 0.00 & 6.45 & 3.81 & 0.00 & 2.12 & 0.03 \\
\texttt{case15da} & radial & 3.78 & 4.13 & 65.22 & 4.29 & 0.01 & 10.77 & 6.48 & 0.00 & 3.46 & 0.01 \\
\texttt{case15nbr} & radial & 2.41 & 2.81 & 70.31 & 2.99 & 0.00 & 9.12 & 6.22 & 0.00 & 2.99 & 0.00 \\
\texttt{case17me} & radial & 3.81 & 4.23 & 164.99 & 3.48 & 0.00 & 6.10 & 2.22 & 0.00 & 1.90 & 0.01 \\
\texttt{case18nbr} & radial & 2.39 & 2.96 & 68.89 & 3.36 & 0.00 & 10.02 & 6.35 & 0.00 & 3.06 & 0.01 \\
\texttt{case22} & radial & 0.99 & 1.31 & 55.30 & 1.41 & 0.01 & 5.25 & 3.39 & 0.00 & 1.32 & 0.01 \\
\texttt{case28da} & radial & 7.74 & 7.96 & 146.87 & 7.81 & 0.01 & 11.30 & 6.57 & 0.00 & 4.08 & 0.04 \\
\texttt{case33bw} & radial & 5.22 & 5.23 & 171.12 & 4.77 & 0.03 & 6.99 & 4.14 & 0.00 & 2.42 & 0.00 \\
\texttt{case33bw} & mesh & -- & 3.56 & -- & 3.96 & 0.00 & -- & 6.63 & -- & 3.43 & 0.01 \\
\texttt{case33mg} & radial & 5.51 & 5.41 & 204.76 & 4.84 & 0.00 & 6.56 & 3.91 & 0.00 & 2.35 & 0.01 \\
\texttt{case33mg} & mesh & -- & 3.87 & -- & 4.23 & 0.01 & -- & 6.88 & -- & 3.68 & 0.01 \\
\texttt{case34sa} & radial & 4.02 & 3.93 & 28.69 & 3.91 & 0.00 & 11.56 & 7.82 & 0.00 & 3.94 & 0.01 \\
\texttt{case38si} & radial & 8.18 & 8.01 & 198.12 & 7.07 & 0.04 & 10.24 & 7.29 & 0.00 & 4.06 & 0.11 \\
\texttt{case51ga} & radial & 6.83 & 6.25 & 171.91 & 4.82 & 0.04 & 5.47 & 2.70 & 0.00 & 1.69 & 0.03 \\
\texttt{case51he} & radial & 2.62 & 3.08 & 65.58 & 3.09 & 0.00 & 8.88 & 6.19 & 0.00 & 3.18 & 0.01 \\
\texttt{case69} & radial & 10.49 & 10.74 & 124.58 & 4.35 & 0.07 & 2.54 & 1.49 & 0.00 & 0.73 & 0.03 \\
\texttt{case74ds} & radial & 2.98 & 3.51 & 79.07 & 3.73 & 0.01 & 6.09 & 4.46 & 0.00 & 2.83 & 0.01 \\
\texttt{case85} & radial & 14.28 & 13.41 & 277.33 & 11.69 & 0.00 & 12.01 & 7.07 & 0.00 & 4.41 & 0.00 \\
\texttt{case94pi} & radial & 14.35 & 13.22 & 353.82 & 11.39 & 0.06 & 11.19 & 7.07 & 0.00 & 4.03 & 0.08 \\
\texttt{case118zh} & radial & 17.22 & 15.99 & 228.94 & 15.46 & 0.82 & 14.51 & 10.25 & 0.00 & 6.55 & 0.05 \\
\texttt{case118zh} & mesh & -- & 6.99 & -- & 7.07 & 0.05 & -- & 6.26 & -- & 3.90 & 0.03 \\
\texttt{case136ma} & radial & 10.27 & 8.77 & 83.39 & 7.08 & 0.22 & 11.10 & 7.84 & 0.00 & 4.96 & 0.27 \\
\texttt{case136ma} & mesh & -- & 4.67 & -- & 3.63 & 0.00 & -- & 6.88 & -- & 3.58 & 0.18 \\
\texttt{case141} & radial & 4.98 & 5.00 & 111.33 & 4.74 & 0.02 & 8.06 & 4.39 & 0.00 & 2.75 & 0.02 \\
\texttt{case533mt\_hi} & radial & 4.47 & 2.75 & 419.33 & 2.11 & 0.18 & 5.69 & 4.35 & 0.00 & 1.64 & 0.02 \\
\texttt{case533mt\_hi} & mesh & -- & 2.30 & -- & 2.31 & 0.20 & -- & 3.67 & -- & 1.78 & 0.00 \\
\texttt{case533mt\_lo} & radial & 5.26 & 3.90 & 465.99 & 2.08 & 0.37 & 2.81 & 2.40 & 0.00 & 1.64 & 0.00 \\
\texttt{case533mt\_lo} & mesh & -- & 3.46 & -- & 2.15 & 0.25 & -- & 2.36 & -- & 1.94 & 0.01\\
\texttt{KIT-CN} & radial & 3.61 & 2.35 & 48.34 & 0.93 & 0.04 & 3.86 & 2.66 & 0.00 & 1.01 & 0.01 \\
\texttt{KIT-CN} & mesh & -- & 1.75 & -- & 1.11 & 0.48 & -- & 1.82 & -- & 0.83 & 0.00 \\
\bottomrule
\end{tabular*}
\begin{tablenotes}[flushleft]
\item[*] \scriptsize LDS, EDC, CR, TP, and PC denote, respectively, LinDistFlow, enhanced DC, convex DistFlow relaxation, tangential predictor, and the proposed predictor-corrector method.
\end{tablenotes}
\end{threeparttable}
\end{table*}

\section{Multiperiod Coordination Problems}
\label{sec::multiperiod}

Building on the spatial decomposition achieved by the predictor–corrector aggregation method, we now extend the framework to multiperiod settings. In this case, the storage \acrshort{der}s introduce intertemporal coupling, linking decisions across successive time steps. We demonstrate how implicit feasible sets can be constructed in parallel for each period while temporal consistency is enforced through coupling constraints, enabling efficient spatio-temporal decomposition.

\subsection{Hierarchical Formulation}
Consider a time horizon $\mathcal{K} = \{1,\dots, N-1\}$. At each time step $k\in\mathcal{K}$, we collect the coupling variables and local variables into 
$$x_{k}  =  \{x_{\ell\hspace{-0.5pt} \mid \hspace{-0.5pt}k}\}_{\ell\in\mathcal{S^\text{D}}} \text{ and }  y_{k}  =  \{y_{\ell\hspace{-0.5pt} \mid \hspace{-0.5pt}k}\}_{\ell\in\mathcal{S}}.$$
Stacking them across the horizon yields
$$x      =  \{x_{k}\}_{k\in\mathcal{K}} \text{ and }                       y     =  \{y_{k}\}_{k\in\mathcal{K}}.$$

In the DSO-managed model, the \acrshort{tso} has no access to individual DER decisions within distribution systems. Temporal coupling therefore involves only local variables, arising for instance from \acrfull{soc} dynamics or ramping-rate limits of DERs over the horizon. As a result, the hierarchical problem~\eqref{eq::opt::original} can be reformulated as
\begin{subequations}
    \label{eq::mopt::original}
    \begin{align}
        \min_{x,y}\,\sum_{k\in\mathcal{K}}\Big\{ f_{0\hspace{-0.5pt} \mid \hspace{-0.5pt}k}(x_{k},y_{0\hspace{-0.5pt} \mid \hspace{-0.5pt}k})+\sum_{\ell\in\mathcal{S}^\text{D}}f_{\ell\hspace{-0.5pt} \mid \hspace{-1pt}k}(x_{\ell\hspace{-0.5pt} \mid \hspace{-0.5pt}k}, y_{\ell\hspace{-0.5pt} \mid \hspace{-0.5pt}k})\Big\}\label{eq::mopt::original::cost}
    \end{align}
    subject to
    \begin{flalign}
        &&g_{0\hspace{-0.5pt} \mid \hspace{-0.5pt}k}(x_{k},y_{0\hspace{-0.5pt} \mid \hspace{-0.5pt}k})=& \,0,\, y_{0\hspace{-0.5pt} \mid \hspace{-0.5pt}k}\in\mathcal{Y}_{0\hspace{-0.5pt} \mid \hspace{-0.5pt}k},&\forall k\hspace{-2.5pt}\in\hspace{-2.5pt}\mathcal{K} \label{eq::mopt::original::master}\\&&
        g_{\ell\hspace{-0.5pt} \mid \hspace{-0.5pt}k}(x_{\ell\hspace{-0.5pt} \mid \hspace{-0.5pt}k},y_{\ell\hspace{-0.5pt} \mid \hspace{-0.5pt}k})=& \,0, &\forall\ell\hspace{-2.5pt}\in\hspace{-2.5pt}\mathcal{S}^{\text{D}},\forall k\hspace{-2.5pt}\in\hspace{-2.5pt}\mathcal{K}  \label{eq::mopt::original::worker}\\&&
        (x_{\ell\hspace{-0.5pt} \mid \hspace{-0.5pt}k},y_{\ell\hspace{-0.5pt} \mid \hspace{-0.5pt}k})\in&\,\mathcal{X}_{\ell\hspace{-0.5pt} \mid \hspace{-0.5pt}k}\times \mathcal{Y}_{\ell\hspace{-0.5pt} \mid \hspace{-0.5pt}k}, \hspace{-7pt} & \forall \ell \hspace{-2.5pt}\in\hspace{-2.5pt}\mathcal{S}^{\text{D}},\forall k\hspace{-2.5pt}\in\hspace{-2.5pt}\mathcal{K} \label{eq::mopt::original::set}\\&&
        M_{\ell\hspace{-0.5pt} \mid \hspace{-0.5pt}k+1}\,y_{\ell\hspace{-0.5pt} \mid \hspace{-0.5pt}k+1}=& \, A_{\ell\hspace{-0.5pt} \mid \hspace{-0.5pt}k}\, y_{\ell\hspace{-0.5pt} \mid \hspace{-0.5pt}k},&\forall\ell\hspace{-2.5pt}\in\hspace{-2.5pt}\mathcal{S}^{\text{D}} ,\forall k\hspace{-2.5pt}\in\hspace{-2.5pt}\mathcal{K}  \label{eq::mopt::original::time} 
    \end{flalign}
\end{subequations}
where the objective \eqref{eq::mopt::original::cost} sums all transmission- and distribution-level costs over the entire time horizon $\mathcal{K}$. Constraints \eqref{eq::mopt::original::master} enforce the transmission power flow feasibility at each period, while \eqref{eq::mopt::original::worker}~\eqref{eq::mopt::original::set} impose the local distribution constraints and variable bounds. These first three groups of constraints are fully decoupled both in space (across subsystems) and in time, while~\eqref{eq::mopt::original::time} introduces a local temporal coupling over horizon $\mathcal{K}$.

\begin{myRemark}
    The temporal coupling~\eqref{eq::mopt::original::time} is a standard state-space relation. By partitioning $y_{\ell\hspace{-0.5pt} \mid \hspace{-0.5pt}k}$ into state variables $y_{\ell\hspace{-0.5pt} \mid \hspace{-0.5pt}k}^\textrm{\upshape state}$ and input variables $y_{\ell\hspace{-0.5pt} \mid \hspace{-0.5pt}k}^\textrm{\upshape input}$, we can rewrite~\eqref{eq::mopt::original::time} into a standard state-space form:
    \begin{align*}
        y_{\ell\hspace{-0.5pt} \mid \hspace{-0.5pt}k+1}^\textrm{s\upshape tate} = A^\textrm{\upshape state}_{\ell\hspace{-0.5pt} \mid \hspace{-0.5pt}k} y_{\ell\hspace{-0.5pt} \mid \hspace{-0.5pt}k}^\textrm{\upshape state} + B^\textrm{\upshape input}_{\ell\hspace{-0.5pt} \mid \hspace{-0.5pt}k} y_{\ell\hspace{-0.5pt} \mid \hspace{-0.5pt}k}^\textrm{\upshape input},
    \end{align*}
    where $A^\textrm{\upshape state}_{\ell\hspace{-0.5pt} \mid \hspace{-0.5pt}k}$ and $B^\textrm{\upshape input}_{\ell\hspace{-0.5pt} \mid \hspace{-0.5pt}k}$ are the corresponding transition and input matrices. See~\cite{faulwasser2020toward} for a discussion of multiperiod AC \acrshort{opf} in this framework.
\end{myRemark}

In contrast, under the TSO-managed model, the \acrshort{tso} has full access to individual DER dispatch and storage states (Section~\ref{sec::formulation::managmentModels}). The temporal link then shifts to the coupling variables themselves, replacing~\eqref{eq::mopt::original::time} with,
\begin{flalign}\label{eq::mopt::original::time::global}
    M_{\ell\hspace{-0.5pt} \mid \hspace{-0.5pt}k+1}\,x_{\ell\hspace{-0.5pt} \mid \hspace{-0.5pt}k+1}= \, A_{\ell\hspace{-0.5pt} \mid \hspace{-0.5pt}k}\, x_{\ell\hspace{-0.5pt} \mid \hspace{-0.5pt}k},\qquad\forall\ell\in\mathcal{S}^{\text{D}} ,\;\forall k\in\mathcal{K},
\end{flalign}
so that state transitions are directly embedded in the coupling variables.
\begin{myRemark}
In both cases, $M_{\ell\hspace{-0.5pt} \mid \hspace{-0.5pt}k}$ may vary over time to capture changing system dynamics or plug‐and‐play DER behavior. 
\end{myRemark}
%In the remainder of this section we introduce a temporal decomposition strategy that keeps the local state transition intact while still exploiting period-wise parallelism, thereby enabling tractable multiperiod flexibility aggregation.

%the flexibility aggregation procedure from the previous sections extends directly: the implicit feasible set $\mathcal{X}^\text{imp}{\ell\hspace{-0.5pt} \mid \hspace{-0.5pt}k}$ of each period can be approximated independently, and the temporal coupling~\eqref{eq::mopt::original::time::global} links these sets across time.

%When temporal consistency is imposed on the coupling variables $x_{\ell\hspace{-0.5pt} \mid \hspace{-0.5pt}k}$, the flexibility aggregation procedure from the previous sections can be directly applied: implicit feasible set of each time period $\mathcal{X}^\text{imp}_{\ell\hspace{-0.5pt} \mid \hspace{-0.5pt}k}$ is approximated independently, and the temperal coupling~\eqref{eq::mopt::original::time::global} then ties the sets together. Under the DSO-managed model, however, the temperal coupling applied to the local variables $y_{\ell\hspace{-0.5pt} \mid \hspace{-0.5pt}k}$. The implicit feasible set $\mathcal{X}^\text{imp}_{\ell\hspace{-0.5pt} \mid \hspace{-0.5pt}k}$ at period $k$ therefore depends on the realised local variables $y_{\ell\hspace{-0.5pt} \mid \hspace{-0.5pt}k-1}$, preventing a fully parallel approximation. 

\subsection{Aggregation under Local Temporal Coupling}\label{sec::temperal::coupling}

When temporal consistency is imposed solely on the coupling variables $x_{\ell\hspace{-0.5pt} \mid \hspace{-0.5pt}k}$, i.e., TSO-managed model, the flexibility aggregation procedure introduced in Section~\ref{sec::predictor-corrector} extends naturally to the multiperiod setting. For each distribution system $\ell\in\mathcal{S}^\text{D}$, the implicit feasible set at time step $k\in\mathcal{K}$ is given by
$$\mathcal{X}^\text{imp}_{\ell\hspace{-0.5pt} \mid \hspace{-0.5pt}k} = \left\{x_{\ell\hspace{-0.5pt} \mid \hspace{-0.5pt}k}\in\mathcal{X}_{\ell\hspace{-0.5pt} \mid \hspace{-0.5pt}k}\,\mid
    \,g_{\ell\hspace{-0.5pt} \mid \hspace{-0.5pt}k}(x_{\ell\hspace{-0.5pt} \mid \hspace{-0.5pt}k},y_{\ell\hspace{-0.5pt} \mid \hspace{-0.5pt}k})=0,\,y_{\ell\hspace{-0.5pt} \mid \hspace{-0.5pt}k}\in\mathcal{Y}_{\ell\hspace{-0.5pt} \mid \hspace{-0.5pt}k}\right\},
$$
These sets can be approximated independently at each time step, while the temporal coupling~\eqref{eq::mopt::original::time::global} enforces consistency across the horizon.

Under a DSO-managed model, however, the temporal coupling applies instead to the local variables $y_{\ell\hspace{-0.5pt} \mid \hspace{-0.5pt}k}$. In this case, the implicit feasible set $\mathcal{X}^\text{imp}_{\ell\hspace{-0.5pt} \mid \hspace{-0.5pt}k}$ at time $k$ depends explicitly on the local state $y_{\ell\hspace{-0.5pt} \mid \hspace{-0.5pt}k-1}$, preventing a fully parallel aggregation.

To address this, we adopt a temporal decomposition that preserves local state transitions while retaining period-wise parallelism. The key idea is to interpret each distribution grid as a \textit{virtual storage} device. At each time step $k\in\mathcal{K}$, the coupling variables for distribution system $\ell\in\mathcal{S}^\text{D}$ are defined as
$$
x_{\ell\hspace{-0.5pt} \mid \hspace{-0.5pt}k}= (p^{\text{pcc}}_{\ell\hspace{-0.5pt} \mid \hspace{-0.5pt}k},\; e_{\ell\hspace{-0.5pt} \mid \hspace{-0.5pt}k}^\text{pcc})\in\mathbb{R}^2,
$$
where $e_{\ell\hspace{-0.5pt} \mid \hspace{-0.5pt}k}^\text{pcc}$ denotes the virtual energy level of distribution system $\ell$ at time period $k$.

By assigning time-varying participation factors, analogous to the single-period case in Section~\ref{sec::DSO::management}, we establish a single-valued allocation between this coupling pair $(p^{\text{pcc}}_{\ell\hspace{-0.5pt} \mid \hspace{-0.5pt}k}, e_{\ell\hspace{-0.5pt} \mid \hspace{-0.5pt}k}^\text{pcc})$ and the internal DER power outputs and \acrshort{soc} values. This abstraction yields a virtual storage balance:
%By choosing time-varying participation vectors, we establish a bijective mapping between this coupling pair $(p^{\text{pcc}}_{\ell\hspace{-0.5pt} \mid \hspace{-0.5pt}k},\; e_{\ell\hspace{-0.5pt} \mid \hspace{-0.5pt}k}^\text{pcc})$ and the individual DER power output and \acrshort{soc}. This abstraction allows virtual storage balance equations
\begin{equation}\label{eq::mopt::virtual::storage}
    e^\text{pcc}_{\ell\hspace{-0.5pt} \mid \hspace{-0.5pt}k+1} = e^\text{pcc}_{\ell\hspace{-0.5pt} \mid \hspace{-0.5pt}k} - \Delta t\, p^g_{\ell\hspace{-0.5pt} \mid \hspace{-0.5pt}k}\qquad\forall \ell\in\mathcal{S}^\text{D},\; k\in\mathcal{K},
\end{equation}
\begin{myRemark}[Coupling for visualization]\label{rmk::visualization}
    For clarity, we restrict the coupling vector to $(p^{\text{pcc}}_{\ell\hspace{-0.5pt} \mid \hspace{-0.5pt}k}, e_{\ell\hspace{-0.5pt} \mid \hspace{-0.5pt}k}^\text{pcc})$. Reactive exchanges $q^{\text{pcc}}_{\ell\hspace{-0.5pt} \mid \hspace{-0.5pt}k}$ can be included without altering the structure of the method, but are omitted here to simplify visualization.
\end{myRemark}
{
\begin{myRemark}[Virtual storage and heterogeneity] \label{rem::virtualBattery}
The virtual-storage representation does not assume freely redistributable energy among heterogeneous units. The participation factors are imposed during aggregation, and a PCC-level point is retained only if the reconstructed unit-level powers and \acrshort{soc} values satisfy the device and network constraints in \(\mathcal Y_{\ell|k}\). Hence, the exact feasible set under the prescribed participation policy is conservative relative to unrestricted internal redispatch, while the predictor-corrector approximation may introduce small boundary-classification errors.
\end{myRemark}
}

With this abstraction, multiperiod problems with local temporal coupling can be reformulated as
\begin{subequations}
    \label{eq::mopt::equivalent}
    \begin{align}
        \min_{x,y}\;\sum_{k\in\mathcal{K}}\Big\{ f_{0\hspace{-0.5pt} \mid \hspace{-0.5pt}k}(x_{k},y_{0\hspace{-0.5pt} \mid \hspace{-0.5pt}k})+\sum_{\ell\in\mathcal{S}^\text{D}}f_{\ell\hspace{-0.5pt} \mid \hspace{-0.5pt}k}(x_{\ell\hspace{-0.5pt} \mid \hspace{-0.5pt}k}, y_{\ell\hspace{-0.5pt} \mid \hspace{-0.5pt}k})\Big\}\label{eq::mopt::equivalent::cost}
    \end{align}
    subject to
   \begin{flalign}
        &&g_{0\hspace{-0.5pt} \mid \hspace{-0.5pt}k}(x_{k},y_{0\hspace{-0.5pt} \mid \hspace{-0.5pt}k})\hspace{-3pt}=&\, 0,\; y_{0\hspace{-0.5pt} \mid \hspace{-0.5pt}k}\in\mathcal{Y}_{0\hspace{-0.5pt} \mid \hspace{-0.5pt}k},&\forall k\hspace{-2.5pt}\in\hspace{-2.5pt}\mathcal{K}    \label{eq::mopt::equivalent::master}\\&&
        \hspace{-3pt}g_{\ell\hspace{-0.5pt} \mid \hspace{-0.5pt}k}(x_{\ell\hspace{-0.5pt} \mid \hspace{-0.5pt}k},y_{\ell\hspace{-0.5pt} \mid \hspace{-0.5pt}k})\hspace{-3pt}=& \, 0,&\forall\ell\hspace{-2.5pt}\in\hspace{-2.5pt}\mathcal{S}^{\text{D}} ,\forall k\hspace{-2.5pt}\in\hspace{-2.5pt}\mathcal{K} \label{eq::mopt::equivalent::worker}\\&&
        (x_{\ell\hspace{-0.5pt} \mid \hspace{-0.5pt}k},y_{\ell\hspace{-0.5pt} \mid \hspace{-0.5pt}k})\hspace{-3pt}\in&\, \mathcal{X}_{\ell\hspace{-0.5pt} \mid \hspace{-0.5pt}k}\times \mathcal{Y}_{\ell\hspace{-0.5pt} \mid \hspace{-0.5pt}k}, \hspace{-7pt} & \forall\ell\hspace{-2.5pt}\in\hspace{-2.5pt}\mathcal{S}^{\text{D}} ,\forall k\hspace{-2.5pt}\in\hspace{-2.5pt}\mathcal{K}  \label{eq::mopt::equivalent::set}\\&&
        M_{\ell\hspace{-0.5pt} \mid \hspace{-0.5pt}k+1}x_{\ell\hspace{-0.5pt} \mid \hspace{-0.5pt}k+1}\hspace{-3pt}=&\,   A_{\ell\hspace{-0.5pt} \mid \hspace{-0.5pt}k} x_{\ell\hspace{-0.5pt} \mid \hspace{-0.5pt}k}\hspace{-3pt}+\hspace{-3pt}B_{\ell\hspace{-0.5pt} \mid \hspace{-0.5pt}k} y_{\ell\hspace{-0.5pt} \mid \hspace{-0.5pt}k},&\forall\ell\hspace{-2.5pt}\in\hspace{-2.5pt}\mathcal{S}^{\text{D}},\forall k\hspace{-2.5pt}\in\hspace{-2.5pt}\mathcal{K}  \label{eq::mopt::equivalent::time} 
    \end{flalign}
\end{subequations}
Applying the predictor-corrector aggregation method (Section~\ref{sec::predictor-corrector}) to approximate the spatio-temporal decomposed constraints~\eqref{eq::mopt::equivalent::worker}–\eqref{eq::mopt::equivalent::set} yields the following upper-level problem:
\begin{subequations}
    \label{eq::mopt::aggregated}
    \begin{align}
        \min_{x,y}\;\sum_{k\in\mathcal{K}}\Big\{ f_{0\hspace{-0.5pt} \mid \hspace{-0.5pt}k}(x_{k},y_{0\hspace{-0.5pt} \mid \hspace{-0.5pt}k})+\sum_{\ell\in\mathcal{S}^\text{D}} \hat{f}_{\ell\hspace{-0.5pt} \mid \hspace{-0.5pt}k}(x_{\ell\hspace{-0.5pt} \mid \hspace{-0.5pt}k})\Big\}\label{eq::mopt::aggregated::cost}
    \end{align}
    subject to
   \begin{flalign}
        &&g_{0\hspace{-0.5pt} \mid \hspace{-0.5pt}k}(x_{k},y_{0\hspace{-0.5pt} \mid \hspace{-0.5pt}k})\hspace{-2pt}=&\, 0,\; y_{0\hspace{-0.5pt} \mid \hspace{-0.5pt}k}\in\mathcal{Y}_{0\hspace{-0.5pt} \mid \hspace{-0.5pt}k},&\forall k\hspace{-2.5pt}\in\hspace{-2.5pt}\mathcal{K}    \label{eq::mopt::aggregated::master}\\&&
        x_{\ell\mid k} \hspace{-2pt}\in&\,\hat{\mathcal{X}}^\text{imp}_{\ell\hspace{-0.5pt} \mid \hspace{-0.5pt}k},&\forall\ell\hspace{-2.5pt}\in\hspace{-2.5pt}\mathcal{S}^{\text{D}} ,\forall k\hspace{-2.5pt}\in\hspace{-2.5pt}\mathcal{K} \label{eq::mopt::aggregated::implicitß}\\&&
        M_{\ell\hspace{-0.5pt} \mid \hspace{-0.5pt}k+1}x_{\ell\hspace{-0.5pt} \mid \hspace{-0.5pt}k+1}\hspace{-1pt}\hspace{-2pt}=&\,   A_{\ell\hspace{-0.5pt} \mid \hspace{-0.5pt}k} x_{\ell\hspace{-0.5pt} \mid \hspace{-0.5pt}k}\hspace{-3pt}+\hspace{-3pt}B_{\ell\hspace{-0.5pt} \mid \hspace{-0.5pt}k} \hat{y}_{\ell\hspace{-0.5pt} \mid \hspace{-0.5pt}k},&\forall\ell\hspace{-2.5pt}\in\hspace{-2.5pt}\mathcal{S}^{\text{D}} ,\forall k\hspace{-2.5pt}\in\hspace{-2.5pt}\mathcal{K}  \label{eq::mopt::aggregated::time}
    \end{flalign}
with the approximated implicit feasible sets
   \begin{align}
    \hat{\mathcal{X}}_{\ell\hspace{-0.5pt} \mid \hspace{-0.5pt}k}^{\text{imp}}= \left\{x_{\ell\hspace{-0.5pt} \mid \hspace{-0.5pt}k}\in\mathcal{X}
    _{\ell\hspace{-0.5pt} \mid \hspace{-0.5pt}k}\;\Big| \;\hat{y}_{\ell\hspace{-0.5pt} \mid \hspace{-0.5pt}k}(x_{\ell\hspace{-0.5pt} \mid \hspace{-0.5pt}k})\in\mathcal{Y}
    _{\ell\hspace{-0.5pt} \mid \hspace{-0.5pt}k}\right\},
    \end{align}
\end{subequations}
defined for every distribution system $\ell\in\mathcal{S}^\text{D}$ and each time step $k\in\mathcal{K}$.

\subsection{Results from Real-World Scenarios}\label{sec::RWScenarios}
While Table~\ref{tab::aggregation-results} benchmarks diverse networks in single-period settings, practical transmission--distribution coordination also involves intertemporal coupling from storage \acrshort{der}s and time-varying demand. We therefore evaluate the proposed method in a dynamic \acrshort{itd} system comprising the IEEE 118-bus transmission grid and 28 heterogeneous distribution networks, for a total of 3,564 buses.
The distribution systems span different sizes and characteristics, including \texttt{case10ba}, \texttt{case533mt}, \texttt{case118zh}, and the real-world KIT Campus North feeder (\texttt{KIT-CN}), which provides measured load and solar-generation profiles~\cite{gonzalez2021probabilistic}.

% yl{For the multiperiod coordination, the distribution networks are connected to transmission buses according to the original active-load levels of case118, ensuring a representative matching between transmission-bus demand and distribution-network size. To evaluate both radial and meshed operating conditions, cases with initially inactive branches are considered half in their original topology and the other half in a fully connected topology obtained by activating all branches. The 24-hour demand trajectories are generated by parameterising the loads of all test cases with normalised KIT Campus North load profiles. multiperiod operation is modelled by augmenting each DER with an energy storage model, introducing inter-temporal energy balance and state-of-charge constraints. The detailed parameter settings for the construction are provided on the accompanying \textbf{website}.}

For evaluation, we first aggregate the flexibility of each distribution system independently at each hourly period. These flexibility sets are then incorporated into a transmission-level problem with temporal coupling constraints~\eqref{eq::mopt::aggregated}, enabling a direct comparison with the original centralized multiperiod formulation~\eqref{eq::mopt::original}.
We assessed the aggregation approaches using two performance metrics:
\begin{itemize}
    \item \textbf{optimality gap}, defined as the deviation in objective value between the aggregated formulation~\eqref{eq::mopt::aggregated} and the centralized optimization problem~\eqref{eq::mopt::original}; and
    \item \textbf{constraints violations}, obtained by testing whether the aggregated dispatch decisions satisfy the original nonlinear distribution constraints~\eqref{eq::mopt::original::worker}-\eqref{eq::mopt::original::set}.
\end{itemize}

\begin{figure*}[htbp!]
    \centering
    % the first row
    \begin{subfigure}{0.32\textwidth}
        \centering
        \caption{KIT North Campus}
        \includegraphics[height=4cm]{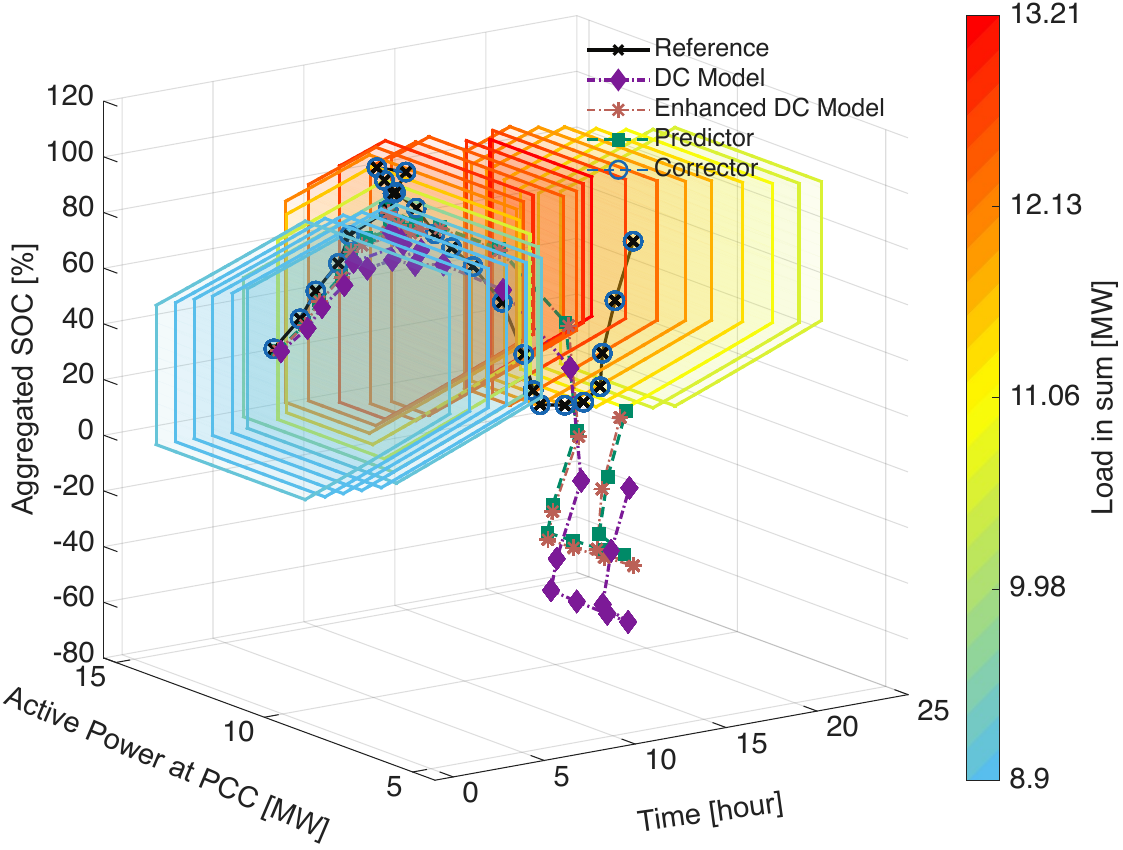}
        \label{fig:sub1}
    \end{subfigure}
    \hfill
    \begin{subfigure}{0.32\textwidth}
        \centering
        \caption{case118zh}
        \includegraphics[height=4cm]{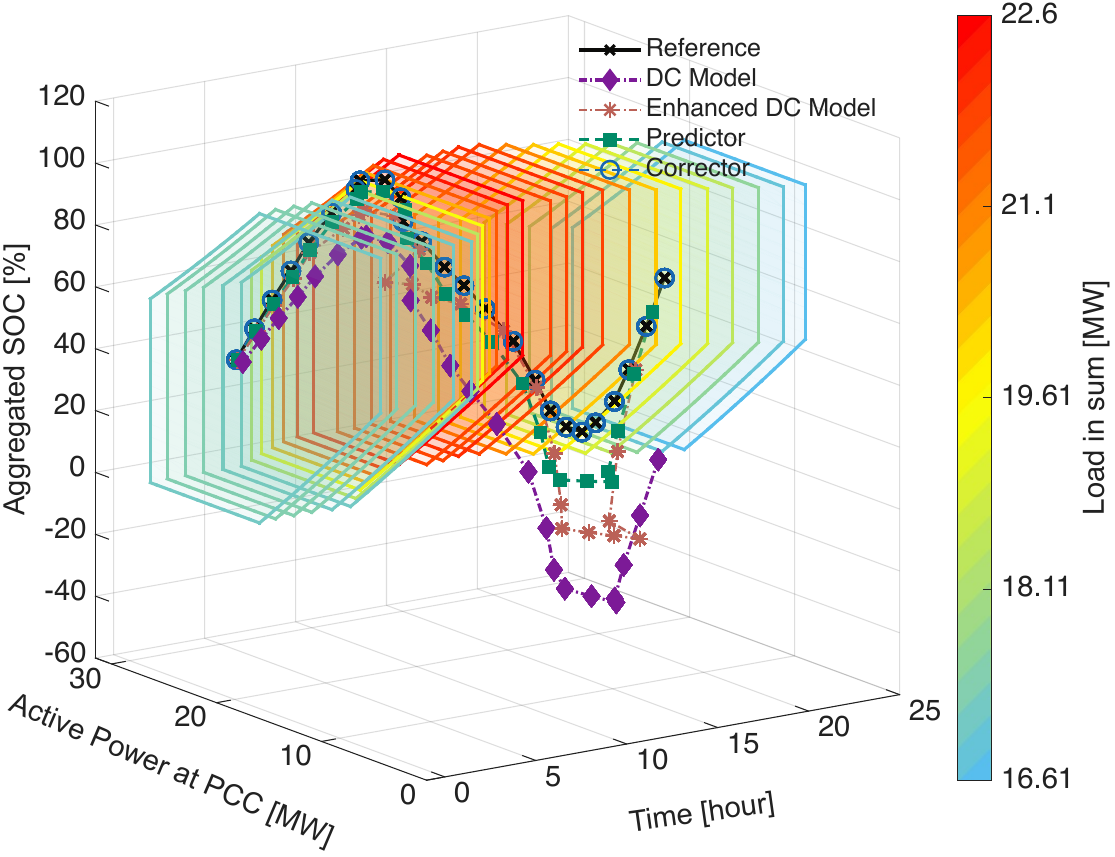}
        \label{fig:sub2}
    \end{subfigure}
    \hfill
    \begin{subfigure}{0.32\textwidth}
        \centering
        \caption{case533mt}
        \includegraphics[height=4cm]{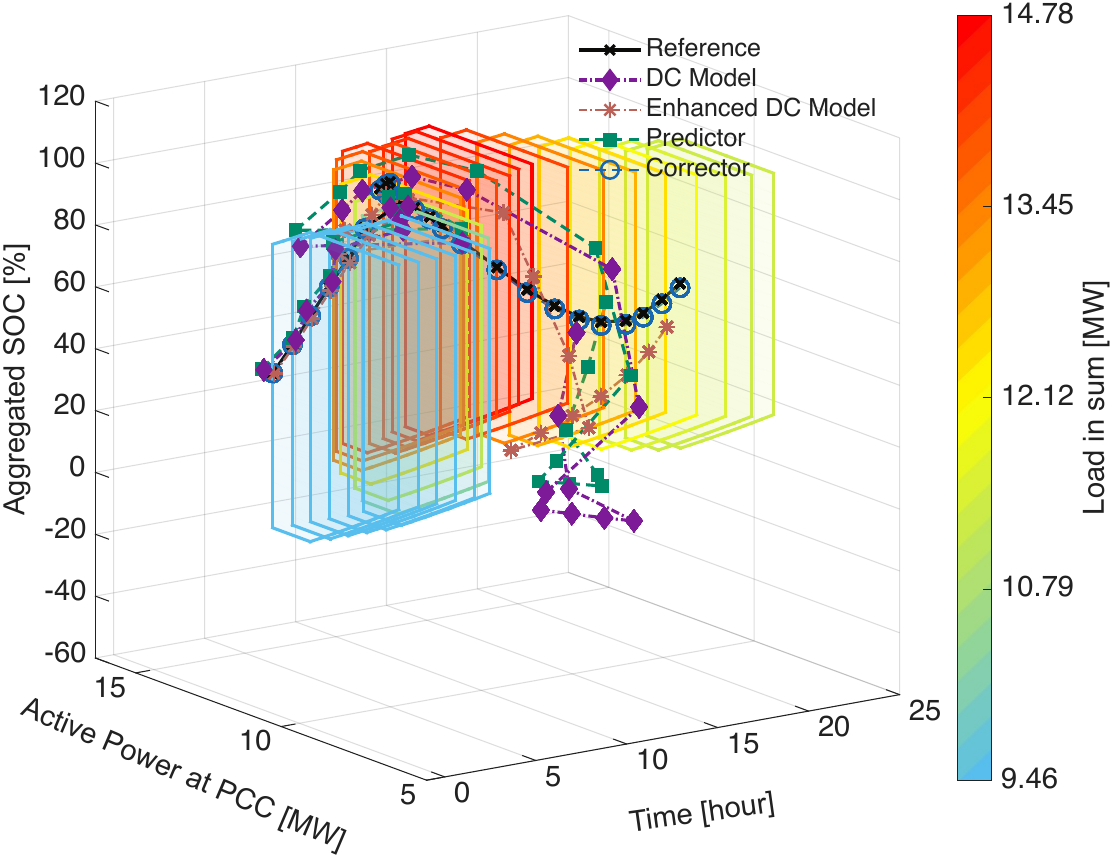}
        \label{fig:sub3}
    \end{subfigure}

    \vspace{0.5cm} % add the vertical space

    \caption{Comparison of optimal dispatch trajectories obtained from different aggregation methods: DC model (diamonds), enhanced DC model (stars), tangential predictor (squares), and the proposed predictor–corrector method (circles). The centralized solution with the exact AC power flow model is shown for reference (crosses). The background regions depict the time-decoupled implicit feasible sets generated by the predictor–corrector method, where red areas correspond to high-demand periods around noon and blue areas correspond to low-demand periods around midnight.}
    \label{fig::comp::trajectory}\vspace{-10pt}
\end{figure*}

Fig.~\ref{fig::comp::trajectory} illustrates the optimal dispatch trajectories for three representative distribution networks: \texttt{KIT-CN}, \texttt{case118zh}, and \texttt{case533mt}. The trajectories correspond to the proposed predictor–corrector aggregation (circles) and three linear surrogates, while the centralized AC solution to~\eqref{eq::mopt::original} serves as the reference (crosses). The background regions depict the time-decoupled implicit feasible sets generated by the predictor–corrector method, with color intensity indicating the level of total system demand. Across all three cases, the predictor–corrector trajectories align almost exactly with the reference solutions.

\begin{figure}[htbp!]
    \centering

    \begin{subfigure}[b]{0.25\textwidth}
        \caption{Power Domain}
        \includegraphics[width=\textwidth]{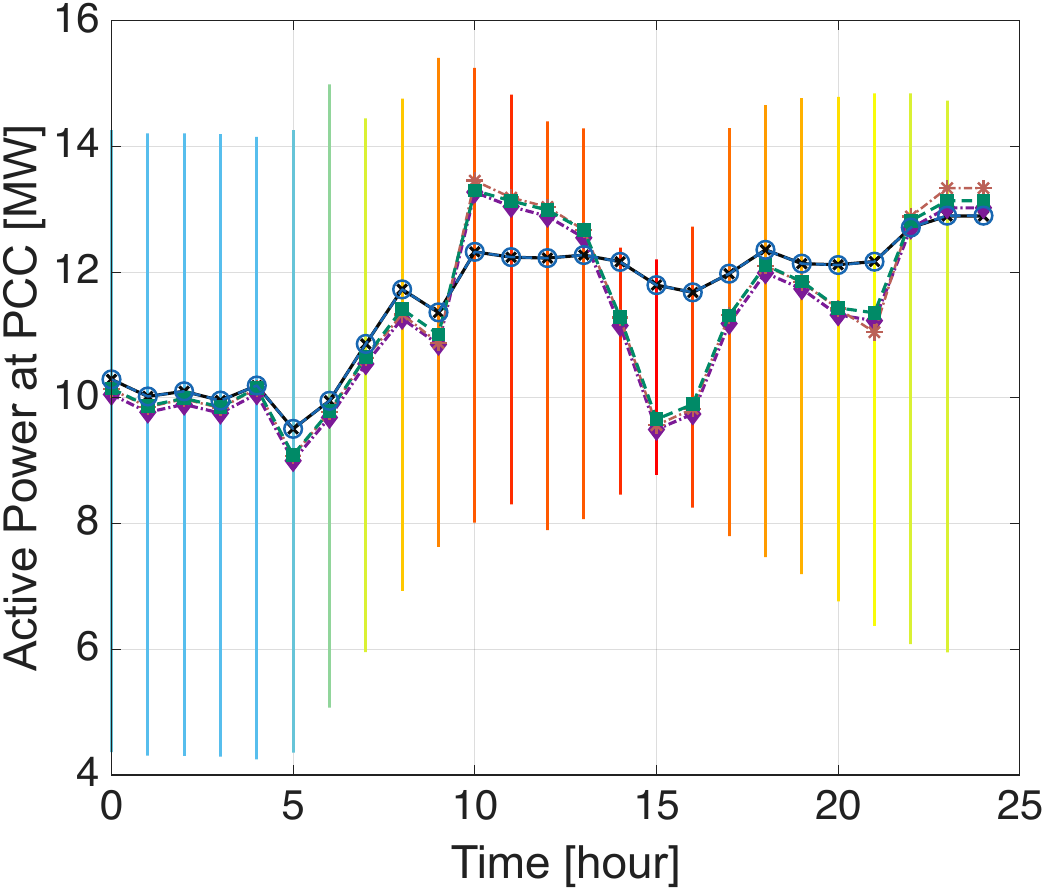}
        \label{fig::PT::projection}
    \end{subfigure}
    \hspace{50pt}
    \begin{subfigure}[b]{0.256\textwidth}
        \caption{Energy Domain}
        \includegraphics[width=\textwidth]{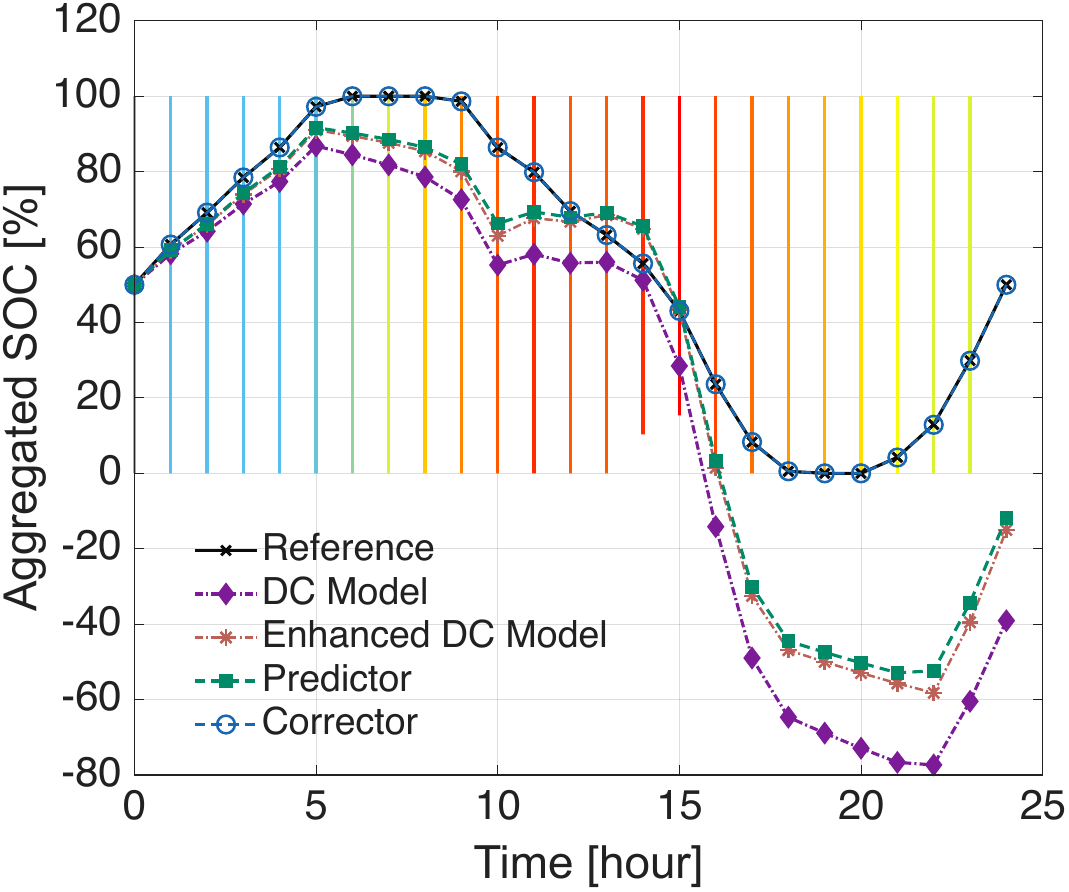}
        \label{fig::ET::projection}
    \end{subfigure}
    \caption{Comparison of optimal dispatch trajectories in the power–energy domain.}
    \label{fig::power::energy}
\end{figure}

\begin{figure}[htbp!]
    \centering
    \begin{subfigure}[b]{0.65\textwidth}
        \centering
        
        \caption{Optimality gap}
        \includegraphics[width=\textwidth]{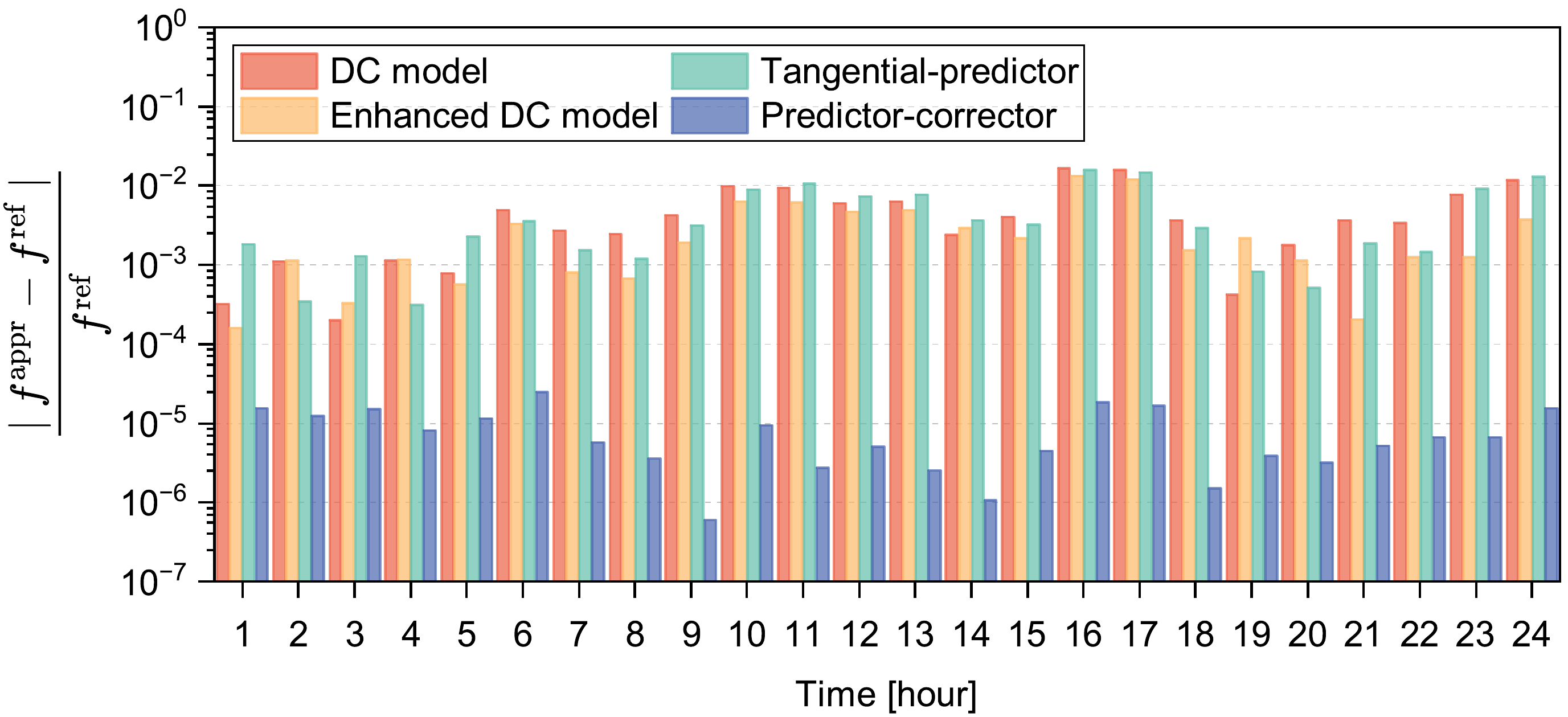}
        \label{fig::scheduling::cost}
    \end{subfigure}\\
    %\begin{subfigure}[b]{0.48\textwidth}
    %    \captionsetup{justification=raggedright, singlelinecheck=false,  margin={3em,0em}}
    %    \centering
    %    \vspace{-2ex}
    %    \caption{Redispatch between different aggregation methods}
    %    \includegraphics[width=\textwidth]{images/redispatch.pdf}
    %    \label{fig::redispatch}
    %\end{subfigure}
    \begin{subfigure}[b]{0.65\textwidth}
        \centering  
        \caption{Constraints violation}
        \includegraphics[width=\textwidth]{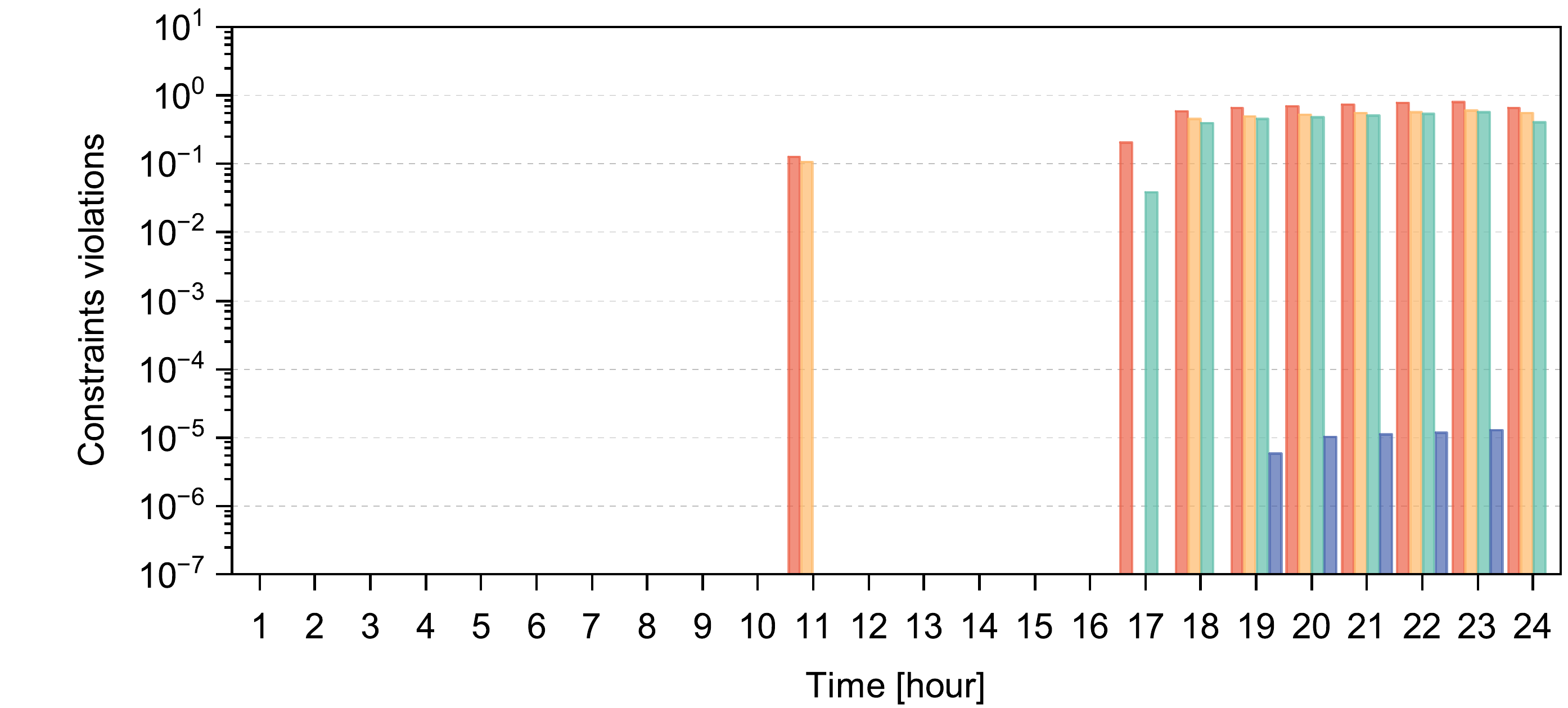}
        \label{fig::scheduling::violations}    
    \end{subfigure}
    \caption{Performance matrices}
    \label{fig::scheduling}
\end{figure}

A complementary comparison in the power–energy domain is also shown in Fig.~\ref{fig::power::energy}. While the linear surrogates can reasonably approximate the optimal trajectories in the power domain, their trajectories in the energy domain exhibit increasing deviations over time. This behavior, consistent with observations in~\cite{jiang2025enhanced,jiang2025error}, reflects the accumulation of approximation errors inherent to linear models.

Detailed hourly evaluations of optimality gaps and constraint violations are reported in Fig.~\ref{fig::scheduling}. The predictor–corrector aggregation method consistently tracks the reference solution with negligible cost deviation. Over the entire day, only some minor voltage violations occur in the \texttt{KIT-CN} system, each less than $10^{-5}$ p.u.

Overall, these results confirm that the predictor–corrector aggregation provides an accurate and reliable representation of nonlinear system behavior, enabling robust flexibility characterization and secure TSO–DSO coordination.

%\begin{figure}[htbp]
%    \centering
%    \begin{subfigure}[b]{0.48\textwidth}
%        \centering
%        \caption{Cost comparison between predictor and corrector}
%        \hspace{-2mm}  
%        \includegraphics[width=\textwidth]{images/cost_radialmesh_mixed.pdf}
%        \label{fig::scheduling::lds}
%    \end{subfigure}
%    \begin{subfigure}[b]{0.48\textwidth}
%        \centering
%        \vspace{-2ex}
%        \caption{Violation comparison between predictor and corrector}
%        \includegraphics[width=\textwidth]{images/violation_radialmesh_mixed.pdf}
%        \label{fig::scheduling::pred}
%    \end{subfigure}
%    \vspace{-2ex}
%    \caption{Multiperiod coordination comparison: In this simulation, multiple distribution networks (case10ba, case of KIT north campus, case118zh, case533mt) are connected to one transmission network. For each of the last three distribution networks, half have been switched to run in a meshed pattern, for which the LinDistFlow model is not applicable. Therefore, the coordination with the predictor and corrector is compared.}
%    \label{fig::scheduling}
%\end{figure}

\subsection{Computation Time: Centralized vs. Aggregated}

In this section, we identify and quantify the sources of the computational speedup achieved by the proposed aggregated approach. In addition to the IEEE 118-bus transmission system considered in Section~\ref{sec::RWScenarios}, we consider a smaller ITD instance based on the IEEE 14-bus transmission system. As the centralized benchmark, the original problem~\eqref{eq::mopt::original} is solved directly using \ipopt~\cite{wachter2006implementation}. In the aggregated approach, the predictor-corrector method first constructs approximations of the spatio-temporally decomposed implicit feasible sets associated with~\eqref{eq::mopt::equivalent::worker}-\eqref{eq::mopt::equivalent::set}, after which the resulting upper-level problem~\eqref{eq::mopt::aggregated} is solved using the same solver. Thus, the comparisons within each test case reflect the computational effect of the aggregation reformulation rather than differences in solver choice.
\begin{table}[htbp!]
    \centering
    \caption{Performance of centralized and aggregated approaches}\label{tab::computing_time::aggregation}
    \setlength{\extrarowheight}{-1pt}
    {
  \begin{tabular}{@{}lcrrrr@{}}
        \toprule
        %Transmission} & Approach & Aggregation &  Initialization & Solving & Total\\
        Case & Approach & AAggregation [s] & Initialization [s] & Solve [s] & Total [s]\\
        \midrule
        \multirow{3}{*}{\texttt{case14}}  &Centralized & -- & 2.12 & 4.94 &7.06\\
        & Aggregated  & 0.15 & 0.27 & 0.73 & 1.15\\
        & Speedup     & --   & 7.85$\times$ & 6.77$\times$ & 6.14$\times$ \\
        \midrule
        \multirow{3}{*}{\texttt{case118}}  &Centralized & -- & 37.33 & 20.01 &57.34\\
        & Aggregated & 0.15 & 4.11 & 4.16 &8.42\\
        & Speedup     & --   & 9.08$\times$ & 4.81$\times$ & 6.81$\times$ \\
        \bottomrule
    \end{tabular}\vspace{-4pt}}
\end{table}

Table~\ref{tab::computing_time::aggregation} separates the wall-clock time into predictor-corrector aggregation, \casadi~\cite{andersson2019casadi} initialization, including model construction and automatic differentiation, and the subsequent \ipopt solve. Including all three stages, the aggregated formulation reduces the total time from \(7.06\) to \(1.15\) seconds for \texttt{case14} and from \(57.34\) to \(8.42\) seconds for \texttt{case118}, corresponding to end-to-end speedups of \(6.14\) and \(6.81\), respectively. The initialization speedups of \(7.85\) and \(9.08\) demonstrate that dimensionality reduction lowers not only the cost of the iterative \ipopt solve but also the cost of model construction and automatic differentiation in \casadi. The sequential aggregation overhead is only \(0.15\) seconds in both cases and can be reduced further by executing the independent subsystem-period aggregation tasks in parallel.
\begin{table}[htbp!]
    \centering
    \caption{Solving time of centralized and aggregated problems}\label{tab::computing_time::comparison}
    \setlength{\extrarowheight}{-1pt}
  \begin{tabular}{@{}llccr@{}}
        \toprule
        Test case\hspace{10pt} & Metric & Centralized & Aggregated & Speedup \\
        \midrule
        \multirow{4}{*}{\texttt{case14}}  &Iterations & 31 & 19 & 1.63$\times$\\
                &Linear Solver [s] & 3.65 & 0.56 & 6.52$\times$\\
        %PDSystemSolverTotal / iterations & 0.482732 & 0.142895 \\  
        &Function Eval. [s]\hspace{15pt} & 0.75 & 0.02 & 37.5$\times$\\
        &Total  [s] & 4.94 & 0.73 & 6.77$\times$\\
        \midrule
        \multirow{4}{*}{\texttt{case118}}  &Iterations & 31 & 22 & 1.41$\times$\\
        %Overalgorithm / iterations & 0.645458 & 0.189164 \\
        & Linear Solver  [s]& 14.96 & 3.14 & 4.76$\times$\\
        %PDSystemSolverTotal / iterations & 0.482732 & 0.142895 \\
        &Function Eval.  [s] & 3.07 & 0.16 & 19.2$\times$\\
        & Total   [s]& 20.01 & 4.16 & 4.81$\times$ \\ 
        %Function Evaluations / iterations & 0.099116 & 0.007477 \\
        \bottomrule
    \end{tabular}\vspace{-10pt}
\end{table}
\begin{figure}[htbp]
    \centering
    \begin{subfigure}
        {0.35\textwidth}
        \centering
        \caption{case14}
        \includegraphics[width=\columnwidth]{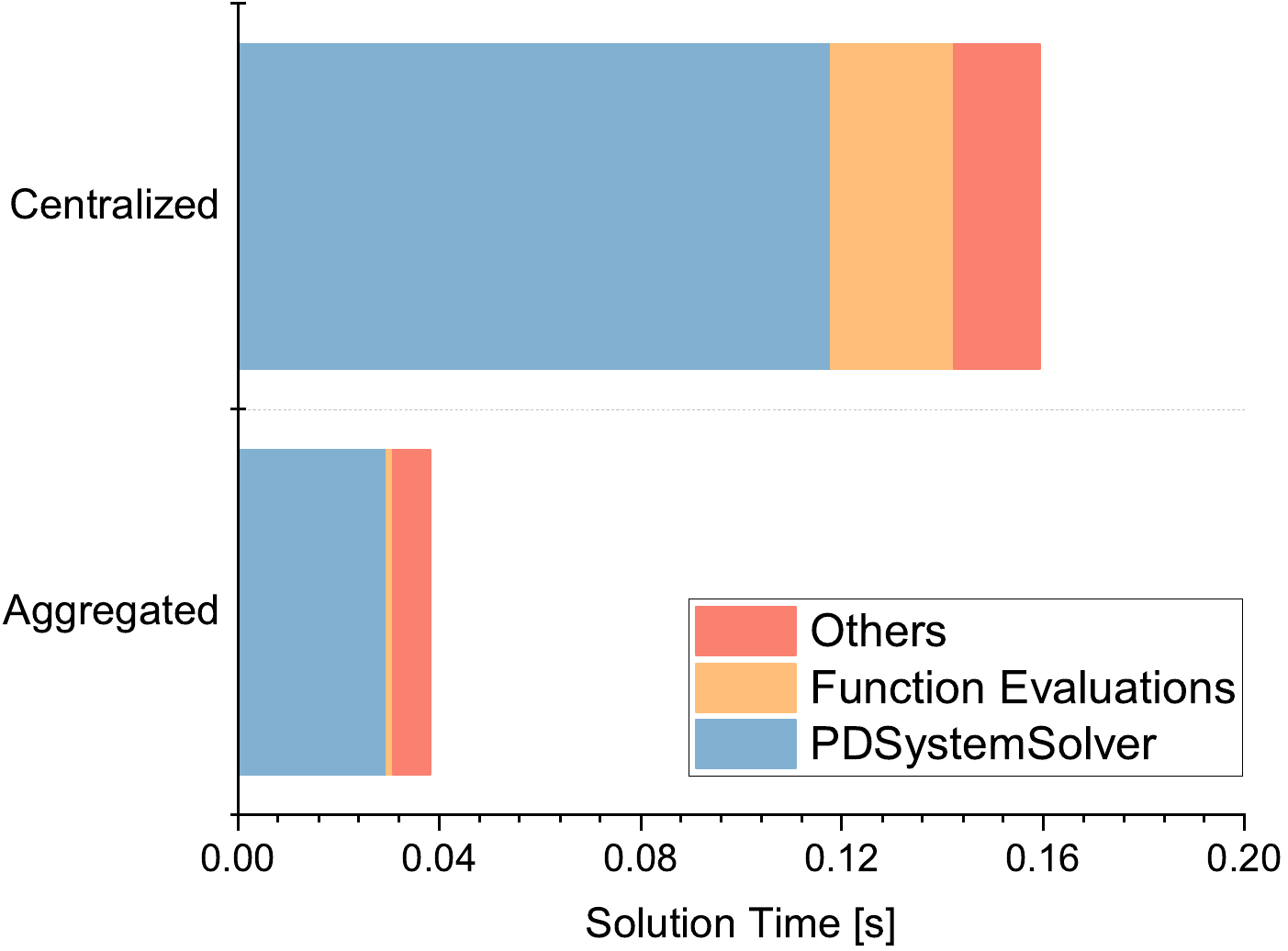}
        \label{fig::timing::case14}
    \end{subfigure}
    \hspace{40pt}
    \begin{subfigure}
        {0.35\textwidth}
        \centering
        \caption{case118}
        \includegraphics[width=\columnwidth]{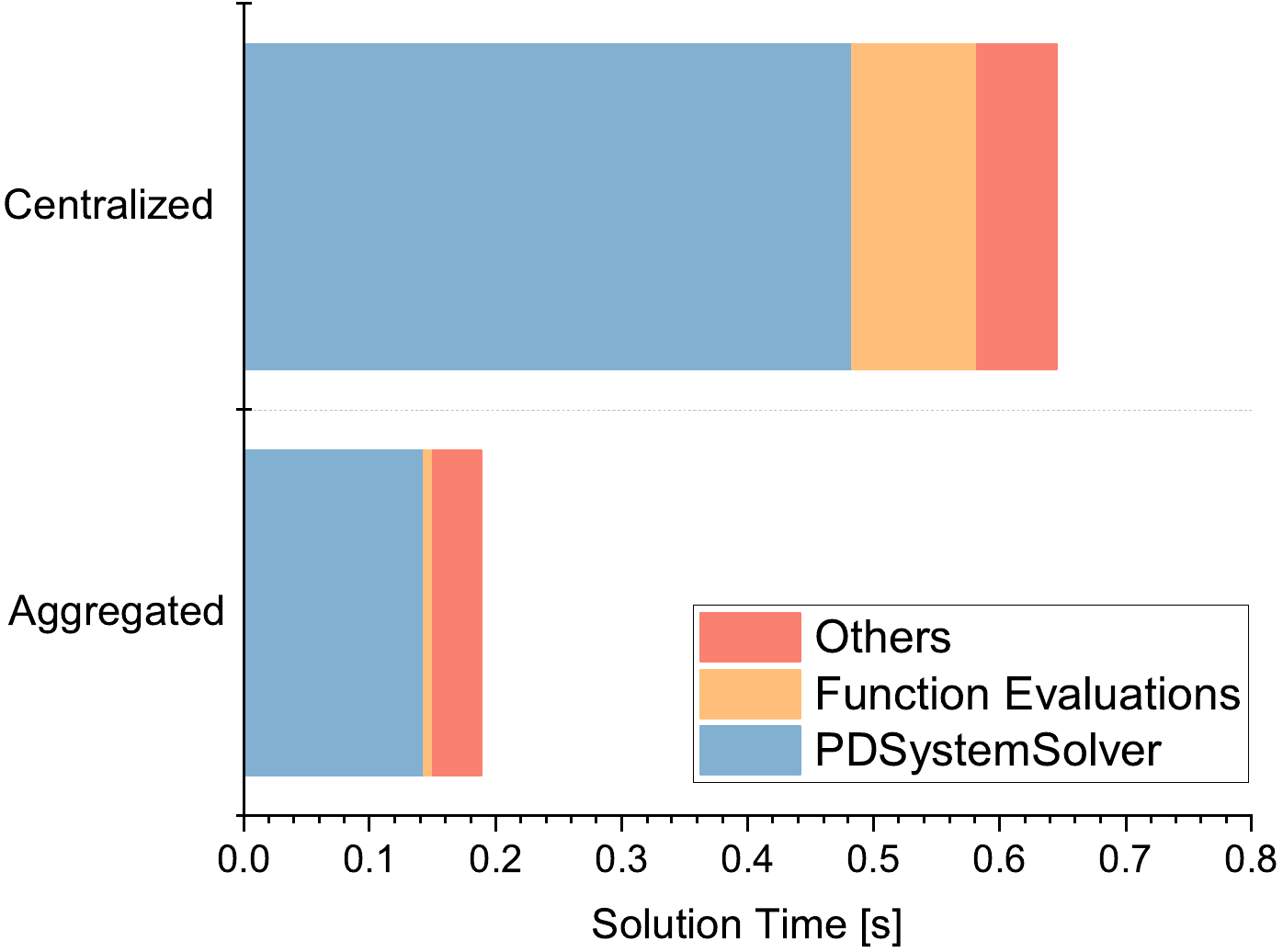}
        \label{fig::timing::case118}
    \end{subfigure}

    \caption{Average \ipopt time per iteration}
    \label{fig::timing::coordination}
\end{figure}

To isolate the upper-level nonlinear-solver stage, Table~\ref{tab::computing_time::comparison} reports the \ipopt timing breakdown for the centralized and aggregated formulations. The aggregated formulation achieves \ipopt-stage speedups of \(4.81\)-\(6.77\), while the relative objective-value difference with respect to the centralized solution and the maximum constraint violation remain on the order of \(10^{-5}\), cf.~Fig.~\ref{fig::scheduling}. It also requires fewer \ipopt iterations in both test cases. Thus, the solver-stage improvement results from both a lower computational cost per iteration and a smaller number of iterations.

\begin{table}[htbp]
    \centering
    \caption{Problem dimensions and KKT structure}
    \label{tab::computing_time::dimension}
    \setlength{\extrarowheight}{-1pt}
   % \begin{threeparttable}
    %\resizebox{\columnwidth}{!}{%
        {
    \begin{tabular}{@{}llrrr@{}}
        \toprule
        Test case & Structural metric
        & Centralized & Aggregated & Reduction \\
        \midrule

        \multirow{5}{*}{\texttt{case14}}
        & No. of variables 
        %& \(16\,656\) & \(1\,680\) & \(89.91\%\) \\
        & \(1.67\mathrm{E}4\) & \(1.68\mathrm{E}3\) & \(89.91\%\) \\

        %& No. of equality  
        %& \(16\,472\) & \(1\,304\) & \multirow{2}{*}{\(92.08\%\)} \\

        %& No. of inequality   & \(0\) & \(7\,776\) &  \\
        & No. of constraints
        %& \(16\,472\) & \(9\,080\) & \(44.88 \%\) \\
        & \(1.65\mathrm{E}4\) & \(9.08\mathrm{E}3\) & \(44.88\%\) \\

        & KKT order 
        %& \(33\,128\) & \(18\,536\) & \(44.05\%\) \\
        & \(3.31\mathrm{E}4\) & \(1.85\mathrm{E}4\) & \(44.05\%\) \\
        & NNZ of KKT
        %& \(282\,816\) & \(54\,816\) & \(80.62\%\) \\
        & \(2.83\mathrm{E}5\) & \(5.48\mathrm{E}4\) & \(80.62\%\) \\
        & KKT density 
        %& \(2.58\mathrm{E}{-4}\) & \(1.60\mathrm{E}{-4}\) & \(38.09\%\) \\
        & \(2.58\mathrm{E}{-4}\) & \(1.60\mathrm{E}{-4}\) & \(38.09\%\) \\
        \midrule

        \multirow{5}{*}{\texttt{case118}}
        & No. of variables 
  %      & \(177\,024\) & \(10\,944\) & \(93.82\%\) \\
        & \(1.77\mathrm{E}5\) & \(1.09\mathrm{E}4\) & \(93.82\%\) \\
        %& No. of equality  
        %& \(174\,508\) & \(7\,756\) & \(95.56\%\) \\

        %& No. of inequality  
        %& \(0\) & \(84\,048\) & -- \\

        & No. of constraints
        %& \(174\,508\) & \(91\,804\) & \(47.39\%\) \\
        & \(1.75\mathrm{E}5\) & \(9.18\mathrm{E}4\) & \(47.39\%\) \\
        & KKT order
        %& \(351\,532\) & \(186\,796\) & \(46.86\%\) \\
        & \(3.52\mathrm{E}5\) & \(1.87\mathrm{E}5\) & \(46.86\%\) \\
        
        & NNZ of KKT
        %& \(3\,199\,104\) & \(555\,888\) & \(82.62\%\) \\
        & \(3.20\mathrm{E}6\) & \(5.56\mathrm{E}5\) & \(82.62\%\) \\
        & KKT density 
        & \(2.59\mathrm{E}{-5}\) & \(1.59\mathrm{E}{-5}\) & \(38.46\%\) \\

        \bottomrule
    \end{tabular}}
    %}
    %\begin{tablenotes}[flushleft]
        %\item[*] \scriptsize
        %Each inequality is reformulated using one bounded slack variable.
        %The KKT nonzero counts assume that all Hessian and slack-variable
        %diagonal entries are structurally nonzero. Density is computed for
        %the full symmetric KKT matrix before numerical-factorization fill-in.
    %\end{tablenotes}
    %\end{threeparttable}
\end{table}

To explain the lower per-iteration cost, Fig.~\ref{fig::timing::coordination} presents the average \ipopt timing breakdown and identifies the primal--dual KKT solve, reported under \texttt{PDSystemSolver}, as the dominant computational component. As discussed in~\cite[Sec.~5]{wachter2009shorttutorial}, \ipopt reformulates general inequality constraints using bounded slack variables. After this reformulation, Table~\ref{tab::computing_time::dimension} shows that aggregation reduces the KKT-system order by \(44.05\%\)--\(46.86\%\) and its structural number of nonzeros by \(80.62\%\)--\(82.62\%\). The KKT density also decreases by approximately \(38\%\). These structural reductions result in \(4.76\)--\(6.52\) reductions in linear-system solution time.

Overall, dimensionality reduction is the primary source of the computational improvement, as it reduces both \casadi initialization costs and the cost of each \ipopt iteration. The spatio-temporal decomposition provides a complementary benefit by enabling parallel preprocessing, although the measured sequential aggregation overhead is already small.

\section{Conclusion}\label{sec::conclusion}
This paper proposes a hierarchical optimization framework for \acrfull{itd} operation based on a predictor–corrector aggregation method. By reinterpreting path-following techniques, the method provides tractable surrogate models for distribution subsystems. The resulting aggregation directly accounts for nonlinear AC power flow constraints, is applicable to both radial and meshed distribution networks,  %under the stated regularity assumptions
 and enables scalable pre-computation through spatio-temporal decomposition. Theoretical error bounds are established, and extensive benchmark studies demonstrate that the approach achieves speedups of $5$–$7$ times compared to centralized nonlinear solvers. Analysis of different network configurations further highlights how structural factors shape aggregated flexibility, offering actionable insights for system operation. An open-source toolbox is released to support practical adoption and further study. 
{Future work will extend the proposed framework by incorporating PCC voltage magnitude as an additional coupling variable, modeling transformer voltage control through discrete tap-position decisions, and evaluating its performance on larger-scale ITD networks. 
Another interesting direction is to investigate global optimality certification for special cases, for example through convex relaxations with verified zero-gap exactness.}
\section*{Author Statements}
X.D. and Y.J. contributed equally to this work. X.D. developed the methodology and algorithms. Y.J. implemented the computational framework and performed the numerical experiments. F.Z., Y.G., and V.H. contributed to the conceptual development, interpretation of the results, and revision of the manuscript. Corresponding: Y.G. 

\bibliography{root}

@article{naik2025variable,
  title={Variable aggregation for nonlinear optimization problems},
  author={Naik, Sakshi and Biegler, Lorenz and Bent, Russell and Parker, Robert},
  journal={arXiv preprint arXiv:2502.13869},
  year={2025}
}

@article{dutta2020topology,
  title={Topology tracking for active distribution networks},
  author={Dutta, Rajarshi and Chakrabarti, Saikat and Sharma, Ankush},
  journal={IEEE Transactions on Power Systems},
  volume={36},
  number={4},
  pages={2855--2865},
  year={2020},
  publisher={IEEE}
}

@article{pacaud2025sensitivity,
  title={Sensitivity analysis for parametric nonlinear programming: A tutorial},
  author={Pacaud, Fran{\c{c}}ois},
  journal={arXiv preprint arXiv:2504.15851},
  year={2025}
}

@article{karthikeyan2019activeDistr,
  title={Predictive control of flexible resources for demand response in active distribution networks},
  author={Karthikeyan, Nainar and Pillai, Jayakrishnan Radhakrishna and Bak-Jensen, Birgitte and Simpson-Porco, John W},
  journal={IEEE Transactions on Power Systems},
  volume={34},
  number={4},
  pages={2957--2969},
  year={2019},
  publisher={IEEE}
}

@article{wang2025non,
  title={Non-Iterative Coordination of Interconnected Power Grids via Dimension-Decomposition-Based Flexibility Aggregation},
  author={Wang, Siyuan and Feng, Cheng and You, Fengqi},
  journal={IEEE Transactions on Power Systems},
  year={2025},
  publisher={IEEE}
}

@article{patari2021distributed,
  title={Distributed optimization in distribution systems: Use cases, limitations, and research needs},
  author={Patari, Niloy and Venkataramanan, Venkatesh and Srivastava, Anurag and Molzahn, Daniel K and Li, Na and Annaswamy, Anuradha},
  journal={IEEE Transactions on Power Systems},
  volume={37},
  number={5},
  pages={3469--3481},
  year={2021},
  publisher={IEEE}
}

@article{dai2025largescale,
  title={Distributed {AC} Optimal Power Flow: A Scalable Solution for Large-Scale Problems},
  author={Dai, Xinliang and Jiang, Yuning and Guo, Yi and Jones, Colin N and Diehl, Moritz and Hagenmeyer, Veit},
  journal={arXiv preprint arXiv:2503.24086},
  year={2025}
}

@article{andersson2019casadi,
  title={CasADi: a software framework for nonlinear optimization and optimal control},
  author={Andersson, Joel AE and Gillis, Joris and Horn, Greg and Rawlings, James B and Diehl, Moritz},
  journal={Mathematical Programming Computation},
  volume={11},
  number={1},
  pages={1--36},
  year={2019},
  publisher={Springer}
}

@article{wachter2006implementation,
  title={On the implementation of an interior-point filter line-search algorithm for large-scale nonlinear programming},
  author={W{\"a}chter, Andreas and Biegler, Lorenz T},
  journal={Mathematical Programming},
  volume={106},
  number={1},
  pages={25--57},
  year={2006},
  publisher={Springer}
}

@article{farivar2013branch,
  title={Branch flow model: Relaxations and convexification ({P}arts {I}, {II})},
  author={Farivar, Masoud and Low, Steven H},
  journal={IEEE Transactions on Power Systems},
  volume={28},
  number={3},
  pages={2554--2564},
  year={2013},
  publisher={IEEE}
}

@article{frank2016introduction,
  title={An introduction to optimal power flow: Theory, formulation, and examples},
  author={Frank, Stephen and Rebennack, Steffen},
  journal={IIE Trans.},
  volume={48},
  number={12},
  pages={1172--1197},
  year={2016},
  publisher={Taylor \& Francis}
}

@article{baran1989optimal1,
  title={Optimal capacitor placement on radial distribution systems},
  author={Baran, Mesut E and Wu, Felix F},
  journal={IEEE Transactions on Power Delivery},
  volume={4},
  number={1},
  pages={725--734},
  year={1989},
  publisher={IEEE}
}

@article{bugosen2023process,
  title={Process flowsheet optimization with surrogate and implicit formulations of a Gibbs reactor},
  author={Bugosen, Sergio I and Laird, Carl D and Parker, Robert B},
  journal={arXiv preprint arXiv:2310.09307},
  year={2023}
}

@article{parker2022implicit,
  title={An implicit function formulation for optimization of discretized index-1 differential algebraic systems},
  author={Parker, Robert and Nicholson, Bethany and Siirola, John and Laird, Carl and Biegler, Lorenz},
  journal={Computers \& Chemical Engineering},
  volume={168},
  pages={108042},
  year={2022},
  publisher={Elsevier}
}

@article{pacaud2024accelerating,
  title={Accelerating condensed interior-point methods on {SIMD/GPU} architectures},
  author={Pacaud, Fran{\c{c}}ois and Shin, Sungho and Schanen, Michel and Maldonado, Daniel Adrian and Anitescu, Mihai},
  journal={Journal of optimization theory and applications},
  volume={202},
  number={1},
  pages={184--203},
  year={2024},
  publisher={Springer}
}

@article{pacaud2022feasible,
  title={A feasible reduced space method for real-time optimal power flow},
  author={Pacaud, Fran{\c{c}}ois and Maldonado, Daniel Adrian and Shin, Sungho and Schanen, Michel and Anitescu, Mihai},
  journal={Electric Power Systems Research},
  volume={212},
  pages={108268},
  year={2022},
  publisher={Elsevier}
}

@article{baran2002network,
  title={Network reconfiguration in distribution systems for loss reduction and load balancing},
  author={Baran, Mesut E and Wu, Felix F},
  journal={IEEE Transactions on Power Delivery},
  volume={4},
  number={2},
  pages={1401--1407},
  year={2002},
  publisher={IEEE}
}

@article{engelmann2025approximate,
  title={Approximate dynamic programming with feasibility guarantees},
  author={Engelmann, Alexander and Bandeira, Ma{\'\i}sa Beraldo and Faulwasser, Timm},
  journal={IEEE Transactions on Control of Network Systems},
  year={2025},
  publisher={IEEE}
}

@article{bandeira2024adp,
  title={An {ADP} framework for flexibility and cost aggregation: Guarantees and open problems},
  author={Bandeira, Maisa Beraldo and Faulwasser, Timm and Engelmann, Alexander},
  journal={Electric Power Systems Research},
  volume={234},
  pages={110818},
  year={2024},
  publisher={Elsevier}
}

@book{johnson1990matrix,
  title={Matrix theory and applications},
  author={Johnson, Charles R},
  volume={40},
  year={1990},
  publisher={American Mathematical Soc.}
}

@article{molzahn2017survey,
  title={A survey of distributed optimization and control algorithms for electric power systems},
  author={Molzahn, Daniel K and D{\"o}rfler, Florian and Sandberg, Henrik and Low, Steven H and Chakrabarti, Sambuddha and Baldick, Ross and Lavaei, Javad},
  journal={IEEE Transactions on Smart Grid},
  volume={8},
  number={6},
  pages={2941--2962},
  year={2017},
  publisher={IEEE}
}

@article{wen2022tdder,
  title={Aggregate Temporally Coupled Power Flexibility of {DERs} Considering Distribution System Security Constraints},
  author={Wen, Yilin and Hu, Zechun and Liu, Likai},
  journal={IEEE Transactions on Power Systems},
  year={2023},
  volume={38},
  number={4},
  pages={3884-3896},
  doi={10.1109/TPWRS.2022.3196708}}

@article{gonzalez2021probabilistic,
  title={Probabilistic forecasts of the distribution grid state using data-driven forecasts and probabilistic power flow},
  author={Gonz{\'a}lez-Ordiano, Jorge {\'A}ngel and M{\"u}hlpfordt, Tillmann and Braun, Eric and Liu, Jianlei and {\c{C}}akmak, H{\"u}seyin and K{\"u}hnapfel, Uwe and D{\"u}pmeier, Clemens and Waczowicz, Simon and Faulwasser, Timm and Mikut, Ralf and others},
  journal={Applied Energy},
  volume={302},
  pages={117498},
  year={2021},
  publisher={Elsevier}
}

@article{faulwasser2020toward,
  title={Toward economic {NMPC} for multistage {AC} optimal power flow},
  author={Faulwasser, Timm and Engelmann, Alexander},
  journal= {Optimal Control Applications and Methods},
  volume={41},
  number={1},
  pages={107--127},
  year={2020},
  publisher={Wiley Online Library}
}

@book{lee2012smooth,
  title={Introduction to Smooth Manifolds},
  author={Lee, John M},
  year={2012},
  publisher={Springer}
}

@article{itd2020review,
  title={A review on {TSO}-{DSO} coordination models and solution techniques},
  author={Givisiez, Arthur Gon{\c{c}}alves and Petrou, Kyriacos and Ochoa, Luis F},
  journal={Electric Power Systems Research },
  volume={189},
  pages={106659},
  year={2020},
  publisher={Elsevier}
}

@article{wen2023improvedDER,
  title={Improved inner approximation for aggregating power flexibility in active distribution networks and its applications},
  author={Wen, Yilin and Hu, Zechun and He, Jinhua and Guo, Yi},
  journal={IEEE Transactions on Smart Grid},
  year={2023},
  volume={early access},
  publisher={IEEE}
}

@article{diehl2002real,
  title={Real-time optimization and nonlinear model predictive control of processes governed by differential-algebraic equations},
  author={Diehl, Moritz and Bock, H Georg and Schl{\"o}der, Johannes P and Findeisen, Rolf and Nagy, Zoltan and Allg{\"o}wer, Frank},
  journal={Journal of Process Control},
  volume={12},
  number={4},
  pages={577--585},
  year={2002},
  publisher={Elsevier}
}

@article{kerscher2022key,
  title={The key role of aggregators in the energy transition under the latest European regulatory framework},
  author={Kerscher, Selina and Arboleya, Pablo},
  journal={International Journal of Electrical Power \& Energy Systems},
  volume={134},
  pages={107361},
  year={2022},
  publisher={Elsevier}
}

@article{lee2021robust,
  title={Robust {AC} optimal power flow with robust convex restriction},
  author={Lee, Dongchan and Turitsyn, Konstantin and Molzahn, Daniel K and Roald, Line A},
  journal={IEEE Transactions on Power Systems},
  volume={36},
  number={6},
  pages={4953--4966},
  year={2021},
  publisher={IEEE}
}

@article{dai2024realtime,
  title = {Real-Time Coordination of Integrated Transmission and Distribution Systems: {{Flexibility}} Modeling and Distributed {{NMPC}} Scheduling},
  author = {Dai, Xinliang and Guo, Yi and Jiang, Yuning and Jones, Colin N. and Hug, Gabriela and Hagenmeyer, Veit},
  year = {2024},
  journal = {Electric Power Systems Research},
  volume = {234},
  pages = {110627},
  doi = {10.1016/j.epsr.2024.110627}
}

@article{zhang2023coordination,
  title = {On the {{Coordination}} of {{Transmission-Distribution Grids}}: {{A Dynamic Feasible Region Method}}},
  author = {Zhang, Tiance and Wang, Jianxiao and Wang, Hao and Ruiyang, Jin and Li, Gengyin and Zhou, Ming},
  year = {2023},
  journal = {IEEE Transactions on Power Systems},
  volume = {38},
  number = {2},
  pages = {1857--1868},
  doi = {10.1109/TPWRS.2022.3197556}
}

@article{kalantar-neyestanaki2020characterizing,
  title = {Characterizing the {{Reserve Provision Capability Area}} of {{Active Distribution Networks}}: {{A Linear Robust Optimization Method}}},
  author = {{Kalantar-Neyestanaki}, Mohsen and Sossan, Fabrizio and Bozorg, Mokhtar and Cherkaoui, Rachid},
  year = {2020},
  journal = {IEEE Transactions on Smart Grid},
  volume = {11},
  number = {3},
  pages = {2464--2475},
  doi = {10.1109/TSG.2019.2956152}
}

@article{yang2017linearized,
  title={A linearized {OPF} model with reactive power and voltage magnitude: A pathway to improve the MW-only {DC} OPF},
  author={Yang, Zhifang and Zhong, Haiwang and Bose, Anjan and Zheng, Tongxin and Xia, Qing and Kang, Chongqing},
  journal={IEEE Transactions on Power Systems},
  volume={33},
  number={2},
  pages={1734--1745},
  year={2017},
  publisher={IEEE}
}

@article{jiang2023feasible,
  title = {Feasible Operation Region of an Electricity Distribution Network},
  author = {Jiang, Xun and Zhou, Yue and Ming, Wenlong and Wu, Jianzhong},
  year = {2023},
  journal = {Applied Energy},
  volume = {331},
  pages = {120419},
  doi = {10.1016/j.apenergy.2022.120419}
}

@article{fruh2023coordinated,
  title = {Coordinated {{Vertical Provision}} of {{Flexibility From Distribution Systems}}},
  author = {Fr{\"u}h, Heiner and M{\"u}ller, Sharon and Contreras, Daniel and Rudion, Krzysztof and {von Haken}, Alix and Surmann, Bartholom{\"a}us},
  year = {2023},
  journal = {IEEE Transactions on Power Systems},
  volume = {38},
  number = {2},
  pages = {1834--1844},
  doi = {10.1109/TPWRS.2022.3162041}
}

@article{contreras2019timebased,
  title = {Time-{{Based Aggregation}} of {{Flexibility}} at the {{TSO-DSO Interconnection Point}}},
  booktitle = {2019 {{IEEE Power}} \& {{Energy Society General Meeting}} ({{PESGM}})},
  author = {Contreras, Daniel A. and Rudion, Krzysztof},
  year = {2019},
  pages = {1--5},
  publisher = {IEEE},
  address = {Atlanta, GA, USA},
  doi = {10.1109/PESGM40551.2019.8973421},
  isbn = {978-1-7281-1981-6}
}

@article{lee2019convex,
  title = {Convex {{Restriction}} of {{Power Flow Feasibility Sets}}},
  author = {Lee, Dongchan and Nguyen, Hung D. and Dvijotham, Krishnamurthy and Turitsyn, Konstantin},
  year = {2019},
  month = sep,
  journal = {IEEE Transactions on Control of Network Systems},
  volume = {6},
  number = {3},
  pages = {1235--1245},
  issn = {2325-5870, 2372-2533},
  doi = {10.1109/TCNS.2019.2930896},
  urldate = {2024-11-04},
  copyright = {https://ieeexplore.ieee.org/Xplorehelp/downloads/license-information/IEEE.html}
}

@article{chen2020aggregate,
  title = {Aggregate {{Power Flexibility}} in {{Unbalanced Distribution Systems}}},
  author = {Chen, Xin and Dall'Anese, Emiliano and Zhao, Changhong and Li, Na},
  year = {2020},
  month = jan,
  journal = {IEEE Transactions on Smart Grid},
  volume = {11},
  number = {1},
  pages = {258--269},
  issn = {1949-3061},
  doi = {10.1109/TSG.2019.2920991},
  urldate = {2024-07-12}
}

@article{chen2021leveraginga,
  title = {Leveraging {{Two-Stage Adaptive Robust Optimization}} for {{Power Flexibility Aggregation}}},
  author = {Chen, Xin and Li, Na},
  year = {2021},
  month = sep,
  journal = {IEEE Transactions on Smart Grid},
  volume = {12},
  number = {5},
  pages = {3954--3965},
  issn = {1949-3061},
  doi = {10.1109/TSG.2021.3068341},
  urldate = {2024-07-12}
}

@article{tan2019enforcinga,
  title = {Enforcing {{Intra-Regional Constraints}} in {{Tie-Line Scheduling}}: {{A Projection-Based Framework}}},
  shorttitle = {Enforcing {{Intra-Regional Constraints}} in {{Tie-Line Scheduling}}},
  author = {Tan, Zhenfei and Zhong, Haiwang and Wang, Jianxiao and Xia, Qing and Kang, Chongqing},
  year = {2019},
  month = nov,
  journal = {IEEE Transactions on Power Systems},
  volume = {34},
  number = {6},
  pages = {4751--4761},
  issn = {1558-0679},
  doi = {10.1109/TPWRS.2019.2913876},
  urldate = {2025-01-12}
}

@article{tan2020estimatinga,
  title = {Estimating the {{Robust P-Q Capability}} of a {{Technical Virtual Power Plant Under Uncertainties}}},
  author = {Tan, Zhenfei and Zhong, Haiwang and Xia, Qing and Kang, Chongqing and Wang, Xuanyuan Sharon and Tang, Honghai},
  year = {2020},
  month = nov,
  journal = {IEEE Transactions on Power Systems},
  volume = {35},
  number = {6},
  pages = {4285--4296},
  issn = {1558-0679},
  doi = {10.1109/TPWRS.2020.2988069},
  urldate = {2025-04-12}
}

@article{wang2021aggregate,
  title = {Aggregate {{Flexibility}} of {{Virtual Power Plants With Temporal Coupling Constraints}}},
  author = {Wang, Siyuan and Wu, Wenchuan},
  year = {2021},
  month = nov,
  journal = {IEEE Transactions on Smart Grid},
  volume = {12},
  number = {6},
  pages = {5043--5051},
  issn = {1949-3061},
  doi = {10.1109/TSG.2021.3106646},
  urldate = {2025-01-22}
}

@article{jiang2025enhanced,
  title={Enhanced flexibility aggregation using LinDistFlow model with loss compensation},
  author={Jiang, Yanlin and Dai, Xinliang and Zahn, Frederik and Hagenmeyer, Veit},
  booktitle={2025 IEEE Kiel PowerTech},
  pages={1--6},
  year={2025},
  organization={IEEE}
}

@article{jiang2025error,
  title={Error accumulation using linearized models for aggregating flexibility in distribution systems},
  author={Jiang, Yanlin and Dai, Xinliang and Zahn, Frederik and Guo, Yi and Hagenmeyer, Veit},
  booktitle={2026 IEEE PES International Meeting (PES IM)},
  pages={1--5},
  year={2026},
  organization={IEEE}
}

@article{wei2015realtime,
  title = {Real-{{Time Dispatchability}} of {{Bulk Power Systems With Volatile Renewable Generations}}},
  author = {Wei, Wei and Liu, Feng and Mei, Shengwei},
  year = {2015},
  month = jul,
  journal = {IEEE Transactions on Sustainable Energy},
  volume = {6},
  number = {3},
  pages = {738--747},
  issn = {1949-3029, 1949-3037},
  doi = {10.1109/TSTE.2015.2413903},
  urldate = {2023-12-14}
}

@article{baran1989lindisflow,
  title={Optimal sizing of capacitors placed on a radial distribution system},
  author={Baran, Mesut E and Wu, Felix F},
  journal={IEEE Transactions on Power Delivery},
  volume={4},
  number={1},
  pages={735--743},
  year={1989},
  publisher={IEEE}
}

@article{lopez2021quickflex,
  title = {{{QuickFlex}}: A {{Fast Algorithm}} for {{Flexible Region Construction}} for the {{TSO-DSO Coordination}}},
  shorttitle = {{{QuickFlex}}},
  booktitle = {2021 {{International Conference}} on {{Smart Energy Systems}} and {{Technologies}} ({{SEST}})},
  author = {Lopez, Luis and {Gonzalez-Castellanos}, Alvaro and Pozo, David and Roozbehani, Mardavij and Dahleh, Munther},
  year = {2021},
  month = sep,
  pages = {1--6},
  publisher = {IEEE},
  address = {Vaasa, Finland},
  doi = {10.1109/SEST50973.2021.9543349},
  urldate = {2023-12-02},
  isbn = {978-1-7281-7660-4}
}

@article{contreras2021congestion,
  title = {Congestion {{Management Using Aggregated Flexibility}} at the {{TSO-DSO Interface}}},
  booktitle = {2021 {{IEEE Madrid PowerTech}}},
  author = {Contreras, Daniel A. and M{\"u}ller, Sharon and Rudion, Krzysztof},
  year = {2021},
  month = jun,
  pages = {1--6},
  doi = {10.1109/PowerTech46648.2021.9494793},
  urldate = {2025-04-13}
}

@article{churkin2023tracing,
  title = {Tracing, {{Ranking}} and {{Valuation}} of {{Aggregated DER Flexibility}} in {{Active Distribution Networks}}},
  author = {Churkin, Andrey and Kong, Wangwei and Gutierrez, Jose N. Melchor and Cese{\~n}a, Eduardo A. Mart{\'i}nez and Mancarella, Pierluigi},
  year = {2023},
  journal = {IEEE Transactions on Smart Grid},
  pages = {1--1},
  issn = {1949-3053, 1949-3061},
  doi = {10.1109/TSG.2023.3296981},
  urldate = {2023-12-03}
}

@article{zavala2010real,
  title={Real-time nonlinear optimization as a generalized equation},
  author={Zavala, Victor M and Anitescu, Mihai},
  journal={SIAM Journal on Control and Optimization},
  volume={48},
  number={8},
  pages={5444--5467},
  year={2010},
  publisher={SIAM}
}

@article{tan2024noniterativea,
  title = {Non-{{Iterative Solution}} for {{Coordinated Optimal Dispatch}} via {{Equivalent Projection}}---{{Part II}}: {{Method}} and {{Applications}}},
  author = {Tan, Zhenfei and Yan, Zheng and Zhong, Haiwang and Xia, Qing},
  year = {2024},
  journal = {IEEE Transactions on Power Systems},
  volume = {39},
  number = {1},
  pages = {899--908},
  doi = {10.1109/TPWRS.2023.3257033}
}

@article{wang2016necessary,
  title={A necessary condition for power flow insolvability in power distribution systems with distributed generators},
  author={Wang, Zhaoyu and Cui, Bai and Wang, Jianhui},
  journal={IEEE Transactions on Power Systems},
  volume={32},
  number={2},
  pages={1440--1450},
  year={2016},
  publisher={IEEE}
}

@article{aolaritei2018hierarchical,
  title={Hierarchical and distributed monitoring of voltage stability in distribution networks},
  author={Aolaritei, Liviu and Bolognani, Saverio and D{\"o}rfler, Florian},
  journal={IEEE Transactions on Power Systems},
  volume={33},
  number={6},
  pages={6705--6714},
  year={2018},
  publisher={IEEE}
}

@article{rau2003issues,
  title={Issues in the path toward an RTO and standard markets},
  author={Rau, Narayan S},
  journal={IEEE Transactions on Power Systems},
  volume={18},
  number={2},
  pages={435--443},
  year={2003},
  publisher={IEEE}
}

@book{zhu2009optimization,
  title={Optimization of power system operation},
  author={Zhu, Jizhong},
  year={2009},
  publisher={Wiley Online Library}
}

@misc{brandle2026flexibility,
  title = {On the {{Flexibility Potential}} of a {{Swiss Distribution Grid}}: {{Opportunities}} and {{Limitations}}},
  author = {Br{\"a}ndle, Jan and Rousseau, Julie and Nahata, Pulkit and Hug, Gabriela},
  year = 2026,
  number = {arXiv:2510.13449},
  eprint = {2510.13449},
  primaryclass = {eess.SY},
  publisher = {arXiv},
  doi = {10.48550/arXiv.2510.13449},
  archiveprefix = {arXiv}
}

@article{gan2015exact,
  title = {Exact {{Convex Relaxation}} of {{Optimal Power Flow}} in {{Radial Networks}}},
  author = {Gan, Lingwen and Li, Na and Topcu, Ufuk and Low, Steven H.},
  year = 2015,
  journal = {IEEE Transactions on Automatic Control},
  volume = {60},
  number = {1},
  pages = {72--87},
  doi = {10.1109/TAC.2014.2332712}
}

@article{wachter2009shorttutorial,
  title={Short tutorial: Getting started with ipopt in 90 minutes},
  author={W{\"a}chter, Andreas},
  year={2009},
  organization={Schloss Dagstuhl--Leibniz-Zentrum f{\"u}r Informatik}
}
\end{document}